\numberwithin{equation}{section}
\theoremstyle{plain}
\newtheorem{theorem}{Theorem}[section]
\newaliascnt{lemma}{theorem}
\newtheorem{lemma}[lemma]{Lemma}
\newaliascnt{corollary}{theorem}
\newtheorem{corollary}[corollary]{Corollary}
\theoremstyle{definition}
\newaliascnt{definition}{theorem}
\newtheorem{definition}[definition]{Definition}
\newaliascnt{example}{theorem}
\newtheorem{example}[example]{Example}
\newaliascnt{remark}{theorem}
\newtheorem{remark}[remark]{Remark}
\newcommand{\R}{\mathbf{R}}
\renewcommand{\epsilon}{\varepsilon}
\newcommand{\abs}[1]{\left\lvert #1 \right\rvert}
\newcommand{\avg}[1]{\bigl\langle #1 \bigr\rangle}
\newcommand{\jump}[1]{\bigl[ #1 \bigr]}
\newcommand{\twin}{\widetilde}
\newcommand{\da}{d\alpha(x)}
\newcommand{\db}{d\beta(y)}
\newcommand{\dA}[1]{d\alpha^{#1}(x)}
\newcommand{\dB}[1]{d\beta^{#1}(y)}
\newcommand{\heineintegral}{\mathcal{J}}
\newcommand{\piecewiseclass}{PC^{\infty}}
\newcommand{\spaceDprime}{\mathcal{D}'(\R)}
\DeclareMathOperator{\sgn}{sgn}
\title{Dynamics of interlacing peakons (and shockpeakons) in the Geng--Xue equation}
\author{%
  Hans Lundmark\thanks{Department of Mathematics, Linköping University, SE-581\,83 Linköping, Sweden; hans.lundmark@liu.se}
  \and
  Jacek Szmigielski\thanks{Department of Mathematics and Statistics, University of Saskatchewan, 106 Wiggins Road, Saskatoon, Saskatchewan, S7N\,5E6, Canada; szmigiel@math.usask.ca}
}
\date{December 8, 2016}
\begin{document}

\maketitle

\begin{abstract}
  We consider multi\-peakon solutions,
  and to some extent also multi\-shock\-peakon solutions,
  of a coupled two-component
  integrable PDE found by Geng and Xue
  as a generalization of Novikov's cubically non\-linear Camassa--Holm type
  equation.
  In order to make sense of such solutions,
  we find it necessary to assume that there are no overlaps,
  meaning that a peakon or shockpeakon in one component is not allowed to occupy
  the same position as a peakon or shockpeakon in the other component.
  Therefore one can distinguish many inequivalent configurations,
  depending on the order in which the peakons or shockpeakons
  in the two components appear relative to each other.
  Here we are in particular interested in the case of
  \emph{interlacing} peakon solutions,
  where the peakons alternatingly occur in one component and in the other.
  Based on explicit expressions for these
  solutions in terms of elementary functions,
  we describe the general features of the dynamics,
  and in particular the asymptotic large-time behaviour
  (assuming that there are no anti\-peakons, so that the solutions
  are globally defined).
  As far as the positions are concerned,
  interlacing Geng--Xue peakons display the usual scattering phenomenon
  where the peakons asymptotically
  travel with constant velocities, which are all distinct, except that the
  two fastest peakons (the fastest one in each component)
  will have the same velocity.
  However, in contrast to many other peakon equations,
  the amplitudes of the peakons will not in general tend to
  constant values; instead they grow or decay exponentially.
  Thus the logarithms of the amplitudes (as functions of time)
  will asymptotically behave like straight lines,
  and comparing these lines for large positive and negative times,
  one observes phase shifts similar to those seen for the positions
  of the peakons (and also for the positions of solitons in many other contexts).
  In addition to these $K+K$ interlacing pure peakon solutions,
  we also investigate $1+1$ shockpeakon solutions, 
  and collisions leading to shock formation
  in a $2+2$ peakon--antipeakon solution.
\end{abstract}

\tableofcontents{}

\section{Introduction}
\label{sec:intro}

The \textbf{Geng--Xue equation} \cite{geng-xue:cubic-nonlinearity},
also known as the \textbf{two-component Novikov equation},
is the following integrable two-component PDE in $1+1$ dimensions:
\begin{equation}
  \label{eq:GX}
  \begin{gathered}
    m_t + (m_xu + 3mu_x)v = 0
    , \\
    n_t + (n_xv + 3nv_x)u = 0
    ,
  \end{gathered}
\end{equation}
where $m = u - u_{xx}$ and $n = v - v_{xx}$
are auxiliary quantitites associated to the two unknown functions $u(x,t)$ and $v(x,t)$,
and where subscripts denote partial derivatives, as usual.
It is a close mathematical relative of the
Degasperis--Procesi and Novikov equations,
which are in turn offspring of the Camassa--Holm shallow water wave
equation; see \autoref{sec:history}.

\subsection{Peakon solutions of the Geng--Xue equation}
\label{sec:intro-peakons}

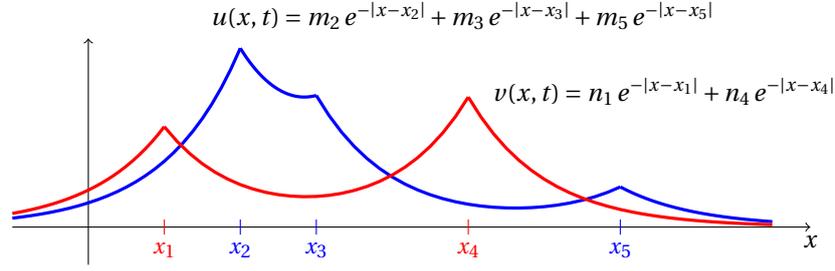
\begin{figure}
  \centering
  \begin{tikzpicture}
    \draw[->] (-1, 0) -- (9.5, 0) node[below] {$x$};
    \draw[->] (0, -0.5) -- (0, 2.5);

    \def\uformula{plot (\noexpand\x,{2*exp(-abs(\noexpand\x-2))+1*exp(-abs(\noexpand\x-3))+0.5*exp(-abs(\noexpand\x-7))})}
    \def\vformula{plot (\noexpand\x,{1.3*exp(-abs(\noexpand\x-1))+1.7*exp(-abs(\noexpand\x-5))})}

    \draw[red]  (1, 0.1) -- +(0, -0.2) node[below] {\small $x_1$};
    \draw[blue] (2, 0.1) -- +(0, -0.2) node[below] {\small $x_2$};
    \draw[blue] (3, 0.1) -- +(0, -0.2) node[below] {\small $x_3$};
    \draw[red]  (5, 0.1) -- +(0, -0.2) node[below] {\small $x_4$};
    \draw[blue] (7, 0.1) -- +(0, -0.2) node[below] {\small $x_5$};

    \begin{scope}[blue,very thick]
      \draw [domain = -1 : 2] \uformula;
      \draw [domain = 2 : 3] \uformula;
      \draw [domain = 3 : 7] \uformula;
      \draw [domain = 7 : 9] \uformula;
    \end{scope}
    \begin{scope}[red,very thick]
      \draw [domain = -1 : 1] \vformula;
      \draw [domain = 1 : 5] \vformula;
      \draw [domain = 5 : 9] \vformula;
    \end{scope}
    \draw (1.5,2.8) node[right] {$u(x,t) = m_2 \, e^{-\abs{x-x_2}} + m_3 \, e^{-\abs{x-x_3}} + m_5 \, e^{-\abs{x-x_5}}$};
    \draw (5.2,1.8) node[right] {$v(x,t) = n_1 \, e^{-\abs{x-x_1}} + n_4 \, e^{-\abs{x-x_4}}$};
  \end{tikzpicture}
  \caption{A \textbf{non-overlapping peakon}
    configuration~\eqref{eq:GXpeakons},
    meaning that the peakons in $u(x,t)$ and $v(x,t)$ occupy different sites:
    since $n_1$ is nonzero, $m_1$ must be zero,
    and since $m_2$ is nonzero, $n_2$ must be zero, and so on.}
  \label{fig:nonoverlapping-peakons}
\end{figure}

Our primary subject in this article is a particular class of solutions of
the Geng--Xue equation~\eqref{eq:GX} known as \emph{peakons} (peaked solitons) --
weak solutions formed by nonlinear superposition of $e^{-\abs{x}}$-shaped waves:
\begin{equation}
  \label{eq:GXpeakons}
  \begin{split}
    u(x,t) &= \sum_{k=1}^N m_k(t) \, e^{-\abs{x - x_k(t)}}
    , \\
    v(x,t) &= \sum_{k=1}^N n_k(t) \, e^{-\abs{x - x_k(t)}}
    .
  \end{split}
\end{equation}
We will always label the variables in increasing order,
\begin{equation*}
  x_1 < x_2 < \dots < x_N
  .
\end{equation*}
As will be explained in \autoref{sec:weak-solutions-GX} below,
the ansatz \eqref{eq:GXpeakons} satisfies the PDE \eqref{eq:GX} in
a certain distributional sense if and only if
the functions $x_k(t)$, $m_k(t)$ and $n_k(t)$ satisfy the ODEs
\begin{equation}
  \label{eq:GX-peakon-ode}
  \begin{aligned}
    \dot x_k &= u(x_k) \, v(x_k)
    ,\\
    \dot m_k &= m_k \bigl(u(x_k) \, v_x(x_k) - 2 u_x(x_k) v(x_k) \bigr)
    ,\\
    \dot n_k &= n_k \bigl(u_x(x_k) \, v(x_k) - 2 u(x_k) v_x(x_k) \bigr)
    ,
  \end{aligned}
\end{equation}
for $k = 1,2,\dots,N$,
with the additional constraint that the two types of peakons (those belonging to~$u$ and to~$v$)
occupy \emph{different} sites~$x_k$;
in other words, for each~$k$, either
peakon number~$k$ belongs to~$u$,
\begin{equation*}
  m_k \neq 0
  \quad\text{and}\quad
  n_k = 0
  ,
\end{equation*}
or else it belongs to~$v$,
\begin{equation*}
  m_k = 0
  \quad\text{and}\quad
  n_k \neq 0
  .
\end{equation*}
(Note that these conditions are preserved by the ODEs.
Note also that we may assume that $m_k=n_k=0$ doesn't happen,
since in that case we can just discard the $k$th terms in the ansatz.)
See \autoref{fig:nonoverlapping-peakons}.

The ODEs~\eqref{eq:GX-peakon-ode}
have been written using a convenient shorthand notation,
where
$u(x_k)$, $u_x(x_k)$, $v(x_k)$ and $v_x(x_k)$
are nothing but abbreviations defined as follows:
\begin{equation}
  \label{eq:uv-shorthand}
  \begin{aligned}
    u(x_k) &:= \sum_{i=1}^N m_i \, e^{-\abs{x_k-x_i}}
    ,\\
    u_x(x_k) &:= -\sum_{i=1}^N m_i \, \sgn(x_k-x_i) \, e^{-\abs{x_k-x_i}}
    ,\\
    v(x_k) &:= \sum_{i=1}^N n_i \, e^{-\abs{x_k-x_i}}
    ,\\
    v_x(x_k) &:= -\sum_{i=1}^N n_i \, \sgn(x_k-x_i) \, e^{-\abs{x_k-x_i}}
    ,
  \end{aligned}
\end{equation}
where the convention $\sgn 0 = 0$ is understood.
The formulas for $u(x_k)$ and~$u_x(x_k)$ in \eqref{eq:uv-shorthand}
result from formally substituting $x=x_k$
into the ansatz $u(x) = \sum_{i=1}^N m_i \, e^{-\abs{x-x_i}}$
and its derivative
$u_x(x) = -\sum_{i=1}^N m_i \, \sgn(x-x_i) \, e^{-\abs{x-x_i}}$,
respectively.
Note, however, that $u_x(x_k)$ does not exist in the ordinary sense
if $m_k \neq 0$,
so what we have denoted by $u_x(x_k)$ is actually the average of the
the left and right limits of~$u_x$ at $x=x_k$:
\begin{equation*}
  u_x(x_k)
  = \avg{u_x}(x_k)
  = \frac12 \Biggl( \lim_{x \to x_k^-} u_x(x) + \lim_{x \to x_k^+} u_x(x) \Biggr)
  .
\end{equation*}

The Geng--Xue peakon ODEs~\eqref{eq:GX-peakon-ode}
constitute a Lax integrable system, as we will explain in
detail later in this article,
and the general solution can be written down explicitly in terms
of elementary functions.

\begin{remark}
  Our detailed study
  of the Geng--Xue equation is partly motivated by our interest in
  understanding the structure of peakon equations from some unifying
  principle. One of the most delicate issues with Lax integrable
  peakon equations is that the Lax equations in the peakon sector are
  distributional equations and as such require special care when
  dealing with nonlinear operations involving distributions with
  singular support. This makes peakons an intriguing ``borderline''
  case of Lax integrability. In this article we take a conservative
  approach: the Lax pair with $m=u-u_{xx}$ and~$n=v-v_{xx}$ being
  distributions with non-overlapping singular supports is a
  well-defined distributional pair requiring only the standard
  operation of multiplying measures by continuous functions.
  This, along with the distributional compatibility condition, dictates a
  unique way of defining a distributional Geng--Xue equation.
\end{remark}

\subsection{Shockpeakon solutions of the Geng--Xue equation}
\label{sec:intro-shockpeakons}

\begin{figure}
  \centering
  \begin{tikzpicture}
    \draw[->] (-1, 0) -- (9.5, 0) node[below] {$x$};
    \draw[->] (0, -0.5) -- (0, 2.5);


    \tikzset{declare function={sgn(\x) = (and(\x<0, 1) * -1) + (and(\x>0, 1) * 1);}}
    \def\uformula{plot (\noexpand\x,{(2-0.2*sgn(\noexpand\x-2))*exp(-abs(\noexpand\x-2))+1*exp(-abs(\noexpand\x-3))+(0.5-1.2*sgn(\noexpand\x-7))*exp(-abs(\noexpand\x-7))})}
    \def\vformula{plot (\noexpand\x,{1.3*exp(-abs(\noexpand\x-1))+(1.7-0.5*sgn(\noexpand\x-5))*exp(-abs(\noexpand\x-5))})}

    \draw[red]  (1, 0.1) -- +(0, -0.2) node[below] {\small $x_1$};
    \draw[blue] (2, 0.1) -- +(0, -0.2) node[below] {\small $x_2$};
    \draw[blue] (3, 0.1) -- +(0, -0.2) node[below] {\small $x_3$};
    \draw[red]  (5, 0.1) -- +(0, -0.2) node[below] {\small $x_4$};
    \draw[blue] (7, 0.1) -- +(0, -0.2) node[below] {\small $x_5$};

    \begin{scope}[blue,very thick]
      \draw [domain = -1 : 1.999] \uformula;
      \draw [domain = 2.001 : 3] \uformula;
      \draw [domain = 3 : 6.999] \uformula;
      \draw [domain = 7.001 : 9] \uformula;
    \end{scope}
    \begin{scope}[red,very thick]
      \draw [domain = -1 : 1] \vformula;
      \draw [domain = 1 : 4.999] \vformula;
      \draw [domain = 5.001 : 9] \vformula;
    \end{scope}
    \draw (1.5,3.0) node[right] {$u(x,t)$};
    \draw (4.5,2.7) node[right] {$v(x,t)$};
  \end{tikzpicture}
  \caption{A \textbf{non-overlapping shockpeakon}
    configuration~\eqref{eq:GXshockpeakons},
    meaning that the shockpeakons (or ordinary peakons as a special case)
    in $u(x,t)$ and $v(x,t)$ occupy different sites.
    Compared to \autoref{fig:nonoverlapping-peakons},
    shocks $s_2>0$ and $s_5>0$ have been added to~$u(x,t)$
    (but $s_3=0$),
    and a shock $r_4>0$ has been added to~$v(x,t)$
    (but $r_1=0$).}
  \label{fig:nonoverlapping-shockpeakons}
\end{figure}
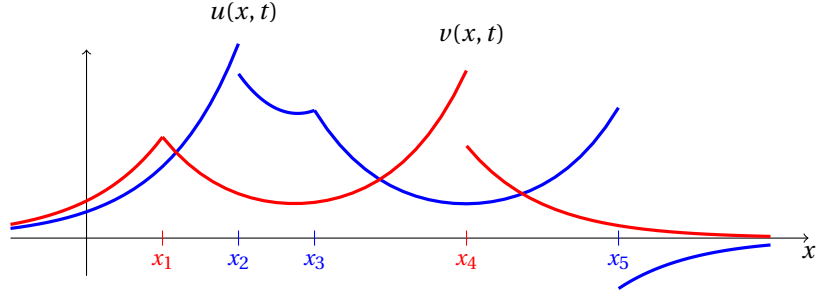

Remarkably, the Geng--Xue equation also admits a weaker type
of solution, \emph{shockpeakons},
given by the more general ansatz
\begin{equation}
  \label{eq:GXshockpeakons}
  \begin{split}
    u(x,t) &= \sum_{k=1}^N \Bigl( m_k(t) - s_k(t) \, \sgn\bigl(x-x_k(t)\bigr) \Bigr) \, e^{-\abs{x - x_k(t)}}
    , \\
    v(x,t) &= \sum_{k=1}^N \Bigl( n_k(t) - r_k(t) \, \sgn\bigl(x-x_k(t)\bigr) \Bigr) \, e^{-\abs{x - x_k(t)}}
    .
  \end{split}
\end{equation}
See \autoref{fig:nonoverlapping-shockpeakons}.
If $s_k(t) \neq 0$, then the function $x \mapsto u(x,t)$ has a
jump of size $-2s_k(t)$ the point $x=x_k(t)$,
and similarly for $x \mapsto v(x,t)$ if $r_k(t) \neq 0$,
so shockpeakon solutions are only piecewise continuous.
The ansatz‌~\eqref{eq:GXshockpeakons}
satisfies the Geng--Xue equation (again in a distributional
sense explained in \autoref{sec:weak-solutions-GX})
if and only if
the functions $x_k(t)$, $m_k(t)$, $n_k(t)$,
$s_k(t)$ and $r_k(t)$
satisfy the ODEs
\begin{equation}
  \label{eq:GX-shockpeakon-ode}
  \begin{aligned}
    \dot x_k &= u(x_k) \, v(x_k)
    ,\\
    \dot m_k &
    = m_k \bigl( u(x_k) \, v_x(x_k) - 2 u_x(x_k) v(x_k) \bigr)
    + s_k \bigl( u(x_k) \, v(x_k) + u_x(x_k) v_x(x_k) \bigr)
    ,\\
    \dot s_k &= s_k \bigl( 2 u(x_k) \, v_x(x_k) - u_x(x_k) v(x_k) \bigr)
    ,\\
    \dot n_k &
    = n_k \bigl( u_x(x_k) \, v(x_k) - 2 u(x_k) v_x(x_k) \bigr)
    + r_k \bigl( u(x_k) \, v(x_k) + u_x(x_k) v_x(x_k) \bigr)
    ,\\
    \dot r_k &= r_k \bigl( 2 u_x(x_k) \, v(x_k) - u(x_k) v_x(x_k) \bigr)
    ,
  \end{aligned}
\end{equation}
for $k = 1,2,\dots,N$,
with the non-overlapping constraint that, for each~$k$, either
\begin{equation*}
  (m_k \neq 0
  \quad\text{or}\quad
  s_k \neq 0)
  \quad\text{and}\quad
  n_k = r_k = 0
  ,
\end{equation*}
or
\begin{equation*}
  m_k = s_k = 0
  \quad\text{and}\quad
  (n_k \neq 0
  \quad\text{or}\quad
  r_k \neq 0)
  .
\end{equation*}
(See \autoref{thm:GX-shockpeakons}.)
In~\eqref{eq:GX-shockpeakon-ode}, the previous shorthand
notation~\eqref{eq:uv-shorthand}
has been extended as follows:
\begin{equation}
  \label{eq:uv-shorthand-shock}
  \begin{aligned}
    u(x_k) &:= \sum_{i=1}^N \bigl( m_i - s_i \, \sgn(x_k - x_i) \bigr) \, e^{-\abs{x_k-x_i}}
    ,\\
    u_x(x_k) &:= \sum_{i=1}^N \bigl( s_i - m_i \, \sgn(x_k - x_i) \bigr) \, e^{-\abs{x_k-x_i}}
    ,\\
    v(x_k) &:= \sum_{i=1}^N \bigl( n_i - r_i \, \sgn(x_k - x_i) \bigr) \, e^{-\abs{x_k-x_i}}
    ,\\
    v_x(x_k) &:= \sum_{i=1}^N \bigl( r_i - n_i \, \sgn(x_k - x_i) \bigr) \, e^{-\abs{x_k-x_i}}
  .
  \end{aligned}
\end{equation}
We emphasize that these formulas are nothing but abbreviations
which are convenient when writing down the shockpeakon ODEs~\eqref{eq:GX-shockpeakon-ode};
when interpreting the solutions distributionally,
the precise value assigned to $u$ at a jump discontinuity is irrelevant,
and the derivative $u_x$ doesn't exist at such a point.
However, we do have
\begin{equation}
  \label{eq:u-ux-are-averages}
  u(x_k)
  = \frac12 \left( \lim_{x \to x_k^-} u(x) + \lim_{x \to x_k^+} u(x) \right)
  ,\quad
  u_x(x_k)
  = \frac12 \left( \lim_{x \to x_k^-} u_x(x) + \lim_{x \to x_k^+} u_x(x) \right)
  ,
\end{equation}
and similarly for $v(x_k)$ and~$v_x(x_k)$.

\begin{remark}
  Of course,
  if $s_k = r_k = 0$ for all~$k$,
  then the shockpeakon ansatz~\eqref{eq:GXshockpeakons},
  the ODEs~\eqref{eq:GX-shockpeakon-ode}
  and the shorthand notation~\eqref{eq:uv-shorthand-shock}
  reduce to their ordinary peakon counterparts~\eqref{eq:GXpeakons},
  \eqref{eq:GX-peakon-ode}, \eqref{eq:uv-shorthand}, respectively.
\end{remark}

\begin{remark}
  \label{rem:entropy-condition}
  Shockpeakons were first introduced
  for the Degasperis--Procesi equation
  by Lundmark in~\cite{lundmark:shockpeakons},
  where it was found that it is necessary to have $s_i \ge 0$
  in order to obtain an \emph{entropy solution} as defined by
  Coclite and Karlsen
  \cite{coclite-karlsen:DPwellposedness,
    coclite-karlsen:DPuniqueness};
  i.e., the jump at each shock must go \emph{downwards},
  from high on the left to low on the right.
  This condition will automatically be satisfied whenever a shockpeakon
  forms at a peakon--antipeakon collision in the Degasperis--Procesi equation.

  Most likely, it will turn out natural to impose
  the corresponding restriction
  $s_i \ge 0$ and $r_i \ge 0$
  on Geng--Xue shockpeakons as well,
  although we will not attempt here to define entropy solutions.
\end{remark}

\begin{example}
  \label{ex:shockpeakons-1+1}
  In the $1+1$ case, with one shockpeakon in~$u(x,t)$ at $x=x_1(t)$,
  and another in~$v(x,t)$ at $x=x_2(t)$, where $x_1<x_2$,
  the governing ODEs~\eqref{eq:GX-shockpeakon-ode} take the
  following form:
  \begin{equation}
    \label{eq:GX-shockpeakon-ode-K1}
    \begin{aligned}
      \dot x_1 &= m_1 (n_2+r_2) E_{12}
      ,\\
      \dot x_2 &= (m_1-s_1) n_2 E_{12}
      ,\\
      \dot m_1 &= (m_1^2 - m_1 s_1 + s_1^2) (n_2+r_2) E_{12}
      ,\\
      \dot n_2 &= -(m_1-s_1) (n_2^2 + n_2 r_2 + r_2^2) E_{12}
      ,\\
      \dot s_1 &= s_1 (2m_1-s_1) (n_2+r_2) E_{12}
      ,\\
      \dot r_2 &= -r_2 (m_1-s_1) (2n_2+r_2) E_{12}
      ,
    \end{aligned}
  \end{equation}
  where $E_{12} = e^{-\abs{x_1-x_2}} = e^{x_1-x_2}$.
  We have managed to integrate these equations explicitly
  (see \autoref{sec:dynamics-shock-1}),
  but we don't know how to solve
  the shockpeakon ODEs~\eqref{eq:GX-shockpeakon-ode}
  for $K \ge 2$.
  In fact, even for $K=1$ it is not currently clear to us
  whether one can interpret the solution of~\eqref{eq:GX-shockpeakon-ode-K1}
  in terms of Lax integrability.
\end{example}

\subsection{Interlacing peakon configurations}
\label{sec:intro-interlacing}

\begin{figure}
  \centering
  \begin{tikzpicture}
    \draw[->] (-1, 0) -- (9.5, 0) node[below] {$x$};
    \draw[->] (0, -0.5) -- (0, 2.5);

    \def\uformula{plot (\noexpand\x,{2.5*exp(-abs(\noexpand\x-1))+1*exp(-abs(\noexpand\x-3))+0.5*exp(-abs(\noexpand\x-7))})}
    \def\vformula{plot (\noexpand\x,{1.7*exp(-abs(\noexpand\x-2))+1.3*exp(-abs(\noexpand\x-5))+1.2*exp(-abs(\noexpand\x-8))})}

    \draw[blue] (1, 0.1) -- +(0, -0.2) node[below] {\small $x_1$};
    \draw[red]  (2, 0.1) -- +(0, -0.2) node[below] {\small $x_2$};
    \draw[blue] (3, 0.1) -- +(0, -0.2) node[below] {\small $x_3$};
    \draw[red]  (5, 0.1) -- +(0, -0.2) node[below] {\small $x_4$};
    \draw[blue] (7, 0.1) -- +(0, -0.2) node[below] {\small $x_5$};
    \draw[red]  (8, 0.1) -- +(0, -0.2) node[below] {\small $x_6$};

    \begin{scope}[blue,very thick]
      \draw [domain = -1 : 1] \uformula;
      \draw [domain = 1 : 3] \uformula;
      \draw [domain = 3 : 7] \uformula;
      \draw [domain = 7 : 9] \uformula;
    \end{scope}
    \begin{scope}[red,very thick]
      \draw [domain = -1 : 2] \vformula;
      \draw [domain = 2 : 5] \vformula;
      \draw [domain = 5 : 8] \vformula;
      \draw [domain = 8 : 9] \vformula;
    \end{scope}
    \draw (0.3,3) node[right] {$u(x,t) = m_1 \, e^{-\abs{x-x_1}} + m_3 \, e^{-\abs{x-x_3}} + m_5 \, e^{-\abs{x-x_5}}$};
    \draw (3.8,1.8) node[right] {$v(x,t) = n_2 \, e^{-\abs{x-x_2}} + n_4 \, e^{-\abs{x-x_4}} + n_6 \, e^{-\abs{x-x_6}}$};
  \end{tikzpicture}
  \caption{An \textbf{interlacing peakon}
    configuration~\eqref{eq:GX-interlacing-peakons}
    with~$K=3$, meaning that there are three peakons in each component,
    with the first peakon belonging to~$u(x,t)$,
    the second to~$v(x,t)$,
    and so on, alternatingly.}
  \label{fig:interlacing-peakons}
\end{figure}
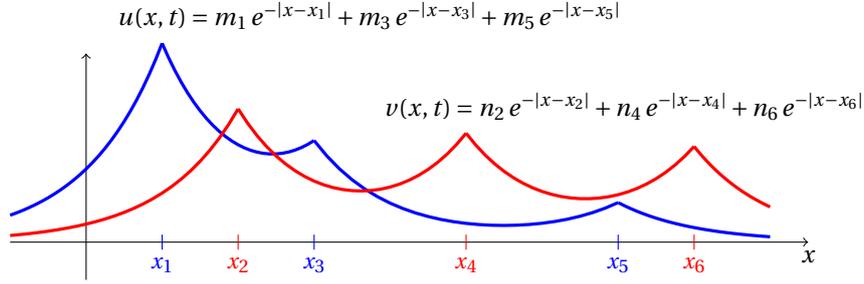

In this article our main focus will be peakons rather than shockpeakons,
and more specifically the particular case of peakon solutions
which are \emph{interlacing} in the sense that there are $N=2K$ sites
\begin{equation*}
  x_1 <  x_2 < \dots < x_{2K},
\end{equation*}
with $u$ and~$v$ containing $K$~peakons each,
located at the odd-numbered sites~$x_{2a-1}$ in the case of~$u$,
and at the even-numbered sites~$x_{2a}$ in the case of~$v$;
see \autoref{fig:interlacing-peakons}.
Moreover, unless stated otherwise,
we will assume that the nonzero amplitudes are
\emph{positive} (i.e., that there are no \emph{antipeakons} with negative
amplitude).
In other words, we assume
\begin{equation}
  \label{eq:interlacing-restriction}
  \begin{aligned}
    m_{2a-1} &> 0
    ,
    &
    m_{2a} & = 0
    ,
    \\
    n_{2a-1} &= 0
    ,
    &
    n_{2a} &> 0
    ,
  \end{aligned}
\end{equation}
for $1 \le a \le K$, so that
\begin{equation}
  \label{eq:GX-interlacing-peakons}
  \begin{split}
    u(x,t) &= \sum_{a=1}^K m_{2a-1}(t) \, e^{-\abs{x - x_{2a-1}(t)}}
    , \\
    v(x,t) &= \sum_{a=1}^K n_{2a}(t) \, e^{-\abs{x - x_{2a}(t)}}
    .
  \end{split}
\end{equation}
Drawing heavily on the groundwork from our previous article \cite{lundmark-szmigielski:GX-inverse-problem},
we will use inverse spectral methods to derive explicit formulas for these
interlacing $(K+K)$-peakon solutions, and then explore their dynamical properties.

\begin{remark}
  Explicit solution formulas for the general non-interlacing case,
  where the number of peakons in $u$ and~$v$ need not be equal,
  and where they may appear in arbitrary order,
  can be obtained by
  starting from a larger interlacing case and performing certain limiting
  procedures in order to drive selected amplitudes to zero.
  However, the details are rather technical,
  and will be saved for a separate article.
\end{remark}

\begin{example}
  The case $K=1$ is exceptional, and also very simple.
  There is one peakon in~$u$ and one in~$v$:
  \begin{equation*}
    u(x,t) = m_1(t) \, e^{-\abs{x-x_1(t)}}
    ,\qquad
    v(x,t) = n_2(t) \, e^{-\abs{x-x_2(t)}}
    .
  \end{equation*}
  Details will be given in \autoref{sec:dynamics-1}.
  With initial data $m_1(0) > 0$, $n_2(0) > 0$ and $x_1(0) < x_2(0)$,
  the solution is
  \begin{equation}
    \label{eq:solution-K1-preview}
    \begin{aligned}
      x_1(t) &= x_1(0) + ct
      ,\\
      x_2(t) &= x_2(0) + ct
      ,\\
      m_1(t) &= m_1(0) \, e^{ct}
      ,\\
      n_2(t) &= n_2(0) \, e^{-ct}
      ,
    \end{aligned}
  \end{equation}
  where $c = m_1(0) \, n_2(0) \, e^{x_1(0)-x_2(0)} > 0$.
\end{example}

\begin{figure}
  \centering
  \includegraphics{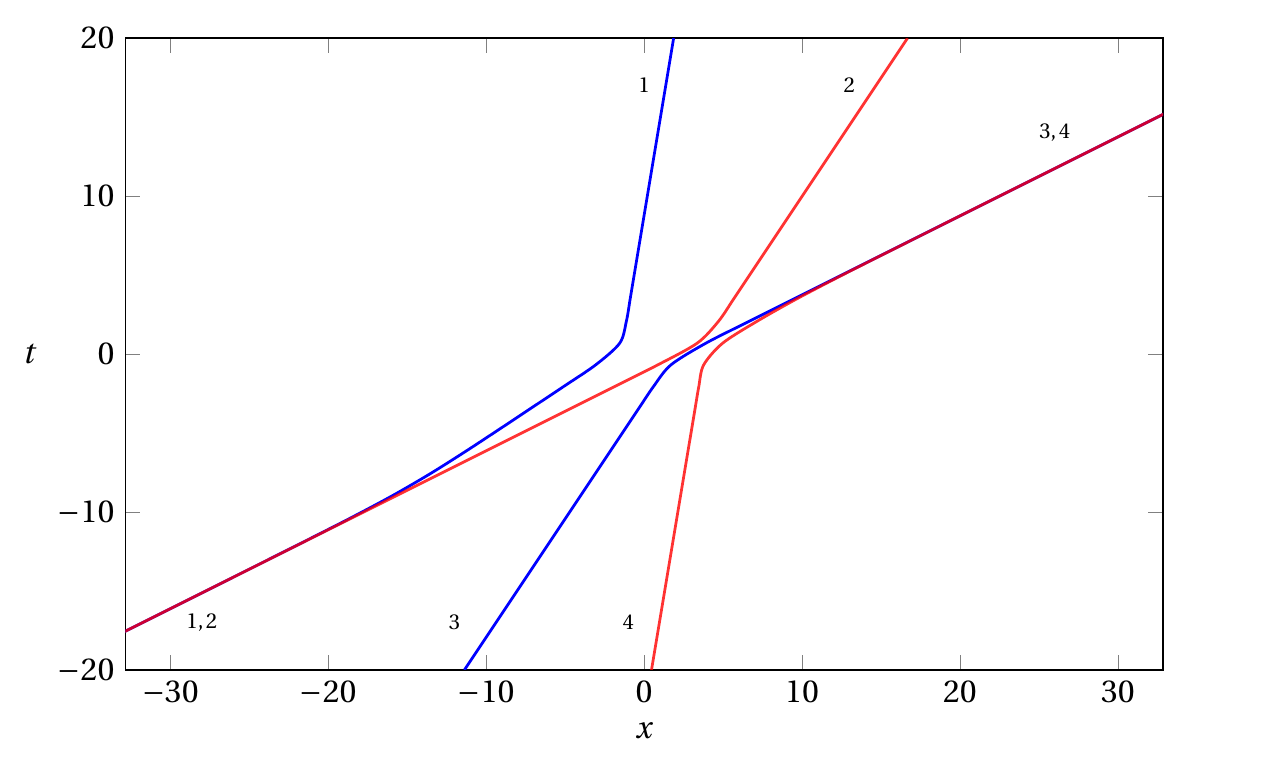}
  \caption{Spacetime plot of the peakon trajectories
    $x = x_1(t)$,
    $x = x_2(t)$,
    $x = x_3(t)$
    and
    $x = x_4(t)$
    for a $2+2$ interlacing pure peakon solution of the Geng--Xue equation:
    $u(x,t) = m_1 \, e^{-\abs{x-x_1}} + m_3 \, e^{-\abs{x-x_3}}$
    and
    $v(x,t) = n_2 \, e^{-\abs{x-x_2}} + n_4 \, e^{-\abs{x-x_4}}$,
    with $x_1 < x_2 < x_3 < x_4$
    and with
    $m_1$, $n_2$, $m_3$, $n_4$
    positive.
    The parameters used in the solution
    formulas~\eqref{eq:solution-K2}
    are given in \autoref{ex:GX-interlacing-2+2},
    together with a description of the noteworthy
    features in this picture.
    The blue curves $x = x_1(t)$ and $x = x_3(t)$ refer to the
    peakons in $u(x,t)$,
    while the red curves $x = x_2(t)$ and $x = x_4(t)$ refer to the
    peakons in $v(x,t)$.
  }
  \label{fig:GX-interlacing-2+2-X}
\end{figure}

\begin{figure}
  \centering
  \includegraphics{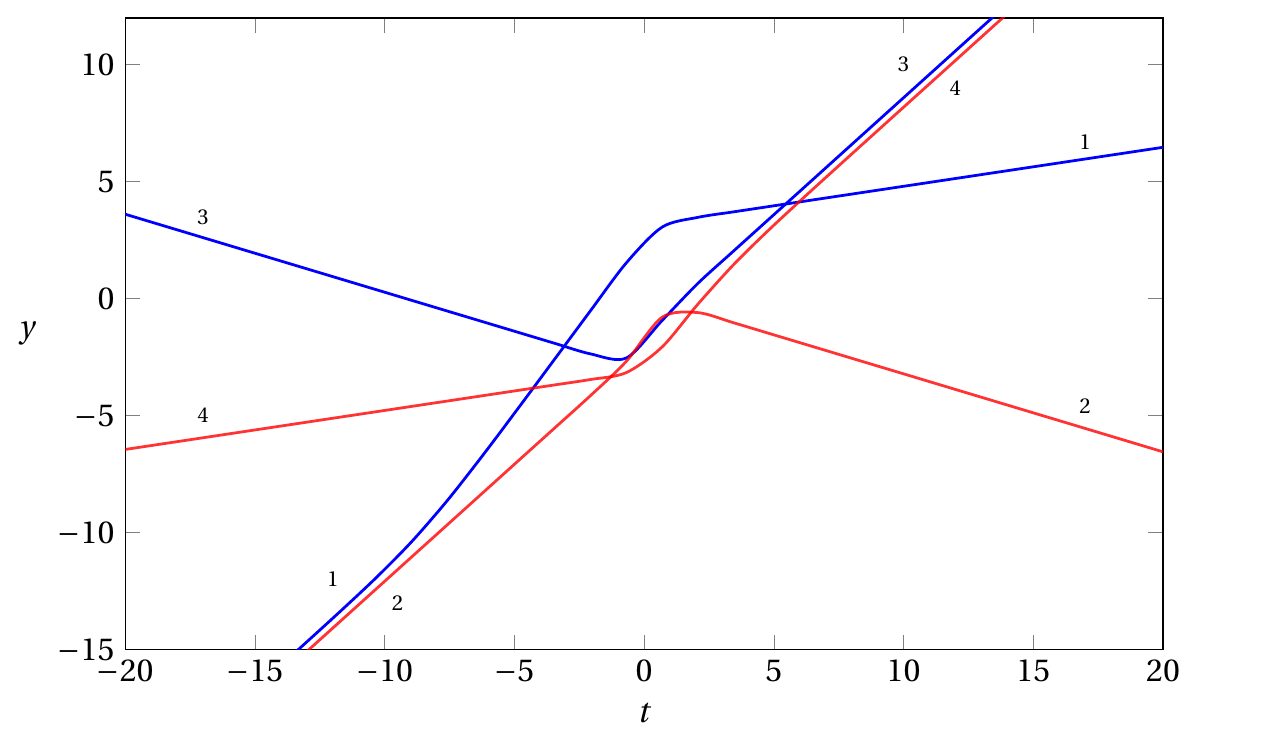}
  \caption{Plot of the curves
    $y = \ln m_1(t)$,
    $y = -\ln n_2(t)$,
    $y = \ln m_3(t)$
    and
    $y = -\ln n_4(t)$
    for the same solution as in \autoref{fig:GX-interlacing-2+2-X}.
    See \autoref{ex:GX-interlacing-2+2} for further explanation.
    The blue curves $x = \ln m_1(t)$ and $x = \ln m_3(t)$ refer to the
    peakons in $u(x,t)$,
    and the red curves $x = -\ln n_2(t)$ and $x = -\ln n_4(t)$ refer to the
    peakons in $v(x,t)$.
  }
  \label{fig:GX-interlacing-2+2-M}
\end{figure}

\begin{figure}
  \centering
  \includegraphics{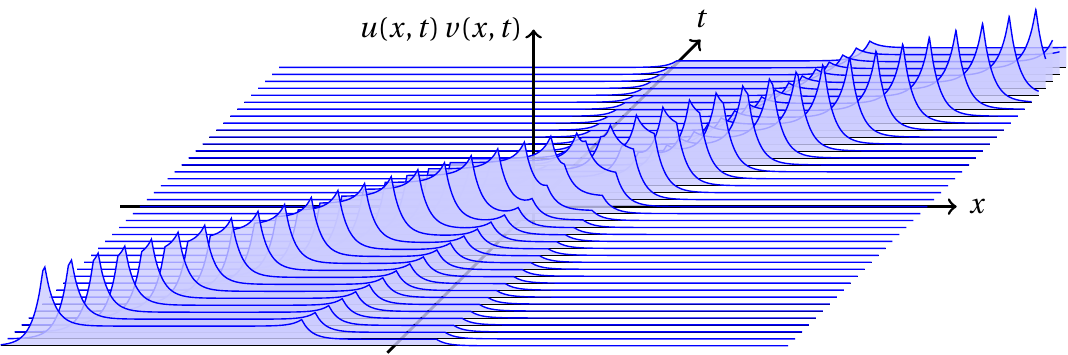}
  \caption{
    Because of the exponential growth and decay of the amplitudes
    $m_1$, $n_2$, $m_3$, $n_4$,
    it is difficult to make meaningful plots of the
    individual components $u(x,t)$ and $v(x,t)$,
    but their product $u(x,t) \, v(x,t)$ is well-behaved,
    and this product is graphed here for the same solution as in
    \autoref{fig:GX-interlacing-2+2-X}
    and
    \autoref{fig:GX-interlacing-2+2-M}.
    Note that it is the product $uv$
    which determines the velocity of the peakons,
    according to the governing ODE
    $\dot x_k = u(x_k) \, v(x_k)$.
    The domain shown is $-20 \le x \le 20$, $-10 \le t \le 10$,
    and the function is sampled at time values $1/4$ units apart.
    The projection is orthogonal, and
    the vertical scale is exaggerated by a factor of~$2$.
  }
  \label{fig:GX-interlacing-2+2-3Dplot-uv}
\end{figure}

\begin{example}
  \label{ex:GX-interlacing-2+2}
  When $K=2$, the interlacing peakon solutions take the form
  \begin{equation*}
    \begin{aligned}
      u(x,t) &
      = m_1(t) \, e^{-\abs{x-x_1(t)}} + m_3(t) \, e^{-\abs{x-x_3(t)}}
      ,\\
      v(x,t) &
      = n_2(t) \, e^{-\abs{x-x_2(t)}} + n_4(t) \, e^{-\abs{x-x_4(t)}}
      ,
    \end{aligned}
  \end{equation*}
  where it is understood that $x_1(t) < x_2(t) < x_3(t) < x_4(t)$
  so that the solution really is interlacing.
  Such solutions are studied in \autoref{sec:dynamics-2},
  mainly for pedagogical reasons.
  (All the results for $2+2$ interlacing peakon solutions
  in \autoref{sec:dynamics-2}
  are special cases of the statements for
  $K+K$ interlacing peakon solutions
  in \autoref{sec:dynamics-K},
  but the proofs for arbitrary~$K$ require
  a fair amount of additional notation.)

  The ODEs governing the dynamics of the eight variables
  \begin{equation*}
    x_1(t),\, x_2(t),\, x_3(t),\, x_4(t),\, m_1(t),\, n_2(t),\, m_3(t),\, n_4(t)
  \end{equation*}
  are given in equation~\eqref{eq:GX-peakon-ode-K2},
  and the general solution of these ODEs
  is written out in complete detail in equations
  \eqref{eq:solution-K2-position}
  and~\eqref{eq:solution-K2-amplitude}.
  These solution formulas contain five constant parameters
  \begin{equation*}
    \lambda_2 > \lambda_1 > 0
    ,\qquad
    \mu_1 > 0
    ,\qquad
    b_{\infty} > 0
    ,\qquad
    b_{\infty}^* > 0
    ,
  \end{equation*}
  and three time-dependent quantities
  \begin{equation*}
    a_1(t) = a_1(0) \, e^{t/\lambda_1}
    ,\qquad
    a_2(t) = a_2(0) \, e^{t/\lambda_2}
    ,\qquad
    b_1(t) = b_1(0) \, e^{t/\mu_1}
  \end{equation*}
  determined by their initial values
  \begin{equation*}
    a_1(0) > 0
    ,\qquad
    a_2(0) > 0
    ,\qquad
    b_1(0) > 0
    .
  \end{equation*}

  \autoref{fig:GX-interlacing-2+2-X}
  shows a plot of the peakon trajectories
  \begin{equation*}
    x = x_1(t)
    ,\quad
    x = x_2(t)
    ,\quad
    x = x_3(t)
    ,\quad
    x = x_4(t)
  \end{equation*}
  given by the formulas \eqref{eq:solution-K2-position},
  for the parameter values
  \begin{equation}
    \label{eq:interlacing-2+2-spectraldata}
    \begin{gathered}
      \lambda_1 = \frac13
      ,\quad
      \lambda_2 = 3
      ,\quad
      \mu_1 = 1
      ,\\
      a_1(0) = a_2(0) = 1
      ,\quad
      b_1(0) = 100
      ,\quad
      b_{\infty} = 1000
      ,\quad
      b_{\infty}^* = 100
      .
    \end{gathered}
  \end{equation}
  It is apparent in the picture, and will be proved in
  \autoref{sec:dynamics-2},
  that the trajectories approach certain straight lines
  asymptotically, as $t \to \pm\infty$.
  More precisely, there are three distinct asymptotic velocities
  \begin{equation*}
    \begin{aligned}
    c_1 &
    = \frac{1}{2} \left( \frac{1}{\lambda_1} + \frac{1}{\mu_1} \right)
    = \frac{1}{2} \left( \frac{1}{1/3} + \frac{1}{1} \right)
    = 2
    ,\\
    c_2 &
    = \frac{1}{2} \left( \frac{1}{\lambda_2} + \frac{1}{\mu_1} \right)
    = \frac{1}{2} \left( \frac{1}{3} + \frac{1}{1} \right)
    = \frac{2}{3}
    ,\\
    c_3 &
    = \frac{1}{2} \left( \frac{1}{\lambda_2} \right)
    = \frac{1}{2} \left( \frac{1}{3} \right)
    = \frac{1}{6}
    .
    \end{aligned}
  \end{equation*}
  As $t \to -\infty$,
  the two leftmost curves $x=x_1(t)$ and $x_2(t)$ both approach the line
  \begin{equation*}
    x = c_1 t
    + \frac12 \ln \frac{2 \, a_1(0) \, b_1(0)}{\lambda_1+\mu_1}
    + \frac12 \ln \frac{\bigl( \lambda_1-\lambda_2 \bigr)^2}{\lambda_2 (\lambda_2+\mu_1)}
    = 2t + \frac12 \ln 150 + \frac12 \ln\frac{16}{27}
    ,
  \end{equation*}
  the curve $x=x_3(t)$ approaches the line
  \begin{equation*}
    x = c_2 t
    + \frac12 \ln \frac{2 \, a_2(0) \, b_1(0)}{\lambda_2 + \mu_1}
    = \frac{2t}{3} + \frac12 \ln 50
    ,
  \end{equation*}
  and the curve $x=x_4(t)$ approaches the line
  \begin{equation*}
    x = c_3 t
    + \frac12 \ln \bigl( 2 \, a_2(0) \, b_{\infty} \bigr)
    = \frac{t}{6} + \frac12 \ln 2000
    .
  \end{equation*}
  (These formulas are taken
  from~\eqref{eq:asymptotics-K2-position-minus-infty},
  which is the special case $K=2$
  of the general formulas for the $K+K$ case given in
  \autoref{thm:asymptotics-generalK-positions}.)
  As $t \to +\infty$,
  it is instead the two rightmost curves $x=x_3(t)$ and $x_4(t)$
  that pair up; they both approach the line
  \begin{equation*}
    x = c_1 t
    + \frac12 \ln \frac{2 \, a_1(0) \, b_1(0)}{\lambda_1 + \mu_1}
    = 2t + \frac12 \ln 150
    ,
  \end{equation*}
  while the curve $x=x_2(t)$ approaches
  \begin{equation*}
    x = c_2 t
    + \frac12 \ln \frac{2 \, a_2(0) \, b_1(0)}{\lambda_2+\mu_1}
    + \frac12 \ln \frac{\bigl( \lambda_1-\lambda_2 \bigr)^2}{\lambda_1 (\lambda_1+\mu_1)}
    = \frac{2t}{3} + \frac12 \ln 50 + \frac12 \ln 16
    ,
  \end{equation*}
  and the curve $x=x_1(t)$ approaches
  \begin{equation*}
    x = c_3 t
    + \frac12 \ln \frac{\mu_1 \, a_2(0)}{\lambda_1 \lambda_2 \, b^*_{\infty}}
    + \frac12 \ln \frac{\bigl( \lambda_1-\lambda_2 \bigr)^2}{\lambda_1 (\lambda_2 + \mu_1)}
    = \frac{t}{6} + \frac12 \ln \frac{1}{100} + \frac12 \ln\frac{16}{3}
    .
  \end{equation*}
  (This is proved in~\eqref{eq:asymptotics-K2-position-plus-infty}
  and more generally in \autoref{thm:asymptotics-generalK-positions}.)
  A comparison of the two lines of the form $x = c_1 t + \text{const.}$
  shows that the second (outgoing) line
  is shifted relative to the first (incoming) one
  by the amount
  \begin{equation*}
    -\frac12 \ln \frac{\bigl( \lambda_1-\lambda_2 \bigr)^2}{\lambda_2 (\lambda_2+\mu_1)}
    = -\frac12 \ln\frac{16}{27}
  \end{equation*}
  in the $x$ direction,
  and the corresponding shifts for
  the other pairs of incoming and outgoing asymptotic lines
  are also easily computed (see \autoref{sec:phaseshift-K2-positions}
  and \autoref{cor:phaseshift-positions}).

  As for the amplitudes 
  given by the formulas~\eqref{eq:solution-K2-amplitude},
  \autoref{fig:GX-interlacing-2+2-M} shows logaritmic plots
  \begin{equation*}
    y = \ln m_1(t)
    ,\quad
    y = -\ln n_2(t)
    ,\quad
    y = \ln m_3(t)
    ,\quad
    y = -\ln n_4(t)
    ,
  \end{equation*}
  again with the same parameters~\eqref{eq:interlacing-2+2-spectraldata}.
  The reason for plotting the logarithms is that the amplitudes themselves
  grow or decay exponentially as $t \to \pm\infty$,
  so that the logarithmic plots will asymptotically approach straight lines,
  and the purpose of the extra minus signs on the even-numbered curves
  is to highlight certain relations between the slopes of these lines.
  More precisely, there are three distinct asymptotic slopes
  \begin{equation*}
    \begin{aligned}
    d_1 &
    = \frac{1}{2} \left( \frac{1}{\lambda_1} - \frac{1}{\mu_1} \right)
    = \frac{1}{2} \left( \frac{1}{1/3} - \frac{1}{1} \right)
    = 1
    ,\\
    d_2 &
    = \frac{1}{2} \left( \frac{1}{\lambda_2} - \frac{1}{\mu_1} \right)
    = \frac{1}{2} \left( \frac{1}{3} - \frac{1}{1} \right)
    = -\frac{1}{3}
    ,\\
    d_3 &
    = \frac{1}{2} \left( \frac{1}{\lambda_2} \right)
    = \frac{1}{2} \left( \frac{1}{3} \right)
    = \frac{1}{6}
    .
    \end{aligned}
  \end{equation*}
  As $t \to -\infty$,
  the four curves approach, respectively, the four lines
  \begin{equation*}
    \begin{aligned}
      y &= d_1 t
      + \ln \frac{\mu_1}{\lambda_1}
      + \frac12 \ln \frac{a_1(0)}{2 b_1(0) \, (\lambda_1 + \mu_1)}
      + \frac12 \ln \frac{\bigl( \lambda_1-\lambda_2 \bigr)^2 (\lambda_2 + \mu_1)}{\lambda_2^3}
      \\ &
      = t + \ln 3 + \frac12 \ln \frac{3}{800} + \frac12 \ln \frac{256}{243}
      ,\\
      y &= d_1 t
      + \ln \mu_1
      - \frac12 \ln \frac{b_1(0) \, (\lambda_1 + \mu_1)}{2 a_2(0)}
      + \frac12 \ln \frac{\bigl( \lambda_1-\lambda_2 \bigr)^2 (\lambda_2 + \mu_1)}{\lambda_2^3}
      \\ &
      = t - \frac12 \ln \frac{200}{3} + \frac12 \ln \frac{256}{243}
      ,\\
      y &= d_2 t
      - \ln \lambda_2
      + \frac12 \ln \frac{a_2(0) \, (\lambda_2 + \mu_1)}{2 b_1(0)}
      = - \frac{t}{3} - \ln 3 + \frac12 \ln \frac{1}{50}
      ,\\
      y &= d_3 t
      - \frac12 \ln \frac{b_{\infty}}{2 a_2(0)}
      = \frac{t}{6} - \frac12 \ln 500
      ,
    \end{aligned}
  \end{equation*}
  according to \eqref{eq:asymptotics-K2-log-amplitude-minus-infty},
  or more generally \autoref{thm:asymptotics-generalK-amplitudes}
  in the $K+K$ case.
  As $t \to +\infty$, they approach instead the lines
  \begin{equation*}
    \begin{aligned}
      y &= d_3 t
      + \frac12 \ln \frac{b^*_{\infty} a_2(0) \, \mu_1}{\lambda_1 \lambda_2}
      + \frac12 \ln \frac{\bigl( \lambda_1-\lambda_2 \bigr)^2}{\lambda_1 (\lambda_2 + \mu_1)}
      = \frac{t}{6} + \frac12 \ln 100 + \frac12 \ln \frac{16}{3}
      ,\\
      y &= d_2 t
      + \ln \mu_1
      - \frac12 \ln \frac{b_1(0) \, (\lambda_2 + \mu_1)}{2 a_2(0)}
      + \frac12 \ln \frac{\bigl( \lambda_1-\lambda_2 \bigr)^2 (\lambda_1 + \mu_1)}{\lambda_1^3}
      \\ &
      = - \frac{t}{3} - \frac12 \ln 200 + \frac12 \ln 256
      ,\\
      y &= d_1 t
      - \ln \lambda_1
      + \frac12 \ln \frac{a_1(0) \, (\lambda_1 + \mu_1)}{2 b_1(0)}
      = t - \ln \frac{1}{3} + \frac12 \ln \frac{1}{150}
      ,\\
      y &= d_1 t
      - \frac12 \ln \frac{b_1(0)}{2 a_1(0) (\lambda_1 + \mu_1)}
      = t - \frac12 \ln \frac{75}{2}
      ,
    \end{aligned}
  \end{equation*}
  according to \eqref{eq:asymptotics-K2-log-amplitude-plus-infty}
  or \autoref{thm:asymptotics-generalK-amplitudes}.
  Here, too, phase shifts between incoming and outgoing asymptotic lines
  with the same slope are easily computed;
  see \autoref{sec:phaseshift-K2-amplitudes}
  and \autoref{cor:phaseshift-amplitudes}.

  Note that even though $u(x,t)$ and~$v(x,t)$
  exhibit exponential growth,
  their product $u(x,t) \, v(x,t)$ stays bounded as $t \to \pm\infty$;
  it is this quantity which determines the velocity of the peakons,
  according to the ODEs~\eqref{eq:GX-peakon-ode}:
  \begin{equation*}
    \dot x_k = u(x_k) \, v(x_k)
    .
  \end{equation*}
  Consequently,
  \begin{equation*}
    uv\big|_{x=x_1(t)} \sim c_1
    ,\quad
    uv\big|_{x=x_2(t)} \sim c_1
    ,\quad
    uv\big|_{x=x_3(t)} \sim c_2
    ,\quad
    uv\big|_{x=x_4(t)} \sim c_3
    ,
  \end{equation*}
  as $t \to -\infty$,
  and
  \begin{equation*}
    uv\big|_{x=x_1(t)} \sim c_3
    ,\quad
    uv\big|_{x=x_2(t)} \sim c_2
    ,\quad
    uv\big|_{x=x_3(t)} \sim c_1
    ,\quad
    uv\big|_{x=x_4(t)} \sim c_1
    ,
  \end{equation*}
  as $t \to +\infty$.
  This is clearly seen in \autoref{fig:GX-interlacing-2+2-3Dplot-uv},
  which shows the graph of the function $u(x,t) \, v(x,t)$.
\end{example}

For the general $K+K$ interlacing pure peakon solution,
the solution formulas are given
in terms of abbreviated notation defined in \autoref{sec:preliminaries};
the statement is given in
\autoref{thm:inverse-spectral-map-formulas}
and \autoref{cor:peakon-solution-formulas}.
Already in the $3+3$ case,
\begin{equation*}
  \begin{aligned}
    u(x,t) &
    = m_1(t) \, e^{-\abs{x-x_1(t)}} + m_3(t) \, e^{-\abs{x-x_3(t)}} + m_5(t) \, e^{-\abs{x-x_5(t)}}
    ,\\
    v(x,t) &
    = n_2(t) \, e^{-\abs{x-x_2(t)}} + n_4(t) \, e^{-\abs{x-x_4(t)}} + n_6(t) \, e^{-\abs{x-x_6(t)}}
    ,
  \end{aligned}
\end{equation*}
the solution formulas contain so many terms that we
have chosen not write them out here in the expanded form that we
give for $K=2$
in~\eqref{eq:solution-K2}
;
however, this is partly done in~\cite[Ex.~4.11]{lundmark-szmigielski:GX-inverse-problem}.
The general solution for the $2K$ positions and the $2K$ amplitudes
depends on $4K$~parameters
whose values determine the behaviour of the solution;
$2K-1$ of them are eigenvalues of certain boundary value problems
coming from the two Lax pairs of the Geng--Xue equation,
\begin{equation*}
  0 < \lambda_1 < \lambda_2 < \dots < \lambda_K
  , \qquad
  0 < \mu_1 < \mu_2 < \dots < \mu_{K-1}
  ,
\end{equation*}
another $2K-1$ of them are residues of the associated Weyl functions,
\begin{equation*}
  a_1(0), \, a_2(0), \, \dots, a_K(0) \in \R_+
  ,\qquad
  b_1(0), \, b_2(0), \, \dots, b_{K-1}(0) \in \R_+
  ,
\end{equation*}
and there are also two additional parameters,
\begin{equation*}
  b_{\infty}, \, b^*_{\infty}  \in \R_+
  ,
\end{equation*}
where $-b_{\infty}$ is the limit at infinity of the second Weyl function,
and $-b_{\infty}^*$ is the corresponding quantity for a Weyl function
associated to an adjoint spectral problem.
Just like in \autoref{ex:GX-interlacing-2+2} above,
the solutions exhibit scattering as $t \to \pm\infty$:
the peakon trajectories $x=x_k(t)$ asymptotically approach
certain straight lines, whose slopes turn out to be determined by the
eigenvalues $\{ \lambda_i, \mu_j \}$ only.
Each amplitude grows or decays exponentially as $t \to \pm\infty$
(or tends to a constant, in borderline cases),
so the curves $y = \ln m_{2a-1}(t)$ and $y = -\ln n_{2a}(t)$ will
approach straight lines, whose slopes also are
determined by the eigenvalues only.
Proof of this asymptotic behaviour for general~$K$
is given in \autoref{sec:dynamics-K}.

\subsection{A brief history of peakons}
\label{sec:history}

A few words about the history of this problem are perhaps in order,
but since the story has been told many times, we will keep it short and
refer to our earlier articles for more details.
Peaked solitons of the form
\begin{equation}
  \label{eq:CHpeakons}
  u(x,t) = \sum_{k=1}^N m_k(t) \, e^{-\abs{x - x_k(t)}}
\end{equation}
were introduced by Camassa and Holm in 1993
\cite{camassa-holm} as solutions to their shallow water equation
\begin{equation}
  \label{eq:CH}
  m_t + m_x u + 2m u_x = 0
  ,\qquad
  m = u - u_{xx}
  .
\end{equation}
The ansatz \eqref{eq:CHpeakons} is a weak solution of the Camassa--Holm equation
\eqref{eq:CH} if and only if the positions~$x_k(t)$ and
amplitudes~$m_k(t)$ satisfy the canonical Hamiltonian system
generated by
\begin{equation*}
  H(x_1,\dots,x_n,m_1,\dots,m_n)
  = \frac12 \sum_{i,j=1}^N m_i m_j e^{-\abs{x_i-x_j}}
  ,
\end{equation*}
namely, using the shorthand notation~\eqref{eq:uv-shorthand},
\begin{equation}
  \label{eq:CH-peakon-ode}
  \dot x_k = \frac{\partial H}{\partial m_k} = u(x_k)
  ,\qquad
  \dot m_k = -\frac{\partial H}{\partial x_k} = -m_k u_x(x_k)
  .
\end{equation}
The general solution of \eqref{eq:CH-peakon-ode} (for arbitrary~$N$)
was computed by Beals, Sattinger and Szmigielski using inverse spectral methods;
it is given completely explicitly in terms of elementary functions
\cite{beals-sattinger-szmigielski:stieltjes,beals-sattinger-szmigielski:moment}.
Later, other similar integrable PDEs with explicitly
computable multipeakon solutions were discovered,
in particular the Degasperis--Procesi equation from 1998
\cite{degasperis-procesi,degasperis-holm-hone,lundmark-szmigielski:DPshort,lundmark-szmigielski:DPlong},
\begin{equation}
  \label{eq:DP}
  m_t + m_x u + 3m u_x = 0
  ,\qquad
  m = u - u_{xx}
  ,
\end{equation}
and V.~Novikov's cubically nonlinear equation from 2008
\cite{novikov:generalizations-of-CH,hone-wang:cubic-nonlinearity,hone-lundmark-szmigielski:novikov},
\begin{equation}
  \label{eq:Novikov}
  m_t + u (m_x u + 3m u_x) = 0
  ,\qquad
  m = u - u_{xx}
  ,
\end{equation}
both of which were found through mathematical (rather than physical)
considerations, namely the use of integrability tests to isolate
interesting equations similar in form to the Camassa--Holm equation.
The Degasperis--Procesi equation has later appeared in the context
of hydrodynamics
\cite{constantin-lannes:hydrodynamical-CH-DP,johnson:classical-water-waves},
but we are not aware of any
physical applications for the Novikov equation so far.
Geng and Xue~\cite{geng-xue:cubic-nonlinearity}
found their integrable two-component peakon equation~\eqref{eq:GX}
in 2009 by modifying the Lax pair for Novikov's equation
that was found by Hone and Wang~\cite{hone-wang:cubic-nonlinearity}.

The Degasperis--Procesi equation is special in that
it admits solutions where $u$ need not be continuous
\cite{coclite-karlsen:DPwellposedness,
coclite-karlsen:DPuniqueness},
in particular \emph{shockpeakons}~\cite{lundmark:shockpeakons}.
For the Camassa--Holm and Novikov equations
(as well as for other integrable peakon PDEs known so far),
the derivative $u_x$ may behave badly,
but $u$ itself must be continuous in order to make sense as a solution.
It is therefore a particularly interesting feature of the Geng--Xue
equation~\eqref{eq:GX}
that it also admits discontinuous solutions.

The literature on the Camassa--Holm equation is enormous,
and we will not attempt to survey it here.
There are also plenty of articles devoted
to the Degasperis--Procesi equation,
so we only mention a few additional references
particularly close to the topic of this article:
\cite{bertola-gekhtman-szmigielski:cubicstring,
bertola-gekhtman-szmigielski:cauchy,
kohlenberg-lundmark-szmigielski,
szmigielski-zhou:shocks-DP,
szmigielski-zhou:DP-peakon-antipeakon}.
Novikov's equation is more recent, but it is beginning to
attract attention; see for example
\cite{chen-chen-liu:novikov-global-conservative-weak,
geng-xue:cubic-nonlinearity,
grayshan:novikov-data-to-solution-map,
himonas-holliman:novikov-cauchy-problem,
hone-wang:cubic-nonlinearity,
jiang-ni:novikov-blowup,
lai-li-wu:novikov-global-solutions,
mi-mu:modified-novikov-cauchy-problem,
mohajer-szmigielski:novikov-on-real-axis,
ni-zhou:novikov,
tiglay:novikov-periodic-cauchy-problem,
wu-guo:periodic-novikov-global-wellposedness,
wu-yin:novikov-wellposedness-global-existence,
yan-li-zhang:novikov-cauchy-problem}.
Concerning the Geng--Xue equation,
we are only aware of a few studies:
\cite{%
barostichi-himonas-petronilho:ovsyannikov-theorem,
geng-xue:cubic-nonlinearity,
li-liu:GX-bihamiltonian,
li-niu:GX-reciprocal,
lundmark-szmigielski:GX-inverse-problem,
mi-mu-tao:GX-cauchy-problem,
tang-liu:GX-cauchy-problem}.
A bihamiltonian $2n$-component system which reduces to the Geng--Xue equation
when $n=1$ is constructed in~\cite{li-li-chen:multi-component-novikov}.

\subsection{Outline of the article}
\label{sec:intro-outline}

In \autoref{sec:weak-solutions-GX}, we begin our study by deriving the ODEs
for peakons and shockpeakons, and explaining the distributional sense
in which we consider them to be solution of the Geng--Xue equation.
Most of the computations are postponed to \autoref{sec:lax-pair},
where it is also verified that the Lax formulation
of the Geng--Xue equation is compatible with the peakon ODEs.
(We do not know at present whether this can be extended to cover
the shockpeakon case as well.)

\autoref{sec:preliminaries} is a review of notation and results
from our previous article~\cite{lundmark-szmigielski:GX-inverse-problem}
about an inverse spectral problem associated with the Geng--Xue
Lax pairs.
This technical foundation allows us to fairly easily derive the explicit
solution formulas for the interlacing peakon solutions
in \autoref{sec:time-spectral}.

Some readers may want to skip most of \autoref{sec:preliminaries},
since the abbreviated notation defined there
will not really be needed until we come to $K+K$ interlacing
peakon solutions for arbitrary~$K$
in \autoref{sec:dynamics-K} at the end of the article;
before that, we only deal with smaller cases where all formulas
can be written out in full detail.
Specifically,
\autoref{sec:dynamics-1} deals with the simple but somewhat
exceptional case of $1+1$ peakon solutions,
in \autoref{sec:dynamics-shock-1} we show how to integrate
the $1+1$ shockpeakon ODEs and
take a brief look at some properties of the solution,
and \autoref{sec:dynamics-2} studies the dynamics of
$2+2$ interlacing pure peakon solutions
(as already illustrated in \autoref{ex:GX-interlacing-2+2} above).

Mixed peakon--antipeakon solutions are only considered in 
\autoref{rem:1+1-peakon-antipeakon} for the case $K=1$,
where they cause no problems,
and in
\autoref{sec:2+2-peakon-antipeakon} for the case $K=2$,
where it is found in one particular example
that there is a collision after finite time where
one of the components of the solution forms a jump discontinuity,
while the other component loses a peakon at the corresponding location
(the amplitude of the peakon tends to zero).
A natural continuation past this singularity
is given by a solution with one peakon and one shockpeakon;
such a solution is a special case of the $1+1$ shockpeakon solutions
studied in~\autoref{sec:dynamics-shock-1}.

Finally, in \autoref{sec:dynamics-K} we derive the large time asymptotics
for $K+K$ interlacing pure peakon solutions.
This is somewhat more technical notation-wise, but the outcome is that
the features seen already in the case $K=2$ persist also for
$K>2$.

\autoref{sec:conclusions} rounds off the article with a summary and a few
remarks about open questions for future research.

\section{Peakons and shockpeakons as weak solutions of the Geng--Xue equation}
\label{sec:weak-solutions-GX}

Our first item of business is to explain in which sense the
shockpeakon ansatz~\eqref{eq:GXshockpeakons},
and hence also the peakon ansatz~\eqref{eq:GXpeakons},
is a solution of the Geng--Xue equation.

For smooth functions $u$ and~$v$,
the Geng--Xue equation~\eqref{eq:GX} is equivalent to
\begin{equation}
  \label{eq:GX-pre-distributional}
  \begin{gathered}
    m_t + v \cdot (4-\partial_x^2) \partial_x (\tfrac12 u^2) = 0
    , \\
    n_t + u \cdot (4-\partial_x^2) \partial_x (\tfrac12 v^2) = 0
    ,
  \end{gathered}
\end{equation}
where $m=u-u_{xx}$ and $n=v-v_{xx}$, as before.
This rewriting is inspired by the fact that
the Degasperis--Procesi equation \eqref{eq:DP} can be written as
\begin{equation}
  \label{eq:DP-pre-distributional}
  m_t + (4-\partial_x^2) \partial_x (\tfrac12 u^2) = 0
  .
\end{equation}
In~\eqref{eq:GX-pre-distributional} and~\eqref{eq:DP-pre-distributional},
the equalities can be interpreted in a distributional sense.
Assuming that
the functions $x \mapsto u(x,t)^2$ and $x \mapsto v(x,t)^2$
are locally integrable for each fixed~$t$,
we can consider them as distributions in the space~$\spaceDprime$.
Then the derivatives with respect to~$x$ in \eqref{eq:GX-pre-distributional}
can be taken in the sense of distributions,
while the time derivatives are defined as limits (in~$\spaceDprime$)
of difference quotients in the $t$~direction;
see \eqref{eq:def-Dt} in \autoref{sec:notation-distributions}.

In the notation of \autoref{sec:lax-pair},
where we use $D_x$ for the distributional derivative
and $D_t$ for the time derivative as just explained,
the interpretation that we propose is thus
\begin{equation}
  \label{eq:GX-distributional}
  \begin{gathered}
    D_t (u - D_x^2 u) + v \cdot (4-D_x^2) D_x (\tfrac12 u^2) = 0
    , \\
    D_t (v - D_x^2 v) + u \cdot (4-D_x^2) D_x (\tfrac12 v^2) = 0
    .
  \end{gathered}
\end{equation}
However, for~\eqref{eq:GX-distributional} to make sense, it is necessary that
the distribution
$(4-D_x^2) D_x (\tfrac12 u^2)$
can be multiplied by the function~$v$,
and similarly with $u$ and~$v$ interchanged.
To ensure that this is possible in the context of peakons and shockpeakons,
without having to make any \textit{ad hoc}
assignments of values at jump discontinuities
we need to impose the \emph{non-overlapping}
condition mentioned in \autoref{sec:intro-shockpeakons}:
the component~$v$ must not have a peakon or a shockpeakon at any point
where the other component~$u$ has a peakon or a shockpeakon,
and vice versa.
Then, since $u^2$ is piecewise continuous,
$(4-D_x^2) D_x (\tfrac12 u^2)$ will involve nothing
worse than Dirac deltas and their first and second derivatives,
and this will be multiplied by a function $v$ which is smooth in a neighbourhood
of the support of these singular distributions,
so the products can be evaluated using the rules
\begin{equation}
  \label{eq:delta-multiplication-rules}
  \begin{aligned}
    f(x) \, \delta_a &= f(a) \, \delta_a
    ,\\
    f(x) \, \delta'_a &= f(a) \, \delta'_a - f'(a) \, \delta_a
    ,\\
    f(x) \, \delta''_a &= f(a) \, \delta''_a - 2 f'(a) \, \delta'_a +  f''(a) \, \delta_a
    .
  \end{aligned}
\end{equation}

\begin{theorem}
  \label{thm:GX-shockpeakons}
  The shockpeakon ansatz~\eqref{eq:GXshockpeakons}
  is a solution of the Geng--Xue equation
  \eqref{eq:GX-distributional},
  in the distributional sense just described,
  if and only if
  it is non-overlapping and satisfies the ODEs~\eqref{eq:GX-shockpeakon-ode}.
  As a special case,
  the peakon ansatz~\eqref{eq:GXpeakons}
  is a solution of the Geng--Xue equation
  if and only if
  it is non-overlapping and satisfies the ODEs~\eqref{eq:GX-peakon-ode}.
\end{theorem}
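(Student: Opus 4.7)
The plan is to expand both sides of the distributional Geng--Xue equation \eqref{eq:GX-distributional} as finite linear combinations of $\delta^{(j)}_{x_k}$ for $j=0,1,2$ supported at the sites $x_k$, and to match the coefficients site by site. First I would explain why the non-overlapping hypothesis is unavoidable: the quantity $(4-D_x^2) D_x(\tfrac12 u^2)$ carries Dirac contributions of the form $\delta_{x_k}$, $\delta'_{x_k}$ and $\delta''_{x_k}$ at every site where $u$ has a peakon or shockpeakon, and multiplication by $v$ via the rules \eqref{eq:delta-multiplication-rules} is legitimate only if $v$ is classically smooth in a neighbourhood of every such $x_k$; this forces $n_k=r_k=0$ at every $u$-site, and symmetrically $m_k=s_k=0$ at every $v$-site.

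With that in place, the next step is a direct calculation of $m = (1-D_x^2)u$ using the distributional identities $(1-D_x^2) e^{-\abs{x-a}} = 2\delta_a$ and $(1-D_x^2)[\sgn(x-a)\, e^{-\abs{x-a}}] = 2\delta'_a$, which yields
\[
m \;=\; \sum_{k=1}^N \bigl(2 m_k(t)\, \delta_{x_k(t)} + 2 s_k(t)\, \delta'_{x_k(t)}\bigr).
\]
Differentiating in time and using $D_t \delta_{x_k(t)} = -\dot x_k\, \delta'_{x_k(t)}$ and $D_t \delta'_{x_k(t)} = -\dot x_k\, \delta''_{x_k(t)}$ expresses $D_t m$ as a sum of $\delta^{(j)}_{x_k}$ whose coefficients involve $\dot m_k$, $\dot s_k$, $\dot x_k m_k$ and $\dot x_k s_k$.

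For the other term I would compute $u^2$ as a piecewise smooth function, record that its jumps at each $x_k$ are $[u^2]=-4 s_k\, u(x_k)$ (using the average in \eqref{eq:u-ux-are-averages}), take the distributional $x$-derivative to pick up the corresponding $\delta_{x_k}$ contributions, apply $(4-D_x^2)$ (which creates further $\delta'_{x_k}$ and $\delta''_{x_k}$ contributions out of the cascade of differentiated jumps of $u$, $u_x$, $u_{xx}$), and finally multiply by $v$ using \eqref{eq:delta-multiplication-rules}. A key simplification here is $v_{xx}(x_k)=v(x_k)$ at any $u$-site, valid because in a neighbourhood of such a site $v$ is a finite combination of translates $e^{-\abs{x-x_j}}$ each satisfying $(1-\partial_x^2) v = 0$ classically.

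Matching coefficients of $\delta''_{x_k}$, $\delta'_{x_k}$ and $\delta_{x_k}$ at every $u$-site should then yield exactly the three ODEs for $\dot x_k$, $\dot s_k$ and $\dot m_k$ in \eqref{eq:GX-shockpeakon-ode}; the second equation of \eqref{eq:GX-distributional} is handled by the identical computation with $u \leftrightarrow v$, producing the ODEs for $\dot n_k$ and $\dot r_k$ at every $v$-site. The pure peakon case then follows by specializing to $s_k = r_k = 0$. The main obstacle will be the bookkeeping of all the distributional contributions produced by $(4-D_x^2) D_x(\tfrac12 u^2)$ and their subsequent expansion via \eqref{eq:delta-multiplication-rules}, which involves many cancellations relying on $v_{xx}(x_k)=v(x_k)$ and on the fact that $u(x_k)$ and $u_x(x_k)$ are precisely the averages of the one-sided limits; this is presumably why the authors defer the computation to \autoref{sec:lax-pair}, where the Lax pair provides a cleaner organization.
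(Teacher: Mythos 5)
Your proposal is correct and follows essentially the same route as the paper's proof in the appendix: compute the purely singular distribution $(4-D_x^2)D_x(\tfrac12 u^2)$ by tracking the jumps of $u$ and $u_x$ (with $\jump{u}(x_k)=-2s_k$, $\jump{u_x}(x_k)=-2m_k$ and the averages as evaluations), multiply by $v$ using the non-overlapping hypothesis and $v_{xx}(x_k)=v(x_k)$, identify coefficients of $\delta_{x_k}$, $\delta'_{x_k}$, $\delta''_{x_k}$ against $-D_tm$ with $m=2\sum(m_k\delta_{x_k}+s_k\delta'_{x_k})$, and use the $u\leftrightarrow v$ symmetry for the second equation. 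The only details you gloss over are the small case analysis needed because the $\delta''$-coefficient equation reads $s_k\dot x_k=s_k\,u(x_k)v(x_k)$ (so $\dot x_k=u(x_k)v(x_k)$ must be extracted from the $\delta'$-equation when $s_k=0$), and the fact that the paper's proof of this theorem is the direct computation itself rather than a Lax-pair argument.
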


\begin{proof}
  See \autoref{sec:proof-shockpeakon-odes} in the appendix.
\end{proof}

\begin{remark}
  It is understood here that the ordering assumption
  $x_1 < \dots < x_n$ must be fulfilled.
  If this condition holds at time $t=0$,
  then it will hold at least in some neighbourhood of $t=0$,
  so the ODEs always provide a local solution of the PDE.
  We will see that for pure peakon solutions,
  the ordering is automatically preserved for all~$t$,
  so that the solution is global,
  whereas for peakon--antipeakon or shockpeakon solutions
  this may not be the case.
\end{remark}

\begin{remark}
  Because of our assumption of non-overlapping we cannot
  perform the reduction $u=v$ which for smooth solutions
  turns the Geng--Xue equation into two copies of
  the Novikov equation.
  But since the Novikov equation does admit peakon solutions,
  we do not rule out the possibility that there is some way
  of defining solutions which would allow overlapping peakons
  or even shockpeakons.
  This is an interesting question and we leave it for future research.
  Let us just remark that
  in a multipeakon ansatz with overlapping,
  the distribution $D_x (4-D_x^2) (\tfrac12 u^2)$
  is a linear combination of $\delta$ and $\delta'$,
  while $v_x$ jumps at the location of those singular terms,
  so with our distributional approach
  we would need to assign some value to $v_x(x_k)$
  in the multiplication
  \begin{equation*}
    v(x) \, \delta'_{x_k} = v(x_k) \, \delta'_{x_k} - v_x(x_k) \, \delta_{x_k}
    ,
  \end{equation*}
  and of course also to $u_x(x_k)$ in the other equation.
\end{remark}

\begin{remark}
  Tang and Liu~\cite{tang-liu:GX-cauchy-problem}
  study solutions of the Geng--Xue equation
  with $u(\cdot,t)$ and~$v(\cdot,t)$ in the Besov space $B^{5/2}_{2,1}(\R)$,
  and write it as
  \begin{equation}
    \label{eq:GXweak-tang-liu}
    \begin{gathered}
      u_t + u u_x v
      + (1-\partial_x^2)^{-1} \bigl(
      3 u u_x v
      + 2 u_x^2 v_x
      + 2 u u_{xx} v_x
      + u u_x v_{xx}
      \bigr)
      = 0
      , \\
      v_t + v v_x u
      + (1-\partial_x^2)^{-1} \bigl(
      3 v v_x u
      + 2 v_x^2 u_x
      + 2 v v_{xx} u_x
      + v v_x u_{xx}
      \bigr)
      = 0
      ,
    \end{gathered}
  \end{equation}
  where $(1-\partial_x^2)^{-1}$ means convolution with $\tfrac12 e^{-\abs{x}}$.
  (See their equations (1.6)--(1.8).)
  A similar formulation was used by Mi, Mu and Tao \cite[eq.~(56)]{mi-mu-tao:GX-cauchy-problem}.
  Since these formulations require the derivatives
  $u_{xx}$ and~$v_{xx}$ to be in~$L^1$,
  they are not general enough to incorporate peakon solutions.
\end{remark}

\begin{remark}
  For the Degasperis--Procesi equation
  $D_t (u-D_x^2 u) + D_x (4-D_x^2) (\tfrac12 u^2) = 0$,
  the computation in the proof of \autoref{thm:GX-shockpeakons}
  provides a derivation of the shockpeakon ODEs
  \begin{equation}
    \label{eq:DP-shockpeakon-ode}
    \begin{aligned}
      \dot x_k &= u(x_k)
      ,\\
      \dot m_k &= -2 m_k \, u_x(x_k) + 2 s_k \, u(x_k)
      ,\\
      \dot s_k &= -s_k \, u_x(x_k)
    \end{aligned}
  \end{equation}
  which is simpler than the one originally given in~\cite{lundmark:shockpeakons}:
  just identify coefficients in \eqref{eq:DP-operator} and~\eqref{eq:m-t}.
\end{remark}

\section{Preliminaries: The map to spectral variables, and its inverse}
\label{sec:preliminaries}

The main technical work needed for analyzing the
$K+K$ interlacing peakon solutions was done in
our previous article~\cite{lundmark-szmigielski:GX-inverse-problem}.
In this section, we summarize the relevant material from that article;
it will be crucial in the following sections.

Throughout this section, we will assume implicitly that $K \ge 2$.
The case~$K=1$ is exceptional and will be treated separately
in \autoref{sec:map-K1}.

\subsection{The forward spectral map for $K \ge 2$}
\label{sec:forward-spectral-map}

First we describe the forward spectral map,
a change of variables
which takes the $4K$ ``physical'' variables
describing the positions and amplitudes of an interlacing
peakon solution,
\begin{equation}
  \label{eq:physical-parameters}
  x_1 < x_2 < \dots < x_{2K-1} < x_{2K}
  , \quad
  m_1, m_3, \dots, m_{2K-1} \in \R_+
  , \quad
  n_2, n_4, \dots, n_{2K} \in \R_+
  ,
\end{equation}
to a set of $4K$ spectral variables
\begin{equation}
  \label{eq:spectral-parameters}
  \begin{gathered}
    0 < \lambda_1 < \lambda_2 < \dots < \lambda_K
    , \qquad
    0 < \mu_1 < \mu_2 < \dots < \mu_{K-1}
    ;
    \\
    a_1,a_2,\dots,a_K \in \R_+
    ,\qquad
    b_1,b_2,\dots,b_{K-1} \in \R_+
    ,\qquad
    b_{\infty}, b^*_{\infty}  \in \R_+
    .
  \end{gathered}
\end{equation}
It was shown in \cite{lundmark-szmigielski:GX-inverse-problem}
that this map is a bijection, and
the inverse map (which is much more explicit) will be described
in \autoref{sec:inverse-spectral-map}.
Combining this with the time dependence for the spectral variables,
derived in \autoref{sec:time-spectral},
we get explicit formulas for the general interlacing
solution to the peakon ODEs~\eqref{eq:GX-peakon-ode}.
Using these formulas, the dynamics of interlacing peakons will be
analyzed in \autoref{sec:dynamics-1} (for the case $K=1$),
\autoref{sec:dynamics-2} (for $K=2$),
and \autoref{sec:dynamics-K} (for arbitrary~$K$).

As shown in \cite{geng-xue:cubic-nonlinearity},
the Geng--Xue equation \eqref{eq:GX} arises
as the compatibility condition of the Lax pair
\begin{subequations}
  \label{eq:laxI}
  \begin{equation}
    \label{eq:laxI-x}
    \frac{\partial}{\partial x}
    \begin{pmatrix} \psi_1 \\ \psi_2 \\ \psi_3 \end{pmatrix} =
    \begin{pmatrix}
      0 & zn & 1 \\
      0 & 0 & zm \\
      1 & 0 & 0
    \end{pmatrix}
    \begin{pmatrix} \psi_1 \\ \psi_2 \\ \psi_3 \end{pmatrix},
  \end{equation}
  \begin{equation}
    \label{eq:laxI-t}
    \frac{\partial}{\partial t}
    \begin{pmatrix} \psi_1 \\ \psi_2 \\ \psi_3 \end{pmatrix} =
    \begin{pmatrix}
      -v_xu & v_x z^{-1}-vunz & v_xu_x \\
      u z^{-1} & v_xu-vu_x-z^{-2} & -u_x z^{-1}-vumz \\
      -vu & v z^{-1} & vu_x
    \end{pmatrix}
    \begin{pmatrix} \psi_1 \\ \psi_2 \\ \psi_3 \end{pmatrix},
  \end{equation}
\end{subequations}
where $z$ is the spectral parameter.
However, because of the obvious symmetry in the Geng--Xue equation,
it also arises as the compatibility condition of a different
Lax pair, obtained by interchanging $u$ and~$v$:
\begin{subequations}
  \label{eq:laxII}
  \begin{equation}
    \label{eq:laxII-x}
    \frac{\partial}{\partial x}
    \begin{pmatrix} \twin\psi_1 \\ \twin\psi_2 \\ \twin\psi_3 \end{pmatrix} =
    \begin{pmatrix}
      0 & zm & 1 \\
      0 & 0 & zn \\
      1 & 0 & 0
    \end{pmatrix}
    \begin{pmatrix} \twin\psi_1 \\ \twin\psi_2 \\ \twin\psi_3 \end{pmatrix},
  \end{equation}
  \begin{equation}
    \label{eq:laxII-t}
    \frac{\partial}{\partial t}
    \begin{pmatrix} \twin\psi_1 \\ \twin\psi_2 \\ \twin\psi_3 \end{pmatrix} =
    \begin{pmatrix}
      -u_xv & u_x z^{-1}-uvmz & u_xv_x \\
      v z^{-1} & u_xv-uv_x-z^{-2} & -v_x z^{-1}-uvnz \\
      -uv & u z^{-1} & uv_x
    \end{pmatrix}
    \begin{pmatrix} \twin\psi_1 \\ \twin\psi_2 \\ \twin\psi_3 \end{pmatrix}.
  \end{equation}
\end{subequations}
(In the case $u=v$, when also $m=u-u_{xx}$ and $n=v-v_{xx}$ coincide,
these Lax pairs reduce to the one found by Hone and Wang
\cite{hone-wang:cubic-nonlinearity} for Novikov's equation
\eqref{eq:Novikov}, and the Geng--Xue equation
reduces to two copies of Novikov's equation.)

Since we are dealing with the interlacing case,
with the first (leftmost) peakon appearing in~$u$,
the second in~$v$, etc.,
the setup is not symmetric, and spectral data from
\emph{both} Lax pairs must be used in order to solve
the inverse spectral problem
which will let us
compute the peakon positions and amplitudes.

When $u$ and~$v$ are given by the interlacing peakon ansatz
\eqref{eq:GX-interlacing-peakons},
$m$ and~$n$ are discrete measures,
as explained in \autoref{sec:lax-pair}.
Interpreting the derivatives in the Lax equations in a suitable
distributional sense,
and imposing boundary conditions on
\eqref{eq:laxI-x} and \eqref{eq:laxII-x}
which are compatible with
the time evolution given by
\eqref{eq:laxI-t} and \eqref{eq:laxII-t},
we obtain finite-dimensional eigenvalue problems
which define our spectral data.
Here we will keep the exposition brief,
and merely state the resulting formulas
which are necessary for defining the spectral data.
For details, see \cite{lundmark-szmigielski:GX-inverse-problem},
in particular Appendix~B.

Consider equation \eqref{eq:laxI-x} for a fixed~$t$
(which we will suppress in the notation).
Since $m$ and $n$ are zero away from the points $x=x_k$,
it follows that $\psi_2(x;z)$ is piecewise constant, and
$\psi_1(x;z)$ and $\psi_3(x;z)$ are piecewise linear combinations
of $e^{x}$ and~$e^{-x}$.
We impose the following boundary condition on the left:
\begin{equation}
  \label{eq:psi-left}
  \begin{pmatrix} \psi_1(x;z) \\ \psi_2(x;z) \\ \psi_3(x;z) \end{pmatrix} =
  \begin{pmatrix}
    e^x \\ 0 \\ e^x
  \end{pmatrix},
  \qquad
  x < x_1
  .
\end{equation}
Then we get on the right
\begin{equation}
  \label{eq:psi-right}
    \begin{pmatrix} \psi_1(x;z) \\ \psi_2(x;z) \\ \psi_3(x;z) \end{pmatrix} =
    \begin{pmatrix}
      A(-z^2) e^x + z^2 C(-z^2) e^{-x} \\ 2z B(-z^2) \\ A(-z^2) e^x - z^2 C(-z^2) e^{-x}
    \end{pmatrix},
    \qquad
    x > x_N
    ,
\end{equation}
with polynomials $A(\lambda)$, $B(\lambda)$ and~$C(\lambda)$,
of degrees $K$, $K-1$ and~$K-1$, respectively,
defined by
\begin{equation}
  \label{eq:ABC-jump-product}
  \begin{pmatrix} A(\lambda) \\ B(\lambda) \\ C(\lambda) \end{pmatrix}
  = S_{2K}(\lambda) S_{2K-1}(\lambda) \dotsm S_{2}(\lambda) S_{1}(\lambda)
  \begin{pmatrix} 1 \\ 0 \\ 0 \end{pmatrix}
  ,
\end{equation}
where, for $a=1,\dots,K$,
\begin{equation}
  \label{eq:jumpI-even-odd}
  S_{k}(\lambda) =
  \begin{cases}
    \begin{pmatrix}
      1 & 0 & 0 \\
      m_{k} e^{x_{k}} & 1 & \lambda m_{k} e^{-x_{k}} \\
      0 & 0 & 1
    \end{pmatrix},
    & k = 2a-1
    ,
    \\[5ex]
    \begin{pmatrix}
      1 & -2\lambda n_{k} e^{-x_{k}} & 0 \\
      0 & 1 & 0 \\
      0 &  2 n_{k} e^{x_{k}} & 1
    \end{pmatrix},
    & k = 2a
    .
  \end{cases}
\end{equation}

The second Lax equation \eqref{eq:laxII-x} is similar,
with $m$ and~$n$ swapped. So with
\begin{equation}
  \label{eq:psi-twin-left}
  \begin{pmatrix} \twin\psi_1(x;z) \\ \twin\psi_2(x;z) \\ \twin\psi_3(x;z) \end{pmatrix} =
  \begin{pmatrix}
    e^x \\ 0 \\ e^x
  \end{pmatrix},
  \qquad
  x < x_1
  ,
\end{equation}
we get
\begin{equation}
  \label{eq:psi-twin-right}
    \begin{pmatrix} \twin\psi_1(x;z) \\ \twin\psi_2(x;z) \\ \twin\psi_3(x;z) \end{pmatrix} =
    \begin{pmatrix}
      \twin A(-z^2) e^x + z^2 \twin C(-z^2) e^{-x} \\ 2z \twin B(-z^2) \\ \twin A(-z^2) e^x - z^2 \twin C(-z^2) e^{-x}
    \end{pmatrix},
    \qquad
    x > x_N
    ,
\end{equation}
with polynomials
$\twin A(\lambda)$, $\twin B(\lambda)$ and~$\twin C(\lambda))$
of degrees $K-1$, $K-1$ and~$K-2$, respectively,
defined by
\begin{equation}
  \label{eq:twin-ABC-jump-product}
  \begin{pmatrix} \twin A(\lambda) \\ \twin B(\lambda) \\ \twin C(\lambda) \end{pmatrix}
  = \twin S_{2K}(\lambda) \twin S_{2K-1}(\lambda) \dotsm \twin S_{2}(\lambda) \twin S_{1}(\lambda)
  \begin{pmatrix} 1 \\ 0 \\ 0 \end{pmatrix}
  ,
\end{equation}
where
\begin{equation}
  \label{eq:jumpII-even-odd}
  \twin S_{k}(\lambda) =
  \begin{cases}
    \begin{pmatrix}
      1 & -2\lambda m_{k} e^{-x_{k}} & 0 \\
      0 & 1 & 0 \\
      0 &  2 m_{k} e^{x_{k}} & 1
    \end{pmatrix},
    & k = 2a-1,
    \\
    & \\
    \begin{pmatrix}
      1 & 0 & 0 \\
      n_{k} e^{x_{k}} & 1 & \lambda n_{k} e^{-x_{k}} \\
      0 & 0 & 1
    \end{pmatrix},
    & k = 2a.
  \end{cases}
\end{equation}

If all masses $m_{2a-1}$ and~$n_{2a}$ are positive,
then it turns out that the polynomial~$A$ has positive simple zeros
\begin{equation*}
  0 < \lambda_1 < \lambda_2 < \dots < \lambda_K
  ,
\end{equation*}
and likewise~$\twin A$ has positive simple zeros
\begin{equation*}
  0 < \mu_1 < \mu_2 < \dots < \mu_{K-1}
  ,
\end{equation*}
and we will refer to these zeros $ \{ \lambda_i, \mu_j \}$
as the \emph{eigenvalues}
of the spectral problems above.
Moreover we define \emph{residues}
\begin{equation*}
  a_1,a_2,\dots,a_K
  ;\quad
  b_1,b_2,\dots,b_{K-1}
  ;\quad
  b_{\infty}
\end{equation*}
from the partial fractions decomposition of \emph{Weyl functions}
$W$ and~$\twin W$:
\begin{equation}
  \label{eq:W-Wtwin-parfrac}
  W(\lambda)
  = - \frac{B(\lambda)}{A(\lambda)}
  = \sum_{i=1}^K \frac{a_i}{\lambda - \lambda_i}
  ,\qquad
  \twin W(\lambda)
  = - \frac{\twin B(\lambda)}{\twin A(\lambda)}
  = -b_{\infty} + \sum_{j=1}^{K-1} \frac{b_j}{\lambda - \mu_j}
  .
\end{equation}
The residues can be shown to be positive
if all masses $m_{2a-1}$ and~$n_{2a}$ are positive.

There is one final piece of spectral data, $b^*_{\infty}$,
which is needed in order to recover the mass~$m_1$
and position~$x_1$ of the leftmost peakon.
It arises in a natural way from an adjoint spectral problem,
but to avoid introducing too much notation,
we take a more direct route here and simply define it as
\begin{equation}
  \label{eq:b-infty-star-directly}
  b^*_{\infty}
  = m_1 e^{-x_1} (1 - E_{12}^2)
  ,
\end{equation}
where we use the abbreviation
$E_{ij} = e^{-\abs{x_i-x_j}} = e^{x_i-x_j}$ for $i<j$.

This concludes the description of the forward spectral map.

\begin{remark}
  Let us give some motivation for why this particular quantity
  $b^*_{\infty}$ might be of interest.
  Note from \eqref{eq:W-Wtwin-parfrac}
  that
  \begin{equation*}
    b_{\infty}
    = \lim_{\lambda \to \infty} \frac{\twin B(\lambda)}{\twin A(\lambda)}
    .
  \end{equation*}
  The coefficients of the polynomials
  $\twin A(\lambda)$ and $B(\lambda)$
  can be worked out from the defining matrix products
  \eqref{eq:twin-ABC-jump-product};
  in particular, the highest coefficients are given in equation~(B.23)
  in~\cite{lundmark-szmigielski:GX-inverse-problem},
  from which it follows that
  \begin{equation}
    \label{eq:b-infty-directly}
    b_{\infty}
    = n_{2K} e^{x_{2K}} (1 - E_{2K-1,2K}^2)
    .
  \end{equation}
  So there is a sort of duality between the roles played by
  $b_{\infty}$ and~$b^*_{\infty}$.
\end{remark}

\begin{remark}
  \label{rem:b-infty-and-star-constants-of-motion}
  It can be verified directly by differentiation
  of the expressions in \eqref{eq:b-infty-star-directly}
  and \eqref{eq:b-infty-directly}
  that both $b_{\infty}$ and~$b^*_{\infty}$
  are constants of motion for the Geng--Xue peakon ODEs~\eqref{eq:GX-peakon-ode}.
  (Cf.~\autoref{thm:spectral-variables-ode}.)
\end{remark}

\subsection{The inverse spectral map for $K \ge 2$}
\label{sec:inverse-spectral-map}

The main result that we need from our previous work is the explicit
formulas for the inverse spectral map,
taken from Corollary~4.5
in~\cite{lundmark-szmigielski:GX-inverse-problem}.
These formulas are quoted in \autoref{thm:inverse-spectral-map-formulas} below,
but first we need to define a fair amount of notation.

\begin{definition}
  Given spectral data as in \eqref{eq:spectral-parameters},
  let
  \begin{equation}
    \label{eq:discrete-measures-alpha-beta}
    \alpha = \sum_{i=1}^K a_i \delta_{\lambda_i}
    , \qquad
    \beta = \sum_{j=1}^{K-1} b_j \delta_{\mu_j}
    ,
  \end{equation}
  where $\delta$ is the Dirac delta.
  These two discrete measures on the positive real axis~$\R_+$
  will be called the \emph{spectral measures}.
\end{definition}

\begin{remark}
  For the application to Geng--Xue peakons, the measures
  \eqref{eq:discrete-measures-alpha-beta}
  are the only ones that we will have in mind,
  but \autoref{def:heine-integral} below makes sense whenever
  $\alpha$ and $\beta$ are measures on~$\R_+$
  with finite moments
  \begin{equation}
    \label{eq:alpha-beta-moments}
    \alpha_k = \int x^k \da
    < \infty
    ,
    \qquad
    \beta_k = \int y^k \db
    < \infty
    ,
  \end{equation}
  and finite bimoments with respect to the Cauchy kernel $1/(x+y)$,
  \begin{equation}
    \label{eq:alpha-beta-bimoments}
    I_{ab} = \iint \frac{x^a y^b}{x+y} \da\db
    < \infty
    .
  \end{equation}
\end{remark}

\begin{definition}
  For $x = (x_1,\dots,x_n) \in \R^n$, let $\Delta(x)$
  denote the Vandermonde determinant
  \begin{equation}
    \label{eq:def-Delta}
    \Delta(x) = \Delta(x_1,\dots,x_n) = \prod_{i < j} (x_i - x_j)
  \end{equation}
  and $\Gamma(x)$ its counterpart with only plus signs,
  \begin{equation}
    \label{eq:def-Gamma}
    \Gamma(x) = \Gamma(x_1,\dots,x_n) = \prod_{i < j} (x_i + x_j);
  \end{equation}
  in both cases the right-hand side is interpreted as~$1$
  (the empty product) if $n=0$ or $n=1$.
  Moreover, for $x \in \R^n$ and $y \in \R^m$, let
  \begin{equation}
    \label{eq:def-Gamma-mixed}
    \Gamma(x;y) = \Gamma(x_1,\dots,x_n;y_1,\dots,y_m)
    = \prod_{i=1}^{n} \prod_{j=1}^{m} (x_i + y_j)
    .
  \end{equation}
\end{definition}

\begin{remark}
  We will not really need $\Gamma(x)$ here, only $\Gamma(x;y)$.
  But we have included the definition of $\Gamma(x)$ anyway, as it 
  occurs often in the study of peakons solutions of the Degasperis--Procesi
  and Novikov equations, and also in basic identities such as
  \begin{equation*}
    \Gamma(x_1,\dots,x_n;y_1,\dots,y_n) = 
    \frac{\Gamma(x_1,\dots,x_n,y_1,\dots,y_m)}{\Gamma(x_1,\dots,x_n) \, \Gamma(y_1,\dots,y_m)}
    .
  \end{equation*}
\end{remark}

\begin{definition}
  \label{def:heine-integral}
  With $\sigma_n$ denoting the sector in $\R_+^n$
  defined by the inequalities $0 < x_1 < \dots < x_n$,
  let
  \begin{equation}
    \label{eq:heine-integral}
    \heineintegral_{nm}^{rs} =
    \int_{\sigma_{n} \times \sigma_{m}}
    \frac{\Delta(x)^2 \Delta(y)^2 \Bigl( \prod_{i=1}^{n} x_i \Bigr)^r \Bigl( \prod_{j=1}^{m} y_j \Bigr)^s}{\Gamma(x;y)}  \dA{n} \dB{m},
  \end{equation}
  for $n$ and~$m$ positive.
  We also consider the degenerate cases
  \begin{equation}
    \label{eq:heine-integral-one-index-zero}
    \begin{split}
      \heineintegral_{n0}^{rs} &= \int_{\sigma_{n}} \Delta(x)^2 \Bigl( \prod_{i=1}^{n} x_i \Bigr)^r \dA{n} \qquad (n > 0),
      \\
      \heineintegral_{0m}^{rs} &= \int_{\sigma_{m}} \Delta(y)^2 \Bigl( \prod_{j=1}^{m} y_j \Bigr)^s \dB{m} \qquad (m > 0),
      \\[1ex]
      \heineintegral_{00}^{rs} &= 1.
    \end{split}
  \end{equation}
\end{definition}

\begin{remark}
  In particular,
  $\heineintegral_{10}^{rs} = \alpha_r$ and $\heineintegral_{01}^{rs} = \beta_s$
  are the moments~\eqref{eq:alpha-beta-moments}
  of the measures $\alpha$ and~$\beta$,
  and $\heineintegral_{11}^{rs} = I_{rs}$ is the Cauchy
  bimoment~\eqref{eq:alpha-beta-bimoments}.
\end{remark}

\begin{remark}
  The integrals $\heineintegral_{nm}^{rs}$
  arise as evaluations of certain bimoment determinants
  occurring naturally in the theory of Cauchy biorthogonal polynomials;
  see comments in Appendix~A.3 of \cite{lundmark-szmigielski:GX-inverse-problem}.
  This is similar to Heine's evaluation of the Hankel determinant
  of moments from the classical theory of orthogonal polynomials,
  \begin{equation*}
    \det (\alpha_{i+j})_{i,j=0}^{n-1}
    = \frac{1}{n!} \int_{\R^n} \Delta(x)^2 \dA{n}
    .
  \end{equation*}
\end{remark}

If we now specialize to the case
when $\alpha$ and~$\beta$ are the discrete measures
\eqref{eq:discrete-measures-alpha-beta},
the moments can be written as sums instead of integrals,
\begin{equation}
  \label{eq:moment-as-sum}
  \alpha_r = \int x^r \da
  = \sum_{i=1}^K \lambda_i^r \, a_i
  ,\qquad
  \beta_s = \int y^s \db
  = \sum_{j=1}^{K-1} \mu_j^s \, b_j,
\end{equation}
and likewise for the bimoments,
\begin{equation}
  \label{eq:bimoment-as-sum}
  I_{rs} = \iint \frac{x^r y^s}{x+y} \da\db
  = \sum_{i=1}^K \sum_{j=1}^{K-1} \frac{\lambda_i^r \mu_j^s}{\lambda_i+\mu_j} \, a_i b_j.
\end{equation}
The integrals $\heineintegral_{nm}^{rs}$ also turn into sums,
\begin{equation}
  \label{eq:heine-integral-as-sum}
  \heineintegral_{nm}^{rs} =
  \sum_{I \in \binom{[K]}{n}} \sum_{J \in \binom{[K-1]}{m}}
  \Psi_{IJ} \, \lambda_I^r a_I \, \mu_J^s b_J,
\end{equation}
where we have used yet some more notation, defined as follows:

\begin{definition}
  The binomial coefficient $\binom{S}{n}$
  denotes the collection of $n$-element subsets
  of the set~$S$,
  and $[k]$ denotes the integer interval $\{ 1,2,3, \dots, k \}$.
  We always label the elements of a set
  $I \in \binom{[k]}{n}$ in increasing order:
  $I = \{ i_1 < i_2 < \dots < i_n \}$.
  Moreover,
  \begin{equation}
    \label{eq:product-notation-explanation}
    \lambda_I^r a_I \, \mu_J^s b_J =
    \Bigl( \prod_{i \in I} \lambda_i^r a_i \Bigr)
    \Bigl( \prod_{j \in J} \mu_j^s b_j \Bigr)
  \end{equation}
  and
  \begin{equation}
    \label{eq:PsiIJ}
    \Psi_{IJ} = \frac{\Delta_I^2 \twin\Delta_J^2}{\Gamma_{IJ}},
  \end{equation}
  where
  \begin{equation}
    \label{eq:DeltaI}
    \begin{split}
      \Delta_I^2 &= \Delta(\lambda_{i_1},\dots,\lambda_{i_n})^2 = \prod_{\substack{a,b \in I \\ a < b}} (\lambda_{a} - \lambda_{b})^2
      , \\
      \twin\Delta_J^2 &= \Delta(\mu_{j_1},\dots,\mu_{j_m})^2 = \prod_{\substack{a,b \in J \\ a < b}} (\mu_{a} - \mu_{b})^2
      , \\
      \Gamma_{IJ} &= \Gamma(\lambda_{i_1},\dots,\lambda_{i_n};\mu_{j_1},\dots,\mu_{j_m}) = \prod_{i \in I, \, j \in J} (\lambda_i + \mu_j)
      .
    \end{split}
  \end{equation}
  Degenerate cases: we let $[0] = \emptyset$,
  and consider empty products to be equal to~$1$,
  so that
  \begin{equation*}
    \Delta_{\emptyset}^2
    = \Delta_{\{ i \}}^2
    = \twin\Delta_{\emptyset}^2
    = \twin\Delta_{\{ j \}}^2
    = \Gamma_{I\emptyset}
    = \Gamma_{\emptyset J}
    = 1
    .
  \end{equation*}
\end{definition}

\begin{remark}
  In the discrete case \eqref{eq:heine-integral-as-sum},
  we have $\heineintegral_{nm}^{rs} > 0$ if $0 \le n \le K$
  and $0 \le m \le K-1$, otherwise
  $\heineintegral_{nm}^{rs} = 0$.
\end{remark}

\begin{theorem}[Explicit formulas for the inverse spectral map]
  \label{thm:inverse-spectral-map-formulas}
  For $K \ge 2$,
  the inverse spectral map from the spectral variables
  \eqref{eq:spectral-parameters}
  to the ``physical'' variables
  \eqref{eq:physical-parameters}
  is given by the following formulas,
  in terms of the sums
  \eqref{eq:moment-as-sum},
  \eqref{eq:bimoment-as-sum}
  and
  \eqref{eq:heine-integral-as-sum}
  above.

  The even-numbered quantities are
  \begin{align}
    x_{2(K+1-j)} &= \frac12 \ln \left( \frac{2 \, \heineintegral_{j,j-1}^{00}}{\heineintegral_{j-1,j-2}^{11}} \right),
    \\
    n_{2(K+1-j)} &=
    \frac{\heineintegral_{j-1,j-1}^{10}}{\heineintegral_{j-1,j-2}^{01} \heineintegral_{j,j-1}^{01}}
    \sqrt{\frac{\heineintegral_{j,j-1}^{00} \heineintegral_{j-1,j-2}^{11}}{2}}
    ,
  \end{align}
  for $j = 2, \dots, K$,
  together with
  \begin{align}
    x_{2K} &= \frac12 \ln 2(I_{00} + b_{\infty} \alpha_0),
    \\
    n_{2K} &= \frac{1}{\alpha_0} \sqrt{\frac{I_{00} + b_{\infty} \alpha_0}{2}}.
  \end{align}
  The odd-numbered quantities are
  \begin{align}
    x_{2(K+1-j)-1} &= \frac12 \ln \left( \frac{2 \, \heineintegral_{jj}^{00}}{\heineintegral_{j-1,j-1}^{11}}\right),
    \\
    m_{2(K+1-j)-1} &=
    \frac{\heineintegral_{j,j-1}^{01}}{\heineintegral_{jj}^{10} \heineintegral_{j-1,j-1}^{10}}
    \sqrt{\frac{\heineintegral_{j-1,j-1}^{11} \heineintegral_{jj}^{00}}{2}}
    ,
  \end{align}
  for $j=1,\dots,K-1$, together with
  \begin{align}
    x_1 &= \frac12 \ln \left( \frac{2 \, \heineintegral_{K,K-1}^{00}}{\heineintegral_{K-1,K-2}^{11} + \dfrac{\strut 2 b^*_{\infty} L}{M} \, \heineintegral_{K-1,K-1}^{10}} \right)
    \\
    m_1 &=
    \frac{M/L}{\heineintegral_{K-1,K-1}^{10}}
    \sqrt{\frac{\heineintegral_{K,K-1}^{00}}{2} \left( \heineintegral_{K-1,K-2}^{11} + \dfrac{\strut 2 b^*_{\infty} L}{M} \, \heineintegral_{K-1,K-1}^{10} \right)}
  ,
  \end{align}
  where
  \begin{equation}
    \label{eq:LM}
    L = \prod_{i=1}^K \lambda_i
    ,\qquad
    M = \prod_{j=1}^{K-1} \mu_j
    .
  \end{equation}
\end{theorem}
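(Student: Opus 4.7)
The plan is to reduce this to Corollary~4.5 of \cite{lundmark-szmigielski:GX-inverse-problem}, where essentially the same statement was established in our earlier notation; since the proof there is lengthy, I only sketch the main logical thread here. Starting from the two Weyl functions $W = -B/A$ and $\twin W = -\twin B/\twin A$ together with the boundary data $b_{\infty}$ and~$b^*_{\infty}$, one recovers the positions and amplitudes by peeling off peakons one at a time: each transfer matrix $S_k(\lambda)$ in~\eqref{eq:jumpI-even-odd} and $\twin S_k(\lambda)$ in~\eqref{eq:jumpII-even-odd} is triangular and easily invertible, so removing a single peakon from the right produces a residual Weyl function on a shorter lattice via an explicit rational substitution. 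Iterating this procedure expresses every $x_k$, every $m_{2a-1}$ and every $n_{2a}$ as a ratio of bimoment-type determinants built from the discrete measures $\alpha$ and~$\beta$ in~\eqref{eq:discrete-measures-alpha-beta}.

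The next step is to recognize these bimoment determinants as the Gram determinants that arise in the theory of \emph{Cauchy biorthogonal polynomials} with respect to the kernel $1/(x+y)$, and to apply the Cauchy bimoment analogue of Heine's trick to rewrite them as multiple integrals. The weight produced by the computation is exactly $\Delta(x)^2 \Delta(y)^2/\Gamma(x;y)$, suitably twisted by the powers of the variables that accommodate the indices $r$ and~$s$, so the determinants agree with $\heineintegral_{nm}^{rs}$ up to combinatorial prefactors that cancel cleanly in the ratios. Specialising to the discrete spectrum~\eqref{eq:discrete-measures-alpha-beta} converts these integrals into the finite sums~\eqref{eq:heine-integral-as-sum}, and substituting back yields the stated formulas for all even-indexed quantities and for the odd-indexed pairs $(x_{2(K+1-j)-1}, m_{2(K+1-j)-1})$ with $j = 1,\dots,K-1$.

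What remains is the leftmost peakon $(x_1, m_1)$, and this is precisely where $b^*_{\infty}$ enters: the recursion driven purely by $W$, $\twin W$ and $b_{\infty}$ stops one step short of the left boundary, and the missing datum is supplied by the defining relation $b^*_{\infty} = m_1 e^{-x_1}(1 - E_{12}^2)$ in~\eqref{eq:b-infty-star-directly}. In the reconstruction identities this contributes precisely the inhomogeneous term $2 b^*_{\infty} L/M$ inside the denominators of the formulas for $x_1$ and~$m_1$. The main obstacle in the whole programme is exactly this final step: identifying $b^*_{\infty}$ with the appropriate residue of the adjoint Weyl function, checking that the resulting $m_1$ is positive, and verifying that $x_1 < x_2$ throughout the admissible parameter range in~\eqref{eq:spectral-parameters}. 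Once this bookkeeping is carried out, the theorem follows directly from the general reconstruction identities established in \cite{lundmark-szmigielski:GX-inverse-problem}.
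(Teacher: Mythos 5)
Your proposal is correct and matches the paper's treatment: the paper does not prove this theorem here at all, but simply quotes it as Corollary~4.5 of \cite{lundmark-szmigielski:GX-inverse-problem}, which is exactly the reduction you make, and your sketch of the underlying argument (bimoment determinants from the two Weyl functions, their Heine-type evaluation via Cauchy biorthogonal polynomials into the sums $\heineintegral_{nm}^{rs}$, and the separate role of $b^*_{\infty}$ via \eqref{eq:b-infty-star-directly} in recovering $x_1$ and $m_1$) is consistent with the cited source. Nothing further is required.
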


\begin{remark}
  \label{rem:solution-alternative-form}
  Let us write $j' = K+1-j$.
  For example,
  $x_{2j'}$ then corresponds to $x_{2K}$, $x_{2K-2}$, $x_{2K-4}$, \ldots,
  as $j=1$, $2$, $3$, \ldots,
  i.e., it is the position of the $j$th of the even-numbered peakons if
  we count them \emph{from the right}.
  Then we can express the formulas in
  \autoref{thm:inverse-spectral-map-formulas} in the following more compact way:
  \begin{equation}
    \label{eq:solution-alternative-position}
    \begin{aligned}
      \tfrac12 \exp 2 x_{2j'}
      &=
      \begin{cases}
        \heineintegral_{11}^{00} + b_{\infty} \heineintegral_{10}^{00}
        \,
        ,&
        j = 1
        ,\\[1.5ex] \displaystyle
        \frac{\heineintegral_{j,j-1}^{00}}{\heineintegral_{j-1,j-2}^{11}}
        ,&
        j = 2,\dots,K
        ,
      \end{cases}
      \\[1ex]
      \tfrac12 \exp 2 x_{2j'-1}
      &=
      \begin{cases}
        \displaystyle
        \frac{\heineintegral_{jj}^{00}}{\heineintegral_{j-1,j-1}^{11}}
        ,&
        j=1,\dots,K-1
        ,\\[1.5em] \displaystyle
        \frac{\heineintegral_{K,K-1}^{00}}{\heineintegral_{K-1,K-2}^{11} + \dfrac{\strut 2 b^*_{\infty} L}{M} \, \heineintegral_{K-1,K-1}^{10}}
        ,&
        j=K
        ,
      \end{cases}
    \end{aligned}
  \end{equation}
  and
  \begin{equation}
    \label{eq:solution-alternative-amplitude}
    \begin{aligned}
      2 n_{2j'} \exp(-x_{2j'})
      &=
      \begin{cases}
        \displaystyle
        \frac{1}{\heineintegral_{10}^{00}}
        ,&
        j = 1
        ,\\[1.5em] \displaystyle
        \frac{\heineintegral_{j-1,j-2}^{11} \heineintegral_{j-1,j-1}^{10}}{\heineintegral_{j-1,j-2}^{01} \heineintegral_{j,j-1}^{01}}
        ,&
        j = 2,\dots,K
        ,
      \end{cases}
      \\[1ex]
      2 m_{2j'-1} \exp(-x_{2j'-1})
      &=
      \begin{cases}
        \displaystyle
        \frac{\heineintegral_{j-1,j-1}^{11} \heineintegral_{j,j-1}^{01}}{\heineintegral_{jj}^{10} \heineintegral_{j-1,j-1}^{10}}
        ,&
        j=1,\dots,K-1
        ,\\[1.5em] \displaystyle
        \frac{M \, \heineintegral_{K-1,K-2}^{11}}{L \, \heineintegral_{K-1,K-1}^{10}} + 2 b^*_{\infty}
        ,&
        j=K
        .
      \end{cases}
    \end{aligned}
  \end{equation}
\end{remark}

\subsection{The forward and inverse spectral map for~$K=1$}
\label{sec:map-K1}

In the case $K=1$, the correspondence between peakon variables
$(x_1,x_2,m_1,n_2)$ 
and spectral variables $(\lambda_1,a_1,b_{\infty},b_{\infty}^*)$
reduces to equation~(4.51)
in~\cite{lundmark-szmigielski:GX-inverse-problem}:
\begin{equation}
  \label{eq:map-K1}
  \begin{split}
    \tfrac12 e^{2 x_2} &= a_1 b_{\infty},\\
    \tfrac12 e^{-2 x_1} &= (2 a_1)^{-1} \lambda_1b^*_{\infty},\\
    2 n_2 e^{-x_2} &= \frac{1}{a_1}, \\
    2 m_1 e^{x_1} &= \frac{2a_1}{\lambda_1}.
  \end{split}
\end{equation}
These formulas define a bijection from the pure peakon sector
(where $m_1 > 0$, $n_2>0$ and $x_1<x_2$)
to the region where $\lambda_1$, $a_1$,
$b_{\infty}$ and~$b_{\infty}^*$ are all positive (as usual)
and in addition satisfy a nonlinear constraint particular to
the case $K=1$:
\begin{equation}
  \label{eq:nonlinear-constraint-K1}
  1 < 2 \, \lambda_1 \, b_{\infty} \, b_{\infty}^*
  .
\end{equation}
The inverse spectral map is immediately found by solving
\eqref{eq:map-K1} for the peakon variables:
\begin{equation}
  \label{eq:peakon-recovery-K1}
  x_1 = \frac12 \ln \frac{a_1}{\lambda_1 \, b_{\infty}^*}
  ,\qquad
  x_2 = \frac12 \ln 2\, a_1 \, b_{\infty}
  ,\qquad
  m_1 = \sqrt{\frac{a_1 \, b_{\infty}^*}{\lambda_1}}
  ,\qquad
  n_2 = \sqrt{\frac{b_{\infty}}{2\, a_1} }
  .
\end{equation}
See also \autoref{sec:dynamics-1},
in particular \autoref{rem:solution-K1-spectral}.

\section{Time dependence of the spectral variables}
\label{sec:time-spectral}

If the positions~$x_i$ and the amplitudes $m_{2a-1}$ and~$n_{2a}$
depend on time,
then the spectral map described in \autoref{sec:preliminaries},
being a bijection,
will induce a time dependence for the spectral variables
\eqref{eq:spectral-parameters} as well.
The point of the spectral map is that it transforms the
complicated time dependence given by the Geng--Xue peakon ODEs
into a very simple dependence for the spectral variables.

\begin{theorem}
  \label{thm:spectral-variables-ode}
  The ODEs \eqref{eq:GX-peakon-ode} for interlacing peakons
  are equivalent to
  the following linear ODEs for the spectral variables:
  \begin{equation}
    \label{eq:spectral-variables-ode}
    \frac{d \lambda_i}{dt} = 0
    ,\quad
    \frac{d a_i}{dt} = \frac{a_i}{\lambda_i}
    ,\quad
    \frac{d \mu_j}{dt} = 0
    ,\quad
    \frac{d b_j}{dt} = \frac{b_j}{\mu_j}
    ,\quad
    \frac{d b_{\infty}}{dt} = 0
    ,\quad
    \frac{d b^*_{\infty}}{dt} = 0
    .
  \end{equation}
  Hence, the variables $\{ \lambda_i, \mu_j, b_{\infty}, b^*_{\infty} \}$
  are constant,
  while $\{ a_i, b_j \}$ have the time dependence
  \begin{equation}
    \label{eq:residues-time-dependence}
    a_i(t) = a_i(0) \, e^{t/\lambda_i}
    ,\qquad
    b_j(t) = b_j(0) \, e^{t/\mu_j}
    .
  \end{equation}
\end{theorem}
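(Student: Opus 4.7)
The plan is to leverage the Lax pair formulation of the Geng--Xue equation. Because the forward spectral map constructed in \autoref{sec:preliminaries} is a bijection between peakon variables and spectral variables, it suffices to establish that the peakon ODEs imply the spectral ODEs; the converse then follows from smoothness of the inverse map given in \autoref{thm:inverse-spectral-map-formulas}.

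First, I would verify that the left boundary condition \eqref{eq:psi-left} is consistent with the time evolution. At $x < x_1$ one has $m=n=0$, $u_x = u$, $v_x = v$, and a short computation of the matrix in \eqref{eq:laxI-t} acting on $(e^x, 0, e^x)^T$ gives zero. Together with the standard compatibility argument (setting $\chi = \psi_t - V\psi$, one finds $\chi_x = U\chi$ from $U_t - V_x + [U,V] = 0$, and $\chi$ vanishes on $x<x_1$ by the boundary condition), this propagates to give $\psi_t = V\psi$ everywhere in~$x$.

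The core calculation is to evaluate $\psi_t = V\psi$ at $x > x_{2K}$, where again $m=n=0$ but now $u = \hat{U}(t) e^{-x}$, $v = \hat{V}(t)e^{-x}$ with $u_x = -u$, $v_x = -v$. Substituting the explicit form \eqref{eq:psi-right} and matching coefficients of $e^x$ and $e^{-x}$ in each row yields: from the first row the constancy $\dot A(\lambda) \equiv 0$ (hence $\dot\lambda_i = 0$), and from the second row an ODE of the form
\begin{equation*}
  \dot B(\lambda) = -\tfrac{1}{\lambda} \bigl(\hat{U}(t)\, A(\lambda) - B(\lambda)\bigr).
\end{equation*}
At $\lambda = \lambda_i$ the inhomogeneous term vanishes since $A(\lambda_i)=0$, giving $\dot B(\lambda_i) = B(\lambda_i)/\lambda_i$. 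Combined with $a_i = -B(\lambda_i)/A'(\lambda_i)$ this yields $\dot a_i = a_i/\lambda_i$. Applying the analogous argument to the second Lax pair \eqref{eq:laxII} produces $\dot \mu_j = 0$ and $\dot b_j = b_j/\mu_j$.

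Finally, for $b_{\infty}$ I would use the identification $b_{\infty} = \lim_{\lambda\to\infty} \twin B(\lambda)/\twin A(\lambda)$ as the ratio of leading coefficients. The tilde analogue of the $B$-equation has right-hand side of degree at most $\lambda^{K-2}$, which forces the $\lambda^{K-1}$ coefficient of $\twin B$ to be constant; since $\twin A$ is already constant, so is $b_{\infty}$. For $b^*_{\infty}$ I would follow \autoref{rem:b-infty-and-star-constants-of-motion} and differentiate the explicit expression \eqref{eq:b-infty-star-directly} directly, substituting from \eqref{eq:GX-peakon-ode} and using $n_1=0$, $m_2=0$ to simplify $u(x_1)$, $v(x_1)$, $u_x(x_1)$, $v_x(x_1)$. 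The main obstacle I anticipate is keeping the asymptotic bookkeeping at $x\to +\infty$ organized so that the correct dominant balance between terms decaying at different exponential rates yields the clean ODEs claimed; once that is in place, extracting the residue evolution via partial fractions is routine.
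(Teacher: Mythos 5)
Your proposal is correct and follows essentially the same route as the paper: verify consistency of the left boundary condition, substitute the asymptotic forms at $x>x_{2K}$ into \eqref{eq:laxI-t} to get $\dot A=0$ and $\partial_t B=(B-AM_+)/\lambda$, extract the residue evolution (the paper does this via $\partial_t W=(M_++W)/\lambda$ rather than by evaluating $\dot B$ at $\lambda_i$, but these are trivially equivalent), repeat for the twin Lax pair, and obtain $\dot b_\infty=0$ from the $\lambda\to\infty$ behaviour and $\dot b_\infty^*=0$ by direct differentiation of \eqref{eq:b-infty-star-directly}. The only cosmetic differences are your explicit appeal to bijectivity for the converse direction and the leading-coefficient phrasing for $b_\infty$.
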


\begin{proof}
  As we carefully verify in \autoref{sec:lax-pair},
  the peakon ODEs \eqref{eq:GX-peakon-ode}
  are equivalent to the Lax equations
  \eqref{eq:laxI-x} and \eqref{eq:laxI-t}
  with $u$ and~$v$ given by the interlacing $K+K$ peakon ansatz
  \eqref{eq:GX-interlacing-peakons}.
  For $x < x_1$ we have
  \begin{equation*}
    u = u_x = M_- \, e^x
    ,\qquad \text{where} \quad
    M_- = \sum_{a=1}^K m_{2a-1} \, e^{-x_{2a-1}}
    ,
  \end{equation*}
  and
  \begin{equation*}
    v = v_x = N_- \, e^x
    ,\qquad \text{where} \quad
    N_- = \sum_{a=1}^K n_{2a} \, e^{-x_{2a}}
    .
  \end{equation*}
  This makes both sides of \eqref{eq:laxI-t} vanish
  when $(\psi_1,\psi_2,\psi_3)=(e^x,0,e^x)$,
  showing that this boundary condition (which was imposed when
  defining the spectral data; see~\eqref{eq:psi-left})
  is indeed consistent with the time evolution.

  For $x > x_{2K}$ we get instead
  \begin{equation*}
    u = -u_x = M_+ \, e^{-x}
    ,\qquad \text{where} \quad
    M_+ = \sum_{a=1}^K m_{2a-1} \, e^{x_{2a-1}}
    ,
  \end{equation*}
  and
  \begin{equation*}
    v = -v_x = N_+ \, e^{-x}
    ,\qquad \text{where} \quad
    N_+ = \sum_{a=1}^K n_{2a} \, e^{x_{2a}}
    .
  \end{equation*}
  Inserting this into \eqref{eq:laxI-t}
  together with the expression \eqref{eq:psi-left} for
  $(\psi_1,\psi_2,\psi_3)$,
  identifying coefficients of the linearly
  independent functions $(e^x,1,e^{-x})$,
  and setting $\lambda = -z^2$,
  we find
  \begin{equation*}
    \frac{\partial A}{\partial t}(\lambda) = 0
    ,\qquad
    \frac{\partial B}{\partial t}(\lambda) = \frac{B(\lambda) - A(\lambda) \, M_+}{\lambda}
    ,\qquad
    \frac{\partial C}{\partial t}(\lambda) = \frac{2 \bigl( B(\lambda) - A(\lambda) \, M_+ \bigr) \, N_+}{\lambda}
    .
  \end{equation*}
  This shows that the polynomial~$A(\lambda)$ is constant in time,
  hence so are its zeros $\lambda_1,\dots,\lambda_K$.
  The time evolution of the Weyl function
  defined in \eqref{eq:W-Wtwin-parfrac},
  \begin{equation*}
    W(\lambda)
    = - \frac{B(\lambda)}{A(\lambda)}
    = \sum_{i=1}^K \frac{a_i}{\lambda - \lambda_i}
    ,
  \end{equation*}
  is
  \begin{equation*}
    \frac{\partial W}{\partial t}(\lambda)
    = \frac{\partial}{\partial t}\left( - \frac{B(\lambda)}{A(\lambda)}\right)
    = -\frac{\frac{\partial B}{\partial t}(\lambda)}{A(\lambda)} 
    = \frac{A(\lambda) \, M_+ - B(\lambda)}{\lambda \, A(\lambda)}
    = \frac{M_+ + W(\lambda)}{\lambda}
    ,
  \end{equation*}
  and taking residues of both sides of this equality
  at $\lambda = \lambda_i$ we obtain
  \begin{equation*}
    \frac{d a_i}{dt} = \frac{a_i}{\lambda_i}
    .
  \end{equation*}
  (As an aside, the residue at $\lambda=0$ is zero, since
  $W(0)=-M_+$; see equation (B.16) in~\cite{lundmark-szmigielski:GX-inverse-problem}.)

  An entirely similar computation, using the other pair of Lax equations
  \eqref{eq:laxII-x} and \eqref{eq:laxII-t},
  shows that
  \begin{equation*}
    \frac{\partial \twin A}{\partial t}(\lambda) = 0
    ,\qquad
    \frac{\partial \twin B}{\partial t}(\lambda) = \frac{\twin B(\lambda) - \twin A(\lambda) \, N_+}{\lambda}
    ,\qquad
    \frac{\partial \twin C}{\partial t}(\lambda) = \frac{2 \bigl( \twin B(\lambda) - \twin A(\lambda) \, N_+ \bigr) \, M_+}{\lambda}
    .
  \end{equation*}
  It follows that
  $\mu_1,\dots,\mu_{K-1}$ are constant in time,
  and that
  \begin{equation*}
    \twin W(\lambda)
    = - \frac{\twin B(\lambda)}{\twin A(\lambda)}
    = -b_{\infty} + \sum_{j=1}^{K-1} \frac{b_j}{\lambda - \mu_j}
  \end{equation*}
  satisfies
  \begin{equation*}
    \frac{\partial \twin W}{\partial t}(\lambda)
    = \frac{N_+ + \twin W(\lambda)}{\lambda}
    ,
  \end{equation*}
  so that
  $\frac{d b_j}{dt} = \frac{b_j}{\mu_j}$.
  Taking the limit $\lambda \to \infty$,
  we may also deduce from this that
  $\frac{d b_{\infty}}{dt} = 0$,
  i.e., $b_{\infty}$ is a constant of motion.
  However, this was already noticed in \autoref{rem:b-infty-and-star-constants-of-motion},
  where we also saw that $b^*_{\infty}$ is a constant of motion.
\end{proof}

\begin{corollary}[Solution formulas for interlacing Geng--Xue peakons]
  \label{cor:peakon-solution-formulas}
  The formulas in \autoref{thm:inverse-spectral-map-formulas}
  give the general solution to the peakon ODEs
  \eqref{eq:GX-peakon-ode} in the interlacing $K+K$ case
  (with $K \ge 2$, and all amplitudes $m_{2a-1}$ and $n_{2a}$ positive),
  if we let the parameters
  \begin{equation*}
    \{ \lambda_i, \mu_j, b_{\infty}, b^*_{\infty} \}
  \end{equation*}
  be constants with
  \begin{equation*}
    0 < \lambda_1 < \dotsb < \lambda_K
    ,\qquad
    0 < \mu_1 < \dotsb < \mu_{K-1}
    ,\qquad
    b_{\infty} > 0
    ,\qquad
    b^*_{\infty} > 0
    ,
  \end{equation*}
  and let $\{ a_i, b_j \}$ have the time dependence
  \begin{equation}
    \label{eq:residues-time-dependence-again}
    a_i(t) = a_i(0) \, e^{t/\lambda_i}
    ,\qquad
    b_j(t) = b_j(0) \, e^{t/\mu_j}
    ,
  \end{equation}
  where $\{ a_i(0), b_j(0) \}$ are positive constants.
  These solutions are globally defined, and satisfy
  $x_1(t) < \dots < x_{2K}(t)$ for all~$t \in \R$.
\end{corollary}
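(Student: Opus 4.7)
The plan is to assemble the corollary from three ingredients already in hand: the ODE equivalence of \autoref{thm:spectral-variables-ode}, the bijectivity of the forward spectral map recalled from \cite{lundmark-szmigielski:GX-inverse-problem}, and the obvious global solvability of the linear system \eqref{eq:spectral-variables-ode}.

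First, I would argue that the formulas do produce solutions. Fix constants $\{\lambda_i,\mu_j,b_\infty,b_\infty^*\}$ and positive initial residues $\{a_i(0),b_j(0)\}$, and define $\{a_i(t),b_j(t)\}$ by \eqref{eq:residues-time-dependence-again}. This is by construction a solution of the spectral ODEs \eqref{eq:spectral-variables-ode}. Feeding the time-dependent spectral data $(\lambda_i,\mu_j,a_i(t),b_j(t),b_\infty,b_\infty^*)$ into the inverse spectral map formulas of \autoref{thm:inverse-spectral-map-formulas} produces time-dependent physical variables $\{x_k(t),m_{2a-1}(t),n_{2a}(t)\}$. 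Since \autoref{thm:spectral-variables-ode} says the peakon ODEs and the spectral ODEs correspond under the spectral map, these physical variables satisfy the peakon ODEs \eqref{eq:GX-peakon-ode}.

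Second, I would argue that these are all solutions. By \cite{lundmark-szmigielski:GX-inverse-problem}, the forward spectral map is a bijection from the interlacing peakon sector \eqref{eq:physical-parameters} onto the set of spectral data satisfying \eqref{eq:spectral-parameters}. So any interlacing peakon solution $\{x_k(t),m_{2a-1}(t),n_{2a}(t)\}$ with positive amplitudes, at each $t$ in its interval of existence, pulls back to spectral data of the required positivity type; by \autoref{thm:spectral-variables-ode} that spectral data evolves by \eqref{eq:residues-time-dependence-again}, and so must arise from the construction above.

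Finally, I would handle global existence and the preservation of ordering. This is the step with any real content, but it is almost immediate from the bijection. The residues $a_i(t)=a_i(0)\,e^{t/\lambda_i}$ and $b_j(t)=b_j(0)\,e^{t/\mu_j}$ remain strictly positive for every $t\in\R$, and $\{\lambda_i,\mu_j,b_\infty,b_\infty^*\}$ are unchanged. Hence for every $t\in\R$ the spectral data lies in the region \eqref{eq:spectral-parameters}, which by bijectivity is precisely the image of the interlacing peakon sector \eqref{eq:physical-parameters}. Therefore the inverse spectral map at time $t$ yields a bona fide interlacing configuration with $x_1(t)<\dots<x_{2K}(t)$ and $m_{2a-1}(t),n_{2a}(t)>0$. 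In particular the solution never leaves the interlacing sector, no collisions occur, and there is no finite-time blowup, so the solution is defined on all of~$\R$.

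The main obstacle, which we are essentially importing from \cite{lundmark-szmigielski:GX-inverse-problem}, is the positivity content of the bijection: namely, that positivity of all residues $a_i,b_j$ together with positivity of $b_\infty,b_\infty^*$ forces the inverse-spectral formulas of \autoref{thm:inverse-spectral-map-formulas} to produce ordered $x_1<\dots<x_{2K}$ and positive masses. Once that is taken as given, everything above is bookkeeping: the exponential formulas \eqref{eq:residues-time-dependence-again} visibly preserve positivity of the $a_i$'s and $b_j$'s for all real $t$, so global existence and preservation of interlacing are automatic.
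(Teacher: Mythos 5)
Your proposal is correct and is exactly the argument the paper intends: the corollary is stated without proof precisely because it follows by combining the bijectivity of the spectral map (imported from the earlier article via \autoref{thm:inverse-spectral-map-formulas}) with the linearization of the flow in \autoref{thm:spectral-variables-ode}, and observing that the exponential evolution \eqref{eq:residues-time-dependence-again} keeps the spectral data in the positive sector \eqref{eq:spectral-parameters} for all $t\in\R$, whence ordering and positivity of the reconstructed peakon data persist globally. You also correctly isolate the one nontrivial imported ingredient, namely that the bijection carries the positive spectral sector exactly onto the ordered interlacing pure-peakon sector.
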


\begin{remark}
  The solution for the case $K=1$
  is derived in \autoref{sec:dynamics-1} below.
\end{remark}

\section{Dynamics of $1+1$ peakon solutions}
\label{sec:dynamics-1}

With all the preliminaries out of the way,
we can finally begin analyzing the properties of
the $K+K$ interlacing peakon solutions of the Geng--Xue equation.
The governing ODEs are \eqref{eq:GX-peakon-ode},
and we have explicit formulas for the general solution for any~$K$,
as described in \autoref{cor:peakon-solution-formulas}.
However, these formulas are fairly involved,
so we will warm up by first looking at the case $K=1$
(which is somewhat exceptional) in this section,
and the case $K=2$ in \autoref{sec:dynamics-2}.
The general case $K \ge 2$ will be treated in \autoref{sec:dynamics-K}.

The ODEs governing $1+1$ peakon solutions
\begin{equation*}
  u(x,t) = m_1(t) \, e^{-\abs{x-x_1(t)}}
  ,\qquad
  v(x,t) = n_2(t) \, e^{-\abs{x-x_2(t)}}
\end{equation*}
with $x_1(t) < x_2(t)$
are
\begin{equation}
  \label{eq:GX-peakon-ode-K1}
  \dot x_1 = \dot x_2
  = \frac{\dot m_1}{m_1} = - \frac{\dot n_2}{n_2}
  = m_1 n_2 E_{12}
  ,
\end{equation}
where $E_{12} = e^{x_1-x_2}$.
It is immediately verified by differentiation that
$m_1 n_2 E_{12}$ is a constant of motion;
denote its value by~$c$.
(If we impose the pure peakon assumption that $m_1>0$ and $n_2>0$,
then obviously $c>0$.)
Direct integration then gives the solution
\begin{equation}
  \label{eq:solution-K1}
  \begin{aligned}
    x_1(t) &= x_1(0) + ct
    ,\\
    x_2(t) &= x_2(0) + ct
    ,\\
    m_1(t) &= m_1(0) \, e^{ct}
    ,\\
    n_2(t) &= n_2(0) \, e^{-ct}
    ,
  \end{aligned}
\end{equation}
with constants
\begin{equation*}
  x_1(0) < x_2(0)
  ,\qquad
  m_1(0) > 0
  ,\qquad
  n_2(0) > 0
  .
\end{equation*}
So the two peakons travel with the same (constant) velocity
\begin{equation*}
  c = m_1(0) \, m_2(0) \, e^{x_1(0)-x_2(0)}
  .
\end{equation*}
As $t \to \infty$, the amplitude~$m_1$ of the left peakon tends to infinity
and the amplitude~$n_2$ of the right peakon tends to zero,
in such a way that the product $m_1 n_2$ stays constant,
and the other way around as $t \to -\infty$.

The actual peakon wave profiles are
\begin{equation*}
  u(x,t) = m_1(t) \, e^{-\abs{x-x_1(t)}}
  =
  \begin{cases}
    m_1(0) \, e^{x - x_1(0)}
    ,&
    x \le x_1(0) + ct
    ,
    \\
    m_1(0) \, e^{2ct + x_1(0) - x}
    ,&
    x \ge x_1(0) + ct
    ,
  \end{cases}
\end{equation*}
and
\begin{equation*}
  v(x,t) = n_2(t) \, e^{-\abs{x-x_2(t)}}
  =
  \begin{cases}
    n_2(0) \, e^{-2ct + x - x_2(0)}
    ,&
    x \le x_2(0) + ct
    ,
    \\
    n_2(0) \, e^{x_2(0) - x}
    ,&
    x \ge x_2(0) + ct
    .
  \end{cases}
\end{equation*}
This means that the function $u(x,t) \, v(x,t)$ will
be a stump-shaped travelling wave with velocity~$c$:
\begin{equation*}
  u(x,t) \, v(x,t)
  =
  \begin{cases}
    c \, e^{2(x-ct - x_1(0))}
    ,&
    x \le x_1(0) + ct
    ,
    \\
    c
    ,&
    x_1(0) + ct \le x \le x_2(0) + ct
    ,
    \\
    c \, e^{-2(x-ct - x_2(0))}
    ,&
    x \ge x_2(0) + ct
    .
  \end{cases}
\end{equation*}

\begin{remark}
  \label{rem:1+1-peakon-antipeakon}

  Here in the case $K=1$, it is not really necessary to assume that $m_1$ and~$n_2$
  are positive. If we allow negative amplitudes (antipeakons),
  the solution will still be given by the same formulas and it will be
  globally defined; the only difference is that the constant of motion
  $c = m_1 n_2 e^{x_1-x_2}$ will be negative if $m_1$ and~$n_2$
  have opposite signs.

  However, for $K \ge 2$, mixed peakon--antipeakon solutions
  may exhibit collisions and finite-time blowup; see
  \autoref{sec:2+2-peakon-antipeakon}.
\end{remark}

\begin{remark}
  For Camassa--Holm and Degasperis--Procesi peakons,
  the simplest integral of motion is $\sum_{i=1}^N m_i$,
  corresponding to the conserved quantity
  $\int_{\R} m \, dx$.
  For Novikov peakons, the simplest integral of motion is
  \begin{equation*}
    \sum_{i,j=1}^N m_i m_j E_{ij}
    = \sum_{i,j=1}^N m_i m_j e^{-\abs{x_i-x_j}}
    ,
  \end{equation*}
  and Geng--Xue peakons admit two analogous integrals of motion:
  \begin{equation*}
    \begin{split}
      {\mathcal M} &=
      \sum_{1 \le i<j \le N} m_i n_j E_{ij}
      \\ &
      = m_1 n_2 E_{12} + m_1 n_4 E_{14} + \dots + m_1 n_{2K} E_{1,2K}
      + \dots + m_{2K-1} n_{2K} E_{2K-1,2K}
    \end{split}
  \end{equation*}
  and
  \begin{equation*}
    \begin{split}
      \twin{\mathcal M} &=
      \sum_{1 < j < i < N} n_j m_i E_{ji}
      \\ &
      = n_2 m_3 E_{23} + n_2 m_5 E_{25} + \dots + n_2 m_{2K-1} E_{2,2K -1}
      + \dots + n_{2K-2} m_{2K-1} E_{2K-2,2K-1}
      ,
    \end{split}
  \end{equation*}
  In the case $K=1$, clearly $\mathcal{M}$ reduces to the constant of
  motion $m_1 n_2 E_{12}$ that we saw above, while
  $\twin{\mathcal M}$ is identically zero.
\end{remark}

\begin{remark}
  \label{rem:solution-K1-spectral}
  As we just saw, the $1+1$ peakon solution is easy to find
  directly, but it is reassuring to check that the inverse spectral map
  from \autoref{sec:map-K1} works correctly here as well.
  Inserting the time dependence $a_1(t) = a_1(0) \, e^{t/\lambda_1}$ into
  \eqref{eq:peakon-recovery-K1} yields
  \begin{equation}
    \label{eq:peakon-solution-K1-spectral}
    \begin{aligned}
      x_1(t) &= \frac{t}{2\lambda_1} +\frac12 \ln \frac{a_1(0)}{\lambda_1 \, b_{\infty}^*}
      ,\\
      x_2(t) &= \frac{t}{2\lambda_1} +\frac12 \ln 2\, a_1(0) \, b_{\infty}
      ,\\
      m_1(t) &= \sqrt{\frac{a_1(0) \, b_{\infty}^*}{\lambda_1}} \, \exp\left( \frac{t}{2\lambda_1} \right)
      ,\\
      n_2(t) &= \sqrt{\frac{b_{\infty}}{2\, a_1(0)} } \, \exp\left( -\frac{t}{2\lambda_1} \right)
      ,
    \end{aligned}
  \end{equation}
  which is just another way of writing the
  solution~\eqref{eq:solution-K1};
  in particular, the velocity~$c$ corresponds to~$(2 \lambda_1)^{-1}$.
  The distance
  \begin{equation*}
    x_2(t) - x_1(t) = \tfrac12 \ln 2\, \lambda_1 \, b_{\infty} \, b_{\infty}^*
  \end{equation*}
  is positive due to the constraint~\eqref{eq:nonlinear-constraint-K1}.
\end{remark}

\section{Dynamics of $1+1$ shockpeakon solutions}
\label{sec:dynamics-shock-1}

We will continue the study of interlacing peakon solutions
in \autoref{sec:dynamics-2} below,
but first we will show how to integrate the $1+1$ shockpeakon ODEs
which were given in \autoref{ex:shockpeakons-1+1},
and are repeated here for convenience:
\begin{equation}
  \label{eq:shockpeakon-odes-1+1}
  \begin{aligned}
    \dot x_1 &= m_1 (n_2+r_2) E_{12}
    ,\\
    \dot x_2 &= (m_1-s_1) n_2 E_{12}
    ,\\
    \dot m_1 &= (m_1^2 - m_1 s_1 + s_1^2) (n_2+r_2) E_{12}
    ,\\
    \dot n_2 &= -(m_1-s_1) (n_2^2 + n_2 r_2 + r_2^2) E_{12}
    ,\\
    \dot s_1 &= s_1 (2m_1-s_1) (n_2+r_2) E_{12}
    ,\\
    \dot r_2 &= -r_2 (m_1-s_1) (2n_2+r_2) E_{12}
    ,
  \end{aligned}
\end{equation}
where $E_{12} = e^{-\abs{x_1-x_2}} = e^{x_1-x_2}$.
We will restrict ourselves to looking for solutions with
$s_1 \ge 0$ and $r_2 \ge 0$
(cf.~\autoref{rem:entropy-condition}).
Also recall that in writing down the ODEs,
we assumed that $x_1(t) < x_2(t)$.
If we take initial data $x_1(0) < x_2(0)$, this assumption will at least
continue to hold in some neighbourhood of $t=0$,
but in general not globally (see \autoref{rem:overlapping-at-collision});
the solution formulas derived below will only be valid until the time of the
first collision $x_1(t) = x_2(t)$.

To begin with, it is straightforward to verify, simply by differentiating
the expressions with respect to~$t$
and using the ODEs~\eqref{eq:shockpeakon-odes-1+1},
that the quantities
\begin{equation}
  K = (m_1-s_1)(n_2+r_2) E_{12}
  ,\qquad
  M = (m_1+s_1) e^{-x_1}
  ,\qquad
  N = (n_2-r_2) e^{x_2}
\end{equation}
are constants of motion,
and that the quantities
\begin{equation}
  S = s_1 e^{-x_1}
  ,\qquad
  R = r_2 e^{x_2}
\end{equation}
have the time dependence
\begin{equation*}
  \dot S = KS
  ,\qquad
  \dot R = -KR
  .
\end{equation*}
Since $K$ is constant, this means that
\begin{equation}
  S(t) = S(0) \, e^{Kt}
  ,\qquad
  R(t) = R(0) \, e^{-Kt}
  .
\end{equation}
In the same way one may check that the quantities
\begin{equation}
  X = (m_1-s_1) e^{x_1}
  ,\qquad
  Y = (n_2+r_2) e^{-x_2}
\end{equation}
satisfy
\begin{equation*}
  \dot X = 2KX
  ,\qquad
  \dot Y = -2KY
  ,
\end{equation*}
so that
\begin{equation}
  X(t) = X(0) \, e^{2Kt}
  ,\qquad
  Y(t) = Y(0) \, e^{-2Kt}
  .
\end{equation}

\begin{subequations}
  \label{eq:shock-solution-generic}
  The fact that $X(t)$ keeps its sign
  means that if $m_1(0) - s_1(0) \neq 0$, then $m_1(t) - s_1(t) \neq 0$
  for all~$t$ such that the solution remains valid,
  i.e., up until the first collision.
  For these~$t$ we then get
  \begin{equation*}
    \begin{split}
      e^{2 x_1(t)}
      &
      = \frac{\bigl( m_1(t) - s_1(t) \bigr) \, e^{x_1(t)}}{\bigl( m_1(t) - s_1(t) \bigr) \, e^{-x_1(t)}}
      = \frac{X(t)}{M - 2 S(t)}
      \\ &
      = \frac{\bigl( m_1(0) - s_1(0) \bigr) \, e^{x_1(0)} \, e^{2Kt}}{\bigl( m_1(0) + s_1(0) \bigr) \, e^{-x_1(0)} - 2 s_1(0) \, e^{-x_1(0)} \, e^{Kt}}
      \\ &
      = \frac{e^{2 x_1(0)} \, \bigl( m_1(0) - s_1(0) \bigr) \, e^{Kt}}{\bigl( m_1(0) + s_1(0) \bigr) \, e^{-Kt} - 2 s_1(0)}
      ,
    \end{split}
  \end{equation*}
  and hence
  \begin{equation}
    \label{eq:shock-solution-x1}
    x_1(t)
    = x_1(0)
    + \frac{Kt}{2}
    + \frac12 \ln\frac{m_1(0) - s_1(0)}{\bigl( m_1(0) + s_1(0) \bigr) \, e^{-Kt} - 2 s_1(0)}
    .
  \end{equation}
  Similarly, if $n_2(0) + r_2(0) \neq 0$, we find
  \begin{equation*}
    e^{-2 x_2(t)}
    = \frac{\bigl( n_2(t) + r_2(t) \bigr) \, e^{-x_2(t)}}{\bigl( n_2(t) + r_2(t) \bigr) \, e^{x_2(t)}}
    = \frac{Y(t)}{N + 2 R(t)}
    = \frac{e^{-2 x_2(0)} \, \bigl( n_2(0) + r_2(0) \bigr) \, e^{-Kt}}{\bigl( n_2(0) - r_2(0) \bigr) \, e^{Kt} + 2 r_2(0)}
    ,
  \end{equation*}
  so that
  \begin{equation}
    \label{eq:shock-solution-x2}
    x_2 (t)
    = x_2(0)
    + \frac{Kt}{2}
    - \frac12 \ln\frac{n_2(0) + r_2(0)}{\bigl( n_2(0) - r_2(0) \bigr) \, e^{Kt} + 2 r_2(0)}
    .
  \end{equation}
  From this we obtain
  \begin{equation}
    \label{eq:shock-solution-s1}
    \begin{split}
      s_1(t)
      &
      = S(t) \, e^{x_1(t)}
      = s_1(0) \, e^{-x_1(0)} \, e^{Kt} \, e^{x_1(t)}
      \\ &
      = s_1(0) \,  \, e^{3Kt/2}
      \sqrt{\frac{\bigl( m_1(0) - s_1(0) \bigr)}{\bigl( m_1(0) + s_1(0) \bigr) \, e^{-Kt} - 2 s_1(0)}}
    \end{split}
  \end{equation}
  and
  \begin{equation}
    \label{eq:shock-solution-r2}
    \begin{split}
      r_2(t)
      &
      = R(t) \, e^{-x_2(t)}
      = r_2(0) \, e^{x_2(0)} \, e^{-Kt} \, e^{-x_2(t)}
      \\ &
      = r_2(0) \,  \, e^{-3Kt/2}
      \sqrt{\frac{\bigl( n_2(0) + r_2(0) \bigr)}{\bigl( n_2(0) - r_2(0) \bigr) \, e^{Kt} + 2 r_2(0)}}
      .
    \end{split}
  \end{equation}
  Finally,
  \begin{equation}
    \label{eq:shock-solution-m1}
    \begin{split}
      m_1(t)
      &
      = M \, e^{x_1(t)} - s_1(t)
      = \bigl( m_1(0) + s_1(0) \bigr) \, e^{-x_1(0)} \, e^{x_1(t)} - s_1(t)
      \\ &
      = 
      \Bigl( \bigl( m_1(0) + s_1(0) \bigr) e^{Kt/2}
      - s_1(0) \,  \, e^{3Kt/2} \Bigr)
      \sqrt{\frac{\bigl( m_1(0) - s_1(0) \bigr)}{\bigl( m_1(0) + s_1(0) \bigr) \, e^{-Kt} - 2 s_1(0)}}
    \end{split}
  \end{equation}
  and
  \begin{equation}
    \label{eq:shock-solution-n2}
    \begin{split}
      n_2(t)
      &
      = N \, e^{-x_2(t)} + r_2(t)
      = \bigl( n_2(0) - r_2(0) \bigr) \, e^{x_2(0)} \, e^{-x_2(t)} + r_2(t)
      \\ &
      = 
      \Bigl( \bigl( n_2(0) - r_2(0) \bigr) e^{-Kt/2}
      + r_2(0) \,  \, e^{-3Kt/2} \Bigr)
      \sqrt{\frac{\bigl( n_2(0) + r_2(0) \bigr)}{\bigl( n_2(0) - r_2(0) \bigr) \, e^{Kt} + 2 r_2(0)}}
      .
    \end{split}
  \end{equation}
\end{subequations}
The formulas \eqref{eq:shock-solution-generic}
give the solution of~\eqref{eq:GX-shockpeakon-ode} in the generic case
$m_1(0) - s_1(0) \neq 0$ and $n_2(0) + r_2(0) \neq 0$.

\begin{subequations}
  \label{eq:shock-solution-caseA}
  If $m_1(0)-s_1(0)=0$,
  we get instead $X(0)=0$,
  hence $X(t)=0$, meaning that $m_1(t) = s_1(t)$ for all~$t$.
  According to the ODEs~\eqref{eq:GX-shockpeakon-ode},
  this immediately implies that $\dot x_2 = \dot n_2 = \dot r_2 = 0$,
  so the second shockpeakon doesn't move or change at all.
  For the first shockpeakon we then have
  \begin{equation*}
    \begin{aligned}
      \dot x_1(t) &= m_1(t) \, \bigl( n_2(0) + r_2(0) \bigr) e^{x_1(t)} e^{-x_2(0)}
      = Y(0) \, m_1(t) \, e^{x_1(t)}
      ,\\
      \dot m_1(t) &= m_1(t)^2 \, \bigl( n_2(0) + r_2(0) \bigr) e^{x_1(t)} e^{-x_2(0)}
      = Y(0) \, m_1(t)^2 \, e^{x_1(t)}
      .
    \end{aligned}
  \end{equation*}
  With $m_1=s_1$,
  the constant of motion $M$ reduces to $M = 2 m_1 \, e^{-x_1}$,
  which is positive because we are assuming that $s_1 \ge 0$
  and $(m_1,s_1) \neq (0,0)$.
  Thus,
  \begin{equation*}
    \dot m_1(t)
    = \frac{2 Y(0)}{M} \, m_1(t)^3
    = \frac{A}{2 m_1(0)^2} \, m_1(t)^3
    ,
  \end{equation*}
  where
  \begin{equation}
    \label{eq:shock-solution-A-caseA}
    A
    = \frac{4 Y(0) \, m_1(0)^2}{M}
    = 2 m_1(0) \, \bigl( n_2(0) + r_2(0) \bigr) \, e^{x_1(0)-x_2(0)}
    ,
  \end{equation}
  which gives
  \begin{equation}
    \label{eq:shock-solution-m1-s1-caseA}
    m_1(t) = s_1(t)
    = \frac{m_1(0)}{\sqrt{1 - A t}}
    ,
  \end{equation}
  and consequently
  \begin{equation*}
    e^{x_1(t)}
    = \frac{2 m_1(t)}{M}
    = \frac{2 m_1(t)}{2 m_1(0) \, e^{-x_1(0)}}
    = \frac{e^{x_1(0)}}{\sqrt{1 - A t}}
    ,
  \end{equation*}
  so that
  \begin{equation}
    \label{eq:shock-solution-x1-caseA}
    x_1(t) = x_1(0) - \tfrac12 \ln(1 - A t)
    .
  \end{equation}
\end{subequations}

\begin{subequations}
  \label{eq:shock-solution-caseB}
  In the case $n_2(0)+r_2(0)=0$,
  a similar computation shows that $\dot x_1 = \dot m_1 = \dot s_1 = 0$,
  \begin{equation}
    \label{eq:shock-solution-n2-r2-caseB}
    n_2(t) = -r_2(t) = \frac{n_2(0)}{\sqrt{1 + Bt}}
    ,
  \end{equation}
  and
  \begin{equation}
    \label{eq:shock-solution-x2-caseB}
    x_2(t) = x_2(0) + \tfrac12 \ln(1 + B t)
  \end{equation}
  where
  \begin{equation}
    \label{eq:shock-solution-B-caseB}
    B = 2 \bigl( m_1(0) - s_1(0) \bigr) \, n_2(0) \, e^{x_1(0)-x_2(0)}
    .
  \end{equation}
\end{subequations}

\begin{remark}
  Note that if both conditions
  $m_1(0)-s_1(0)=0$ and $n_2(0)+r_2(0)=0$ hold at the same time,
  then the solution is completely time-independent:
  \begin{equation*}
    \dot x_1 = \dot m_1 = \dot s_1 = \dot x_2 = \dot n_2 = \dot r_2 = 0
    .
  \end{equation*}
\end{remark}

\begin{remark}
  We may also point out that $\dot x_2 = K$ if $s_1=0$,
  i.e., if the first shockpeakon is in fact an ordinary peakon,
  then the second shockpeakon travels with constant speed.
  Similarly, $\dot x_1 = K$ if $r_2=0$.
  As a special case, when $s_1=r_2=0$, we recover the result from the
  \autoref{sec:dynamics-1} that in the $1+1$ peakon solution,
  both peakons travel with equal and constant speed.
\end{remark}

\begin{remark}
  \label{rem:overlapping-at-collision}
  The formulas above show
  that in many cases there will be a collision $x_1(t)=x_2(t)$
  after finite time,
  with $(m_1,s_1) \neq (0,0)$ and $(n_2,r_2) \neq (0,0)$
  at the instant of collision.
  For example, if $B < 0$ in \eqref{eq:shock-solution-x2-caseB},
  then $x_2(t) \to -\infty$ as $t \to (1/B)^-$,
  so by continuity there must be a collision time $t_0 \in (0,1/B)$
  such that $x_2(t_0)$ equals the constant $x_1(t)=x_1(0)=x_1(t_0)$.
  Thus, the non-overlapping condition is \emph{not} preserved
  at shockpeakon--shockpeakon collisions.
  (This is in contrast to the peakon--antipeakon collision scenario
  described in~\autoref{sec:2+2-peakon-antipeakon}.)
  It is not clear to us at present if, and in that case how,
  such a solution can be continued past the collision.
\end{remark}

\section{Dynamics of $2+2$ interlacing pure peakon solutions}
\label{sec:dynamics-2}

In this section we leave shockpeakons and return to the interlacing
pure peakon solutions.
We take a detailed look at the case $K=2$,
as a preparation for the general case $K \ge 2$ treated in
\autoref{sec:dynamics-K}.
Except for \autoref{sec:2+2-peakon-antipeakon},
we will only consider pure peakon solutions,
i.e., we will assume that the amplitudes $m_1$, $n_2$, $m_3$, $n_4$
are all positive.

\subsection{Governing ODEs and explicit solution formulas}

The $2+2$ interlacing peakon solutions of the Geng--Xue equation
are governed by the ODEs
\begin{equation}
  \label{eq:GX-peakon-ode-K2}
  \begin{split}
    \dot x_1 &= (m_1 + m_3 E_{13}) (n_2 E_{12} + n_4 E_{14})
    , \\
    \dot x_2 &= (m_1 E_{12} + m_3 E_{23}) (n_2 + n_4 E_{24})
    , \\
    \dot x_3 &= (m_1 E_{13} + m_3) (n_2 E_{23} + n_4 E_{34})
    , \\
    \dot x_4 &= (m_1 E_{14} + m_3 E_{34}) (n_2 E_{24} + n_4)
    , \\
    \frac{\dot m_1}{m_1} &= (m_1 + m_3 E_{13}) (n_2 E_{12} + n_4 E_{14}) - 2 m_3 E_{13} (n_2 E_{12} + n_4 E_{14})
    , \\
    \frac{\dot n_2}{n_2} &= (-m_1 E_{12} + m_3 E_{23}) (n_2 + n_4 E_{24}) - 2 (m_1 E_{12} + m_3 E_{23}) n_4 E_{24}
    , \\
    \frac{\dot m_3}{m_3} &= (m_1 E_{13} + m_3) (-n_2 E_{23} + n_4 E_{34}) + 2 m_1 E_{13} (n_2 E_{23} + n_4 E_{34})
    , \\
    \frac{\dot n_4}{n_4} &= (-m_1 E_{14} - m_3 E_{34}) (n_2 E_{24} + n_4) + 2 (m_1 E_{14} + m_3 E_{34}) n_2 E_{24}
    ,
  \end{split}
\end{equation}
where $E_{ij} = e^{-\abs{x_i-x_j}} = e^{x_i-x_j}$ for $i<j$.
It is apparent that for $K=2$ the equations are already much more
complicated than for $K=1$;
cf. equation~\eqref{eq:GX-peakon-ode-K1}.
There seems to be little hope of integrating this system
by any direct methods,
so we proceed without further ado to the general solution formulas
derived by inverse spectral methods (\autoref{cor:peakon-solution-formulas}).
These formulas are stated in terms of
the sums $\heineintegral_{nm}^{rs}$ defined in
\autoref{sec:inverse-spectral-map},
but here in the case $K=2$
the expressions are still small enough to
allow us to write out in detail
what these sums actually are.
Then the sought quantities
\begin{equation*}
  x_1(t), \, x_2(t), \, x_3(t), \, x_4(t), \,
  m_1(t), \, n_2(t), \, m_3(t), \, n_4(t)
\end{equation*}
will be directly expressed in terms of the spectral variables
\begin{equation*}
  \lambda_1, \, \lambda_2, \, \mu_1, \,
  a_1, \, a_2, \, b_1, \, b_{\infty}, \, b_{\infty}^*,
\end{equation*}
where $\lambda_1$, $\lambda_2$, $\mu_1$, $b_{\infty}$ and~$b_{\infty}^*$
are arbitrary positive constants with $\lambda_1 < \lambda_2$,
and where $a_1$, $a_2$ and~$b_1$ have the time dependence
\begin{equation}
  \label{eq:solution-K2-timedependence}
  a_1(t) = a_1(0) \, e^{t/\lambda_1}
  ,\qquad
  a_2(t) = a_2(0) \, e^{t/\lambda_2}
  ,\qquad
  b_1(t) = b_1(0) \, e^{t/\mu_1}
  ,
\end{equation}
with arbitrary positive constants $a_1(0)$, $a_2(0)$ and~$b_1(0)$.
(This time dependence will mostly be suppressed in the notation;
whenever we write just $a_k$, we mean $a_k(t)$.)

In terms of the quantities from \autoref{rem:solution-alternative-form},
the general solution of \eqref{eq:GX-peakon-ode-K2} is then given
(completely explicitly in terms of elementary functions) by
\begin{subequations}
  \label{eq:solution-K2}
\begin{equation}
  \label{eq:solution-K2-position}
  \begin{split}
    \tfrac12 e^{2 x_4} &= I_{00} + b_{\infty} \alpha_0 = \frac{a_1 b_1}{\lambda_1+\mu_1} + \frac{a_2 b_1}{\lambda_2+\mu_1} + b_{\infty} (a_1 + a_2),\\
    \tfrac12 e^{2 x_3} &= \frac{\heineintegral_{11}^{00}}{\heineintegral_{00}^{11}} = \frac{I_{00}}{1} = \frac{a_1 b_1}{\lambda_1+\mu_1} + \frac{a_2 b_1}{\lambda_2+\mu_1},\\
    \tfrac12 e^{2 x_2} &= \frac{\heineintegral_{21}^{00}}{\heineintegral_{10}^{11}} = \frac{\dfrac{\bigl( \lambda_1-\lambda_2 \bigr)^2}{(\lambda_1+\mu_1)(\lambda_2+\mu_1)} a_1 a_2 b_1}{\lambda_1 a_1 + \lambda_2 a_2},\\
    \tfrac12 e^{2 x_1} &= \frac{\heineintegral_{21}^{00}}{\heineintegral_{10}^{11} + \dfrac{\strut 2 b^*_{\infty} L}{M} \, \heineintegral_{11}^{10}} \\ &= \frac{\dfrac{\bigl( \lambda_1-\lambda_2 \bigr)^2}{(\lambda_1+\mu_1)(\lambda_2+\mu_1)} a_1 a_2 b_1}{\lambda_1 a_1 + \lambda_2 a_2 + \dfrac{2 b_{\infty}^* \lambda_1 \lambda_2}{\mu_1} \left( \dfrac{\lambda_1 a_1 b_1}{\lambda_1+\mu_1} + \dfrac{\lambda_2 a_2 b_1}{\lambda_2+\mu_1} \right)}
  \end{split}
\end{equation}
and
\begin{equation}
  \label{eq:solution-K2-amplitude}
  \begin{split}
    2 n_4 e^{-x_4} &= \frac{1}{\alpha_0} = \frac{1}{a_1 + a_2}, \\
    2 m_3 e^{-x_3} &= \frac{\heineintegral_{00}^{11} \heineintegral_{10}^{01}}{\heineintegral_{11}^{10} \heineintegral_{00}^{10}} = \frac{1 \cdot \heineintegral_{10}^{01}}{\heineintegral_{11}^{10} \cdot 1} = \frac{a_1 + a_2}{\dfrac{\lambda_1 a_1 b_1}{\lambda_1+\mu_1} + \dfrac{\lambda_2 a_2 b_1}{\lambda_2+\mu_1}}, \\
    2 n_2 e^{-x_2} &= \frac{\heineintegral_{10}^{11} \heineintegral_{11}^{10}}{\heineintegral_{10}^{01} \heineintegral_{21}^{01}} = \frac{\left( \lambda_1 a_1 + \lambda_2 a_2 \right) \left( \dfrac{\lambda_1 a_1 b_1}{\lambda_1+\mu_1} + \dfrac{\lambda_2 a_2 b_1}{\lambda_2+\mu_1} \right)}{\left( a_1 + a_2 \right) \dfrac{\mu_1 \bigl( \lambda_1-\lambda_2 \bigr)^2}{(\lambda_1+\mu_1)(\lambda_2+\mu_1)} a_1 a_2 b_1}, \\
    2 m_1 e^{-x_1} &= \frac{M \, \heineintegral_{10}^{11}}{L \, \heineintegral_{11}^{10}} + 2 b^*_{\infty} = \frac{\mu_1 \bigl( \lambda_1 a_1 + \lambda_2 a_2 \bigr)}{\lambda_1 \lambda_2 \left( \dfrac{\lambda_1 a_1 b_1}{\lambda_1+\mu_1} + \dfrac{\lambda_2 a_2 b_1}{\lambda_2+\mu_1} \right)} + 2 b_{\infty}^*.
  \end{split}
\end{equation}
\end{subequations}

\subsection{Asymptotics as $t \to +\infty$}

We will now derive formulas for the large time
asymptotics of the $2+2$ interlacing peakon solution~\eqref{eq:solution-K2}.
We remind the reader
that these features were illustrated with graphics
in \autoref{ex:GX-interlacing-2+2},
and it may be helpful to revisit that example at this point.

We begin with the case $t \to +\infty$.
In this case,
the factors $a_1(t)$, $a_2(t)$ and~$b_1(t)$ will all diverge to~$+\infty$,
and $a_1(t)$ will grow much faster than $a_2(t)$
since we are assuming that $0 < \lambda_1 < \lambda_2$;
indeed, setting
\begin{equation*}
  \delta = \frac{1}{\lambda_1} - \frac{1}{\lambda_2} > 0
\end{equation*}
we have
\begin{equation*}
  \frac{a_2(t)}{a_1(t)}
  = \frac{a_2(0) \, e^{t/\lambda_2}}{a_1(0) \, e^{t/\lambda_1}}
  = \frac{a_2(0)}{a_1(0)} \, e^{-\delta t}
  \to 0
  ,\qquad
  t \to +\infty
  .
\end{equation*}
Factoring out the dominant terms, we get
\begin{equation}
  \label{eq:solution-K2-position-factor-out}
  \begin{split}
    \tfrac12 e^{2 x_4} &=
    a_1 b_1 \Biggl(
    \frac{1}{\lambda_1+\mu_1} + \frac{\tfrac{a_2}{a_1}}{\lambda_2+\mu_1}
    + b_{\infty} \bigl( \tfrac{1}{b_1} + \tfrac{a_1}{a_2 b_1} \bigr)
    \Biggr)
    ,\\
    \tfrac12 e^{2 x_3} &=
    a_1 b_1 \Biggl(
    \frac{1}{\lambda_1+\mu_1} + \frac{\tfrac{a_2}{a_1}}{\lambda_2+\mu_1}
    \Biggr)
    ,\\
    \tfrac12 e^{2 x_2} &=
    a_2 b_1 \cdot
    \frac{\dfrac{\bigl( \lambda_1-\lambda_2 \bigr)^2}{(\lambda_1+\mu_1)(\lambda_2+\mu_1)}}{\lambda_1 + \lambda_2 \tfrac{a_2}{a_1}}
    ,\\
    \tfrac12 e^{2 x_1} &=
    a_2 \cdot
    \frac{\dfrac{\bigl( \lambda_1-\lambda_2 \bigr)^2}{(\lambda_1+\mu_1)(\lambda_2+\mu_1)}}{\lambda_1 \tfrac{1}{b_1} + \lambda_2 \tfrac{a_2}{a_1 b_1} + \dfrac{2 b_{\infty}^* \lambda_1 \lambda_2}{\mu_1} \left( \dfrac{\lambda_1}{\lambda_1+\mu_1} + \dfrac{\lambda_2 \frac{a_2}{a_1}}{\lambda_2+\mu_1} \right)}
  \end{split}
\end{equation}
and
\begin{equation}
  \label{eq:solution-K2-amplitude-factor-out}
  \begin{split}
    2 n_4 e^{-x_4} &=
    \frac{1}{a_1} \cdot \frac{1}{1 + \tfrac{a_2}{a_1}}
    , \\
    2 m_3 e^{-x_3} &=
    \frac{1}{b_1} \cdot
    \frac{1 + \tfrac{a_2}{a_1}}{\dfrac{\lambda_1}{\lambda_1+\mu_1} + \dfrac{\lambda_2 \tfrac{a_2}{a_1}}{\lambda_2+\mu_1}}
    , \\
    2 n_2 e^{-x_2} &=
    \frac{1}{a_2} \cdot
    \frac{\left( \lambda_1 + \lambda_2 \tfrac{a_2}{a_1} \right) \left( \dfrac{\lambda_1}{\lambda_1+\mu_1} + \dfrac{\lambda_2 \tfrac{a_2}{a_1}}{\lambda_2+\mu_1} \right)}{\left( 1 + \tfrac{a_2}{a_1} \right) \dfrac{\mu_1 \bigl( \lambda_1-\lambda_2 \bigr)^2}{(\lambda_1+\mu_1)(\lambda_2+\mu_1)}}
    , \\
    2 m_1 e^{-x_1} &=
    \frac{1}{b_1} \cdot 
    \frac{\mu_1 \bigl( \lambda_1 + \lambda_2 \tfrac{a_2}{a_1} \bigr)}{\lambda_1 \lambda_2 \left( \dfrac{\lambda_1}{\lambda_1+\mu_1} + \dfrac{\lambda_2 \tfrac{a_2}{a_1}}{\lambda_2+\mu_1} \right)} + 2 b_{\infty}^*
    .
  \end{split}
\end{equation}
Writing $o(1)$ for terms which tend to zero
(exponentially fast, actually),
we therefore have,
as $t \to +\infty$,
\begin{equation}
  \label{eq:asymptotics-K2-position-plus-infty}
  \begin{split}
    x_4(t) &=
    \frac{t}{2} \left( \frac{1}{\lambda_1} + \frac{1}{\mu_1} \right)
    + \frac12 \ln \frac{2 \, a_1(0) \, b_1(0)}{\lambda_1 + \mu_1} + o(1)
    ,\\
    x_3(t) &=
    \frac{t}{2} \left( \frac{1}{\lambda_1} + \frac{1}{\mu_1} \right)
    + \frac12 \ln \frac{2 \, a_1(0) \, b_1(0)}{\lambda_1 + \mu_1} + o(1)
    ,\\
    x_2(t) &=
    \frac{t}{2} \left( \frac{1}{\lambda_2} + \frac{1}{\mu_1} \right)
    + \frac12 \ln \frac{2 \, a_2(0) \, b_1(0)}{\lambda_2+\mu_1}
    + \frac12 \ln \frac{\bigl( \lambda_1-\lambda_2 \bigr)^2}{\lambda_1 (\lambda_1+\mu_1)} + o(1)
    ,\\
    x_1(t) &=
    \frac{t}{2} \left( \frac{1}{\lambda_2} \right)
    + \frac12 \ln \frac{\mu_1 \, a_2(0)}{\lambda_1 \lambda_2 \, b^*_{\infty}}
    + \frac12 \ln \frac{\bigl( \lambda_1-\lambda_2 \bigr)^2}{\lambda_1 (\lambda_2 + \mu_1)} + o(1)
  \end{split}
\end{equation}
and
\begin{equation}
  \label{eq:asymptotics-K2-amplitude-plus-infty}
  \begin{split}
    n_4(t) &=
    \bigl( 1+o(1) \bigr)
    \,
    \sqrt{\frac{b_1(0)}{2 a_1(0) (\lambda_1 + \mu_1)}}
    \,
    \exp\left( \frac{t}{2} \left( \frac{1}{\mu_1} - \frac{1}{\lambda_1} \right) \right)
    ,\\
    m_3(t) &=
    \bigl( 1+o(1) \bigr)
    \,
    \frac{1}{\lambda_1} \sqrt{\frac{a_1(0) \, (\lambda_1 + \mu_1)}{2 b_1(0)}}
    \,
    \exp\left( \frac{t}{2} \left( \frac{1}{\lambda_1} - \frac{1}{\mu_1} \right) \right)
    ,\\
    n_2(t) &=
    \bigl( 1+o(1) \bigr)
    \,
    \frac{1}{\mu_1} \sqrt{\frac{b_1(0) \, (\lambda_2 + \mu_1) \, \lambda_1^3}{2 a_2(0) \, \bigl( \lambda_1-\lambda_2 \bigr)^2 (\lambda_1 + \mu_1)}}
    \,
    \exp\left( \frac{t}{2} \left( \frac{1}{\mu_1} - \frac{1}{\lambda_2} \right) \right)
    ,\\
    m_1(t) &=
    \bigl( 1+o(1) \bigr)
    \,
    \sqrt{\frac{b^*_{\infty} a_2(0) \, \mu_1 \bigl( \lambda_1-\lambda_2 \bigr)^2}{\lambda_1^2 \lambda_2 (\lambda_2 + \mu_1)}}
    \,
    \exp\left( \frac{t}{2} \left( \frac{1}{\lambda_2} \right) \right)
    .
  \end{split}
\end{equation}
In other words, what \eqref{eq:asymptotics-K2-position-plus-infty}
says is that the peakons asymptotically travel along straight lines
in $(x,t)$ space as $t \to +\infty$,
with the asymptotic velocities
\begin{equation}
  \label{eq:asymptotics-K2-speed-plus-infty}
  \begin{aligned}
    \dot x_1 & \sim
    c_3 =
    \frac{1}{2} \left( \frac{1}{\lambda_2} \right)
    ,\\
    \dot x_2 & \sim
    c_2 =
    \frac{1}{2} \left( \frac{1}{\lambda_2} + \frac{1}{\mu_1} \right)
    ,\\
    \dot x_3, \, \dot x_4 & \sim
    c_1 =
    \frac{1}{2} \left( \frac{1}{\lambda_1} + \frac{1}{\mu_1} \right)
  \end{aligned}
\end{equation}
satisfying
$0 < c_3 < c_2 < c_1$.
So the two rightmost peakons asymptotically have the same velocity~$c_1$,
and in fact we see from \eqref{eq:asymptotics-K2-position-plus-infty}
that $x_4(t) - x_3(t) = o(1)$,
i.e., the distance between them approaches zero as~$t \to +\infty$.

Looking at \eqref{eq:asymptotics-K2-amplitude-plus-infty},
we note that the amplitude of the leftmost peakon
always diverges:
$m_1(t) \to \infty$ as $t \to +\infty$.
The other amplitudes decay exponentially to zero,
or grow exponentially to infinity,
or tend to some positive constant value,
depending on the relative sizes of the eigenvalues $\lambda_i$
and~$\mu_j$.
It is easier to visualize the situation if we take logarithms
in~\eqref{eq:asymptotics-K2-amplitude-plus-infty}:
\begin{equation}
  \label{eq:asymptotics-K2-log-amplitude-plus-infty}
  \begin{split}
    -\ln n_4(t) &=
    \frac{t}{2} \left( \frac{1}{\lambda_1} - \frac{1}{\mu_1} \right)
    - \frac12 \ln \frac{b_1(0)}{2 a_1(0) (\lambda_1 + \mu_1)}
    + o(1)
    ,\\
    \ln m_3(t) &=
    \frac{t}{2} \left( \frac{1}{\lambda_1} - \frac{1}{\mu_1} \right)
    - \ln \lambda_1
    + \frac12 \ln \frac{a_1(0) \, (\lambda_1 + \mu_1)}{2 b_1(0)}
    + o(1)
    ,\\
    -\ln n_2(t) &=
    \frac{t}{2} \left( \frac{1}{\lambda_2} - \frac{1}{\mu_1} \right)
    + \ln \mu_1
    - \frac12 \ln \frac{b_1(0) \, (\lambda_2 + \mu_1)}{2 a_2(0)}
    \\ & \quad
    + \frac12 \ln \frac{\bigl( \lambda_1-\lambda_2 \bigr)^2 (\lambda_1 + \mu_1)}{\lambda_1^3}
    + o(1)
    ,\\
    \ln m_1(t) &=
    \frac{t}{2} \left( \frac{1}{\lambda_2} \right)
    + \frac12 \ln \frac{b^*_{\infty} a_2(0) \, \mu_1}{\lambda_1 \lambda_2}
    + \frac12 \ln \frac{\bigl( \lambda_1-\lambda_2 \bigr)^2}{\lambda_1 (\lambda_2 + \mu_1)}
    + o(1)
    .
  \end{split}
\end{equation}
Setting
\begin{equation}
  \label{eq:logamp-slopes-dk}
  d_1 =
  \frac{1}{2} \left( \frac{1}{\lambda_1} - \frac{1}{\mu_1} \right)
  , \qquad
  d_2 =
  \frac{1}{2} \left( \frac{1}{\lambda_2} - \frac{1}{\mu_1} \right)
  , \qquad
  d_3 =
  \frac{1}{2} \left( \frac{1}{\lambda_2} \right)
  ,
\end{equation}
we thus have
\begin{equation}
  \label{eq:asymptotics-K2-logamp-speed-plus-infty}
  \begin{split}
    -\ln n_4(t) &=
    d_1 t + \text{constant} + o(1)
    ,\\
    \ln m_3(t) &=
    d_1 t + \text{constant} + o(1)
    ,\\
    -\ln n_2(t) &=
    d_2 t + \text{constant} + o(1)
    ,\\
    \ln m_1(t) &=
    d_3 t + \text{constant} + o(1)
    ,
  \end{split}
\end{equation}
so the graphs of the quantities on the left-hand side
approach straight lines as $t \to +\infty$.
Note that $d_3 > 0$ always, but $d_1$ and~$d_2$ may be
positive, negative, or zero.

\subsection{Asymptotics as $t \to -\infty$}

As $t \to -\infty$, we can carry out similar calculations, but
now the roles are reversed; 
the factors $a_1(t)$, $a_2(t)$ and~$b_1(t)$ will all tend to zero,
and $a_1(t)$ does so much faster than~$a_2(t)$\,:
\begin{equation*}
  \frac{a_1(t)}{a_2(t)} =
  \frac{a_1(0)}{a_2(0)} \, e^{\delta t}
  \to 0
  ,\qquad
  t \to -\infty
  .
\end{equation*}
So now the dominant terms to be factored out
are not the same as in
\eqref{eq:solution-K2-position-factor-out}
and~\eqref{eq:solution-K2-amplitude-factor-out};
we get
\begin{equation*}
  \begin{split}
    \tfrac12 e^{2 x_4} &=
    a_2 \Biggl(
    \frac{\tfrac{a_1}{a_2} \, b_1}{\lambda_1+\mu_1}
    + \frac{b_1}{\lambda_2+\mu_1}
    + b_{\infty} \bigl( \tfrac{a_2}{a_1} + 1 \bigr)
    \Biggr)
    ,\\
    \tfrac12 e^{2 x_3} &=
    a_2 b_1 \Biggl(
    \frac{\tfrac{a_1}{a_2}}{\lambda_1+\mu_1}
    + \frac{1}{\lambda_2+\mu_1}
    \Biggr)
    ,
  \end{split}
\end{equation*}
and so on.
Taking this into account, we obtain the following
asymptotics as $t \to -\infty$\,:
\begin{equation}
  \label{eq:asymptotics-K2-position-minus-infty}
  \begin{split}
    x_4(t) &=
    \frac{t}{2} \left( \frac{1}{\lambda_2} \right)
    + \frac12 \ln \bigl( 2 \, a_2(0) \, b_{\infty} \bigr) + o(1)
    ,\\
    x_3(t) &=
    \frac{t}{2} \left( \frac{1}{\lambda_2} + \frac{1}{\mu_1} \right)
    + \frac12 \ln \frac{2 \, a_2(0) \, b_1(0)}{\lambda_2 + \mu_1} + o(1)
    ,\\
    x_2(t) &=
    \frac{t}{2} \left( \frac{1}{\lambda_1} + \frac{1}{\mu_1} \right)
    + \frac12 \ln \frac{2 \, a_1(0) \, b_1(0)}{\lambda_1+\mu_1}
    + \frac12 \ln \frac{\bigl( \lambda_1-\lambda_2 \bigr)^2}{\lambda_2 (\lambda_2+\mu_1)} + o(1)
    ,\\
    x_1(t) &=
    \frac{t}{2} \left( \frac{1}{\lambda_1} + \frac{1}{\mu_1} \right)
    + \frac12 \ln \frac{2 \, a_1(0) \, b_1(0)}{\lambda_1+\mu_1}
    + \frac12 \ln \frac{\bigl( \lambda_1-\lambda_2 \bigr)^2}{\lambda_2 (\lambda_2+\mu_1)} + o(1)
  \end{split}
\end{equation}
and
\begin{equation}
  \label{eq:asymptotics-K2-amplitude-minus-infty}
  \begin{split}
    n_4(t) &=
    \bigl( 1+o(1) \bigr)
    \,
    \sqrt{\frac{b_{\infty}}{2 a_2(0)}}
    \,
    \exp\left( -\frac{t}{2} \left(\frac{1}{\lambda_2} \right) \right)
    ,\\
    m_3(t) &=
    \bigl( 1+o(1) \bigr)
    \,
    \frac{1}{\lambda_2} \sqrt{\frac{a_2(0) \, (\lambda_2 + \mu_1)}{2 b_1(0)}}
    \,
    \exp\left( -\frac{t}{2} \left( \frac{1}{\mu_1} - \frac{1}{\lambda_2} \right) \right)
    ,\\
    n_2(t) &=
    \bigl( 1+o(1) \bigr)
    \,
    \frac{1}{\mu_1} \sqrt{\frac{b_1(0) \, (\lambda_1 + \mu_1) \, \lambda_2^3}{2 a_1(0) \, \bigl( \lambda_1-\lambda_2 \bigr)^2 (\lambda_2 + \mu_1)}}
    \,
    \exp\left( - \frac{t}{2} \left( \frac{1}{\lambda_1} - \frac{1}{\mu_1} \right) \right)
    ,\\
    m_1(t) &=
    \bigl( 1+o(1) \bigr)
    \,
    \frac{\mu_1}{\lambda_1}
    \sqrt{\frac{a_1(0) \, \bigl( \lambda_1-\lambda_2 \bigr)^2 (\lambda_2 + \mu_1)}{2 b_1(0) \, (\lambda_1 + \mu_1) \, \lambda_2^3}}
    \,
    \exp\left( - \frac{t}{2} \left( \frac{1}{\mu_1} - \frac{1}{\lambda_1} \right) \right)
    .
  \end{split}
\end{equation}
Taking logarithms, we can write
\eqref{eq:asymptotics-K2-amplitude-minus-infty} as
\begin{equation}
  \label{eq:asymptotics-K2-log-amplitude-minus-infty}
  \begin{split}
    -\ln n_4(t) &=
    \frac{t}{2} \left( \frac{1}{\lambda_2} \right)
    - \frac12 \ln \frac{b_{\infty}}{2 a_2(0)}
    + o(1)
    ,\\
    \ln m_3(t) &=
    \frac{t}{2} \left( \frac{1}{\lambda_2} - \frac{1}{\mu_1} \right)
    - \ln \lambda_2
    + \frac12 \ln \frac{a_2(0) \, (\lambda_2 + \mu_1)}{2 b_1(0)}
    + o(1)
    ,\\
    -\ln n_2(t) &=
    \frac{t}{2} \left( \frac{1}{\lambda_1} - \frac{1}{\mu_1} \right)
    + \ln \mu_1
    - \frac12 \ln \frac{b_1(0) \, (\lambda_1 + \mu_1)}{2 a_2(0)}
    \\ & \quad
    + \frac12 \ln \frac{\bigl( \lambda_1-\lambda_2 \bigr)^2 (\lambda_2 + \mu_1)}{\lambda_2^3}
    + o(1)
    ,\\
    \ln m_1(t) &=
    \frac{t}{2} \left( \frac{1}{\lambda_1} - \frac{1}{\mu_1} \right)
    + \ln \frac{\mu_1}{\lambda_1}
    + \frac12 \ln \frac{a_1(0)}{2 b_1(0) \, (\lambda_1 + \mu_1)}
    \\ & \qquad
    + \frac12 \ln \frac{\bigl( \lambda_1-\lambda_2 \bigr)^2 (\lambda_2 + \mu_1)}{\lambda_2^3}
    + o(1)
    .
  \end{split}
\end{equation}

From \eqref{eq:asymptotics-K2-position-minus-infty}
we conlude that 
the peakons asymptotically travel along straight lines
in $(x,t)$ space also in the case $t \to -\infty$,
with the same asymptotic velocities~$c_k$
(defined in equation~\eqref{eq:asymptotics-K2-speed-plus-infty})
as in the case $t \to +\infty$,
but in the opposite order:
\begin{equation}
  \label{eq:asymptotics-K2-speed-minus-infty}
  \begin{aligned}
    \dot x_1, \, \dot x_2 & \sim
    c_1 =
    \frac{1}{2} \left( \frac{1}{\lambda_1} + \frac{1}{\mu_1} \right)
    ,\\
    \dot x_3 & \sim
    c_2 =
    \frac{1}{2} \left( \frac{1}{\lambda_2} + \frac{1}{\mu_1} \right)
    ,\\
    \dot x_4 & \sim
    c_3 =
    \frac{1}{2} \left( \frac{1}{\lambda_2} \right)
    .
  \end{aligned}
\end{equation}
Now it is the two leftmost peakons that
asymptotically have the same velocity,
and from \eqref{eq:asymptotics-K2-position-minus-infty}
we get $x_2(t) - x_1(t) = o(1)$,
i.e., the distance between them approaches zero as~$t \to -\infty$.

Moreover,
the formulas \eqref{eq:asymptotics-K2-log-amplitude-minus-infty}
have the structure
\begin{equation}
  \label{eq:asymptotics-K2-logamp-speed-minus-infty}
  \begin{split}
    -\ln n_4(t) &=
    d_3 t + \text{constant} + o(1)
    ,\\
    \ln m_3(t) &=
    d_2 t + \text{constant} + o(1)
    ,\\
    -\ln n_2(t) &=
    d_1 t + \text{constant} + o(1)
    ,\\
    \ln m_1(t) &=
    d_1 t + \text{constant} + o(1)
    ,
  \end{split}
\end{equation}
which is similar to what
we had in the case $t \to +\infty$;
see~\eqref{eq:asymptotics-K2-logamp-speed-plus-infty}.
The constants $d_k$ defined in~\eqref{eq:logamp-slopes-dk}
appear here too,
but in the opposite order.
The amplitude of the rightmost peakon
always diverges:
$n_4(t) \to \infty$ as $t \to -\infty$,
since $d_3>0$.
The behaviour of the other amplitudes
depends on the relative sizes of the eigenvalues $\lambda_i$
and~$\mu_j$.

\subsection{Phase shifts in positions}
\label{sec:phaseshift-K2-positions}

As usual in soliton theory, we can identify ``phase shifts''
in the positions of solitons,
by comparing the asympotics as $t \to +\infty$
and $t \to -\infty$.

Thus, we may compare the straight line approached by the curves
$x=x_3(t)$ and $x=x_4(t)$ as $t \to +\infty$,
\begin{equation*}
  x = c_1 t
  + \frac12 \ln \frac{2 \, a_1(0) \, b_1(0)}{\lambda_1 + \mu_1}
  ,
\end{equation*}
to the parallel line approached by the curves $x=x_1(t)$ and $x=x_2(t)$
as $t \to -\infty$,
\begin{equation*}
  x = c_1 t
  + \frac12 \ln \frac{2 \, a_1(0) \, b_1(0)}{\lambda_1+\mu_1}
  + \frac12 \ln \frac{\bigl( \lambda_1-\lambda_2 \bigr)^2}{\lambda_2 (\lambda_2+\mu_1)}
  ,
\end{equation*}
and say that during the complete course of the evolution,
as $t$ runs from $-\infty$ to~$+\infty$,
``the pair of fast peakons with asymptotic velocity~$c_1$''
experiences a shift in the $x$ direction of size
\begin{equation}
  \label{eq:phaseshift-K2-fastest}
  \lim_{t \to +\infty} \bigl( x_{3,4}(t) - c_1 t \bigr)
  - \lim_{t \to -\infty} \bigl( x_{1,2}(t) - c_1 t \bigr)
  =
  - \frac12 \ln \frac{\bigl( \lambda_1-\lambda_2 \bigr)^2}{\lambda_2 (\lambda_2+\mu_1)}
  .
\end{equation}
Similarly, comparing the line approached by the curve
$x=x_2(t)$ as $t \to +\infty$,
\begin{equation*}
  x = c_2 t
  + \frac12 \ln \frac{2 \, a_2(0) \, b_1(0)}{\lambda_2+\mu_1}
  + \frac12 \ln \frac{\bigl( \lambda_1-\lambda_2 \bigr)^2}{\lambda_1 (\lambda_1+\mu_1)}
  ,
\end{equation*}
to the line approached by the curve
$x=x_3(t)$ as $t \to -\infty$,
\begin{equation*}
  x = c_2 t
  + \frac12 \ln \frac{2 \, a_2(0) \, b_1(0)}{\lambda_2 + \mu_1}
  ,
\end{equation*}
we see that ``the peakon with asymptotic velocity~$c_2$''
experiences the phase shift
\begin{equation}
  \label{eq:phaseshift-K2-middle}
  \lim_{t \to +\infty} \bigl( x_2(t) - c_2 t \bigr)
  - \lim_{t \to -\infty} \bigl( x_3(t) - c_2 t \bigr)
  =
  \frac12 \ln \frac{\bigl( \lambda_1-\lambda_2 \bigr)^2}{\lambda_1 (\lambda_1+\mu_1)}
  .
\end{equation}
And finally, comparing the line approached by the curve
$x=x_1(t)$ as $t \to +\infty$,
\begin{equation*}
  x = c_3 t
  + \frac12 \ln \frac{\mu_1 \, a_2(0)}{\lambda_1 \lambda_2 \, b^*_{\infty}}
  + \frac12 \ln \frac{\bigl( \lambda_1-\lambda_2 \bigr)^2}{\lambda_1 (\lambda_2 + \mu_1)}
  ,
\end{equation*}
to the line approached by the curve
$x=x_4(t)$ as $t \to -\infty$,
\begin{equation*}
  x = c_3 t
  + \frac12 \ln \bigl( 2 \, a_2(0) \, b_{\infty} \bigr)
  ,
\end{equation*}
it's clear that the phase shift of
``the slow peakon with asymptotic velocity~$c_3$'' is
\begin{multline}
  \label{eq:phaseshift-K2-slowest}
  \lim_{t \to +\infty} \bigl( x_1(t) - c_3 t \bigr)
  - \lim_{t \to -\infty} \bigl( x_4(t) - c_3 t \bigr)
  \\ =
  \frac12 \ln \frac{\mu_1}{2 \, b_{\infty} \, b^*_{\infty} \, \lambda_1 \lambda_2}
  + \frac12 \ln \frac{\bigl( \lambda_1-\lambda_2 \bigr)^2}{\lambda_1 (\lambda_2 + \mu_1)}
  .
\end{multline}

\subsection{Phase shifts in logarithms of amplitudes}
\label{sec:phaseshift-K2-amplitudes}

Geng--Xue peakons also exhibit a ``phase shift''
of a more unusual type,
not seen in systems where the amplitudes simply tend
to constant values as $t \to \pm \infty$.
Here, the amplitudes instead have exponental growth or decay
as $t \to \pm\infty$,
so their logarithms asymptotically behave like
$d_k t + \text{constant}$,
with the same coefficients $d_k$ appearing at $+\infty$ and $-\infty$
(see \eqref{eq:asymptotics-K2-log-amplitude-plus-infty}/\eqref{eq:asymptotics-K2-logamp-speed-plus-infty}
and~\eqref{eq:asymptotics-K2-log-amplitude-minus-infty}/\eqref{eq:asymptotics-K2-logamp-speed-minus-infty}).
Therefore we can make a similar comparison as we did
for positions above.
This gives the following formulas:
\begin{multline}
  \lim_{t \to +\infty} \Bigl( -\ln n_4(t) - d_1 t \Bigl)
  - \lim_{t \to -\infty} \Bigl( \ln m_1(t) - d_1 t \Bigr)
  \\
  =
  \ln \frac{2 \lambda_1 (\lambda_1 + \mu_1)}{\mu_1}
  - \frac12 \ln \frac{\bigl( \lambda_1 - \lambda_2 \bigr)^2 (\lambda_2 + \mu_1)}{\lambda_2^3}
  ,
\end{multline}
\begin{multline}
  \lim_{t \to +\infty} \Bigl( \ln m_3(t) - d_1 t \Bigl)
  - \lim_{t \to -\infty} \Bigl( -\ln n_2(t) - d_1 t \Bigr)
  = - \ln \frac{2 \lambda_1 \mu_1}{\lambda_1 + \mu_1}
  ,
  \hfill  
\end{multline}
\begin{multline}
  \lim_{t \to +\infty} \Bigl( -\ln n_2(t) - d_2 t \Bigl)
  - \lim_{t \to -\infty} \Bigl( \ln m_3(t) - d_2 t \Bigr)
  \\
  =
    \ln \frac{2 \lambda_2 \mu_1}{\lambda_2 + \mu_1}
    + \frac12 \ln \frac{\bigl( \lambda_1 - \lambda_2 \bigr)^2 (\lambda_1 + \mu_1)}{\lambda_1^3}
  ,
\end{multline}
and
\begin{multline}
  \lim_{t \to +\infty} \Bigl( \ln m_1(t) - d_3 t \Bigl)
  - \lim_{t \to -\infty} \Bigl( -\ln n_4(t) - d_3 t \Bigr)
  \\
  = 
  \frac12 \ln \frac{\mu_1}{2 \lambda_1 \lambda_2}
  + \frac12 \ln \bigl( b_{\infty} \, b_{\infty}^* \bigr)
  + \frac12 \ln \frac{\bigl( \lambda_1 - \lambda_2 \bigr)^2}{\lambda_1 (\lambda_2 + \mu_1)}
  .
\end{multline}

\section{Shock formation in a $2+2$ mixed peakon--antipeakon case}
\label{sec:2+2-peakon-antipeakon}

Throughout the article, except in this section, we are assuming
that all the peakon amplitudes $m_k$ and $n_k$ are positive,
i.e., we are only considering what is known as ``pure peakon solutions''.
The experience from other peakon equations
(Camassa--Holm, Degasperis--Procesi, Novikov)
is that pure peakon solutions are globally defined, whereas mixed
peakon--antipeakon solutions,
with some amplitudes positive and some negative,
lead to collisions, finite-time blowup, and subtle questions of how
to continue the solution past singularities
\cite{bressan-constantin:global-conservative-CH,
  bressan-constantin:global-dissipative-CH,
  holden-raynaud:global-conservative-multipeakon-CH,
  holden-raynaud:global-conservative-CH-Lagrangian,
  holden-raynaud:global-dissipative-multipeakon-CH,
  lundmark:shockpeakons}.

For the Geng--Xue equation, negative amplitudes cause no problems as
long as all the peakons in~$u$ have the same sign and
all the peakons in~$v$ have the same sign.
In fact, from the governing ODEs \eqref{eq:GX-peakon-ode}
it is immediate that
if
\begin{equation*}
  x_k = \xi_k(t)
  ,\qquad
  m_k = \mu_k(t)
  ,\qquad
  n_k = \nu_k(t)
\end{equation*}
is a pure peakon solution,
then
\begin{equation*}
  x_k = \xi_k(t)
  ,\qquad
  m_k = -\mu_k(t)
  ,\qquad
  n_k = -\nu_k(t)
\end{equation*}
is a pure antipeakon solution, while
\begin{equation*}
  x_k = \xi_k(-t)
  ,\qquad
  m_k = -\mu_k(-t)
  ,\qquad
  n_k = \nu_k(-t)
\end{equation*}
is a solution with antipeakons in~$u$ and peakons in~$v$,
and the other way around for
\begin{equation*}
  x_k = \xi_k(-t)
  ,\qquad
  m_k = \mu_k(-t)
  ,\qquad
  n_k = -\nu_k(-t)
  .
\end{equation*}
This is the reason why there was nothing remarkable
about the $1+1$ interlacing peakon--antipeakon case
considered in \autoref{rem:1+1-peakon-antipeakon}.

But when we mix peakons and antipeakons within one component
of a solution, there will indeed be complications.
Just to get an idea of what may happen,
we will spend the rest of this section
looking at the $2+2$ interlacing case with
$m_1$, $n_2$, $m_3$ positive and $n_4$ negative.
It will emerge that it is possible for the solution
$\bigl( u(x,t), v(x,t) \bigr)$
to form a jump discontinuity after finite time,
meaning that one is forced to consider shockpeakons
in order to provide a meaningful continuation past the singularity;
this is similar to what happens for the
Degasperis--Procesi equation~\cite{lundmark:shockpeakons}.

When negative amplitudes are involved,
the spectral variables will not lie in the usual positive
sector~\eqref{eq:spectral-parameters},
so we must begin by investigating their signs.
From \eqref{eq:ABC-jump-product} we compute
\begin{equation*}
  \begin{aligned}    
    A(\lambda) &
    = 1
    - 2 \lambda \bigl( m_1 n_2 E_{12} + m_1 n_4 E_{14} + m_3 n_4 E_{34} \bigr)
    + 4 \lambda^2 m_1 n_2 m_3 n_4 E_{12} (1-E_{23}^2) E_{34}
    ,\\
    B(\lambda) &
    = e^{x_3} \Bigl(
    (m_1 E_{13} + m_3)
    - 2 \lambda m_1 n_2 m_3 E_{12} (1-E_{23}^2)
    \Bigr)
    ,
  \end{aligned}
\end{equation*}
while \eqref{eq:twin-ABC-jump-product} gives
\begin{equation*}
  \begin{aligned}
    \twin A(\lambda) &
    = 1
    - 2 \lambda n_2 m_3 E_{23}
    ,\\
    \twin B(\lambda) &
    = e^{x_4} \Bigl(
    (n_2 E_{24} + n_4)
    - 2 \lambda n_2 m_3 n_4 E_{23} (1-E_{34}^2)
    \Bigr)
    .
  \end{aligned}
\end{equation*}
Since the polynomial~$A(\lambda) = (1-\lambda/\lambda_1)(1-\lambda/\lambda_2)$
has negative $\lambda^2$-coefficient,
its zeros will be of opposite sign, say $\lambda_1 < 0 < \lambda_2$,
and the single zero of $\twin A(\lambda)$
is $\mu_1 = (2 n_2 m_3 E_{23})^{-1} > 0$.
From \eqref{eq:b-infty-star-directly} and \eqref{eq:b-infty-directly}
we see that
$b_{\infty} < 0$ and~$b^*_{\infty} > 0$.

For the sake of simplicity, we will now consider some concrete
numerical values.

\begin{example}
  The following initial data are designed to give simple spectral data:
  \begin{equation*}
    x_1(0) = 0
    ,\quad
    x_2(0) = \tfrac12 \ln\tfrac{9}{4}
    ,\quad
    x_3(0) = \tfrac12 \ln\tfrac{7}{2}
    ,\quad
    x_4(0) = \tfrac12 \ln\tfrac{11}{2}
  \end{equation*}
  and
  \begin{equation*}
    m_1(0) = \tfrac{9}{5}
    ,\qquad
    n_2(0) = \tfrac{5}{6}
    ,\qquad
    m_3(0) = \tfrac{1}{5} \sqrt{\tfrac{7}{2}}
    ,\qquad
    n_4(0) = -\tfrac{1}{2} \sqrt{\tfrac{11}{2}}
    .
  \end{equation*}
  The Weyl functions at time $t=0$ are
  \begin{equation*}
    \omega(\lambda;0)
    = - \frac{B(\lambda;0)}{A(\lambda;0)}
    = - \frac{\tfrac52 - \tfrac12 \lambda}{1 + \tfrac12 \lambda - \tfrac12 \lambda^2}
    = \frac{-2}{\lambda-(-1)} + \frac{1}{\lambda-2}
    = \frac{a_1(0)}{\lambda-\lambda_1} + \frac{a_2(0)}{\lambda-\lambda_2}
  \end{equation*}
  and
  \begin{equation*}
    \twin\omega(\lambda;0)
    = - \frac{\twin B(\lambda;0)}{\twin A(\lambda;0)}
    = - \frac{-\frac32 + \frac12 \lambda}{1 - \tfrac12 \lambda}
    = 1 + \frac{-1}{\lambda - 2}
    = -b_{\infty} + \frac{b_1(0)}{\lambda - \mu_1}
    ,
  \end{equation*}
  so that we have
  \begin{equation*}
    \lambda_1 = -1
    ,\quad
    \lambda_2 = 2
    ,\quad
    \mu_1 = 2
    ,\quad
    a_1(0) = -2
    ,\quad
    a_2(0) = 1
    ,\quad
    b_1(0) = -1
    ,\quad
    b_{\infty} = -1
    ,
  \end{equation*}
  and also $b^*_{\infty} = 1$ from \eqref{eq:b-infty-star-directly}.
  Then, since $a_1 = a_1(t) = a_1(0) \, e^{t/\lambda_1} = -2 e^{-t}$,
  and so on,
  the quantities appearing in the peakon solution
  formulas~\eqref{eq:solution-K2} are
  \begin{equation*}
    \begin{split}
      a_1 + a_2 &
      = -2 e^{-t} + e^{t/2}
      < 0
      \iff
      t < \tfrac23 \ln 2
      ,\\
      \lambda_1 a_1 + \lambda_2 a_2 &
      = 2 e^{-t} + 2 e^{t/2}
      > 0
      ,\\
      \frac{a_1 b_1}{\lambda_1 + \mu_1} + \frac{a_2 b_1}{\lambda_2 + \mu_1} &
      = 2 e^{-t/2} - \tfrac14 e^{t}
      > 0
      \iff
      t < 2 \ln 2
      ,\\
      \frac{\lambda_1 a_1 b_1}{\lambda_1 + \mu_1} + \frac{\lambda_2 a_2 b_1}{\lambda_2 + \mu_1} &
      = -2 e^{-t/2} - \tfrac12 e^{t}
      < 0
      ,\\
      \frac{(\lambda_1-\lambda_2)^2}{(\lambda_1+\mu_1)(\lambda_2+\mu_1)} a_1 a_2 b_1 &
      = \tfrac92
      > 0
      .
    \end{split}
  \end{equation*}
  Two of these quantities change their sign,
  and the first one to become zero is
  $a_1 + a_2$, which changes sign from negative to positive
  when
  \begin{equation*}
    t = t_0 := \tfrac23 \ln 2
    .
  \end{equation*}
  The fact that the solution formulas really satisfy the peakon ODEs
  is purely algebraic, and does not depend on the signs of the
  spectral variables,
  as long as everything is defined and the ordering $x_1 \le x_2 \le x_3 \le x_4$
  is preserved.
  From \eqref{eq:solution-K2-position} we obtain
  \begin{equation*}
    \begin{aligned}
      \tfrac12 e^{2 x_4} - \tfrac12 e^{2 x_3} &
      = b_{\infty} (a_1 + a_2)
      \\ &
      -2 e^{-t} + e^{t/2}
      ,\\[1ex]
      \tfrac12 e^{2 x_3} - \tfrac12 e^{2 x_2} &
      = \frac{a_1+a_2}{\lambda_1 a_1 + \lambda_2 a_2}
      \left(
        \frac{\lambda_1 a_1 b_1}{\lambda_1+\mu_1} + \frac{\lambda_2 a_2 b_1}{\lambda_2+\mu_1}
      \right)
      \\ &
      = \frac{\bigl( -2 e^{-t} + e^{t/2} \bigr) \, \bigl( -2 e^{-t/2} - \tfrac12 e^{t} \bigr)}{2 e^{-t} + 2 e^{t/2}}
      ,\\[1ex]
      2 e^{-2 x_1} - 2 e^{-2 x_2} &
      = \frac{2 b_{\infty}^* \lambda_1 \lambda_2}{\mu_1} \left( \dfrac{\lambda_1 a_1 b_1}{\lambda_1+\mu_1} + \dfrac{\lambda_2 a_2 b_1}{\lambda_2+\mu_1} \right)
      \\ &
      = -2 \, \bigl( -2 e^{-t/2} - \tfrac12 e^{t} \bigr)
      ,
    \end{aligned}
  \end{equation*}
  so the ordering is indeed preserved up until $t=t_0$,
  when a (triple) collision
  \begin{equation*}
    \tfrac12 \ln\tfrac{3}{2^{1/3}+2^{2/3}} =
    x_1(t_0) < x_2(t_0) = x_3(t_0) = x_4(t_0)
    = \tfrac12 \ln\tfrac{3}{2^{1/3}}
  \end{equation*}
  occurs,
  and looking at where the factor $a_1+a_2$ occurs in~\eqref{eq:solution-K2-amplitude}
  we also see that
  \begin{equation*}
    n_2(t) \to \infty
    ,\qquad
    m_3(t) \to 0
    ,\qquad
    n_4(t) \to -\infty
    ,\qquad
    \text{as $t \to t_0^-$}.
  \end{equation*}
  The component $u(x,t)$ simply converges to $m_1(t_0) \, e^{-\abs{x-x_1(t_0)}}$,
  where
  \begin{equation*}
    m_1(t_0) = \bigl(\tfrac32 (1+2^{1/3}) \bigr)^{1/2}
    ,
  \end{equation*}
  i.e.,
  \begin{equation}
    \label{eq:limiting-profile-u}
    \lim_{t \to t_0^-} u(x,t) 
    =
    \bigl(\tfrac32 (1+2^{1/3}) \bigr)^{1/2}
    \, e^{-\abs{x-\tfrac12 \ln\tfrac{3}{2^{1/3}+2^{2/3}}}}
    .
  \end{equation}
  The behaviour of~$v(x,t)$ is more subtle.
  The solution formulas~\eqref{eq:solution-K2} imply that
  \begin{equation*}
    \begin{split}
      v(x_2(t),t) &
      = n_2 + n_4 E_{24}
      \\ &
      = \tfrac{1}{\sqrt2} \, \bigl( \tfrac12 e^{2 x_2} \bigr)^{1/2} \,
      \bigl( 2 n_2 e^{-x_2} + 2 n_4 e^{-x_4} \bigr)
      \\ &
      =
      \frac{1}{\sqrt2} \,
      \Biggl(
        \dfrac{\tfrac{(\lambda_1-\lambda_2)^2}{(\lambda_1+\mu_1)(\lambda_2+\mu_1)} a_1 a_2 b_1}{\lambda_1 a_1 + \lambda_2 a_2}
      \Biggr)^{1/2}
      \\ & \qquad
      \times
      \frac{(\lambda_1+\mu_1)(\lambda_2+\mu_1)}{(\lambda_1 - \lambda_2)^2 \mu_1 a_1 a_2}
      \left( \frac{\lambda_1^2 a_1}{\lambda_1+\mu_1} + \frac{\lambda_2^2 a_2}{\lambda_2+\mu_1} \right)
      \\ &
      =
      \frac{1}{\sqrt2} \,
      \Biggl(
        \dfrac{9/2}{2 e^{-t} + 2 e^{t/2}}
      \Biggr)^{1/2}
      \frac{-e^{t/2}}{9}
      \bigl( -2 e^{-t} + e^{t/2} \bigr)
      .
    \end{split}
  \end{equation*}
  Here it is important that the expression
  $2 n_2 e^{-x_2} + 2 n_4 e^{-x_4}$
  has been simplified by cancelling a common factor $a_1+a_2$
  from the numerator and the denominator.
  The factor $a_1+a_2$ likewise cancels when computing
  \begin{equation*}
    \begin{split}
      v(x_4(t),t) &
      = n_2 E_{24} + n_4
      \\ &
      = \tfrac{1}{\sqrt2} \, \bigl( \tfrac12 e^{2 x_4} \bigr)^{-1/2} \,
      \bigl( 2 n_2 e^{-x_2} \cdot \tfrac12 e^{2 x_2} + 2 n_4 e^{-x_4} \cdot \tfrac12 e^{2 x_4} \bigr)
      \\ &
      =
      \frac{1}{\sqrt2} \,
      \Biggl(
      \frac{a_1 b_1}{\lambda_1+\mu_1} + \frac{a_2 b_1}{\lambda_2+\mu_1} + b_{\infty} (a_1 + a_2)
      \Biggr)^{-1/2}
      \Biggl(
      \frac{b_1}{\mu_1} + b_{\infty}
      \Biggr)
      \\ &
      =
      \frac{1}{\sqrt2} \,
      \bigl(
      2 e^{-t/2} - \tfrac14 e^{t} + 2 e^{-t} - e^{t/2}
      \bigr)^{-1/2}
      \bigl(
      - \tfrac12 e^{t/2} - 1
      \bigr)
      .
    \end{split}
  \end{equation*}
  Because of this cancellation, these two expressions have finite limits
  as $t \to t_0^-$,
  \begin{equation*}
    \lim_{t \to t_0^-} v(x_2(t),t) = 0
    ,\qquad
    \lim_{t \to t_0^-} v(x_4(t),t) = - \frac{3 (1+2^{2/3})}{4 \sqrt2} < 0
    ,
  \end{equation*}
  which means that $v(x,t)$ converges to a shockpeakon-type wave profile
  at the instant of collision,
  as claimed:
  \begin{equation}
    \label{eq:limiting-profile-v}
    \begin{split}
      \lim_{t \to t_0^-} v(x,t) &
      =
      \begin{cases}
        \displaystyle
        \Biggl(
        \lim_{t \to t_0^-} v(x_2(t),t)
        \Biggr)
        \, e^{x-x_0}
        , &
        x < x_0
        \\[1em]
        \displaystyle
        \Biggl(
        \lim_{t \to t_0^-} v(x_4(t),t)
        \Biggr)
        \, e^{x_0-x}
        , &
        x > x_0
      \end{cases}
      \\ &
      =
      \begin{cases}
        0
        , &
        x < x_0
        \\
        - \frac{3 (1+2^{2/3})}{4 \sqrt2} \, e^{x_0 - x}
        , &
        x > x_0
        ,
      \end{cases}
    \end{split}
  \end{equation}
  where $x_0 = x_2(t_0) = x_3(t_0) = x_4(t_0) = \tfrac12 \ln\tfrac{3}{2^{1/3}}$
  is the site of the triple collision.
  (The fact that $v$ becomes identically zero for $x < x_0$ is clearly
  a bit of a coincidence;
  it's just that for our particular spectral data,
  the factor
  $\frac{\lambda_1^2 a_1}{\lambda_1+\mu_1} + \frac{\lambda_2^2 a_2}{\lambda_2+\mu_1}$
  happens to equal $a_1+a_2$,
  and hence it vanishes at the collision.)

  Since the second peakon in $u$ vanishes at the instant of the collision,
  the non-overlapping condition is actually preserved automatically,
  and we can continue the solution for $t \ge t_0$
  by taking the limiting wave profiles
  \eqref{eq:limiting-profile-u} and~\eqref{eq:limiting-profile-v}
  as initial data $u(x,t_0)$ and~$v(x,t_0)$ for a
  $1+1$-shockpeakon solution starting at $t=t_0$,
  i.e., a solution of the type in \autoref{sec:dynamics-shock-1},
  with
  \begin{equation*}
    \begin{gathered}
      x_1(t_0) = \tfrac12 \ln\tfrac{3}{2^{1/3}+2^{2/3}}
      ,\qquad
      x_2(t_0) = \tfrac12 \ln\tfrac{3}{2^{1/3}}
      ,\\
      m_1(t_0) = \bigl(\tfrac32 (1+2^{1/3}) \bigr)^{1/2}
      ,\quad
      s_1(t_0) = 0
      ,\quad
      n_2(t_0) = -r_2(t_0) = - \frac{3 (1+2^{2/3})}{8 \sqrt2}
      .
    \end{gathered}
  \end{equation*}
\end{example}

\begin{remark}
  Apart from collisions,
  another technical complication with mixed peakon--antipeakon solutions
  is that there may be resonant cases where some $\lambda_i + \mu_j$
  vanishes.
  In such cases there will be division by zero in
  the usual solution formulas, so they will not be valid at all,
  not even before the blowup.
  However, this can be handled by limiting arguments.
  Just to give one example, consider the initial data
  \begin{equation*}
    x_1(0) = 0
    ,\quad
    x_2(0) = \ln 2
    ,\quad
    x_3(0) = \ln 4
    ,\quad
    x_4(0) = \ln 8
  \end{equation*}
  and
  \begin{equation*}
    m_1(0) = n_2(0) = m_3(0) = 1
    ,\qquad
    n_4(0) = -1
    .
  \end{equation*}
  Then
  \begin{equation*}
    \omega(\lambda;0)
    = - \frac{B(\lambda)}{A(\lambda)}
    = - \frac{5 - 3 \lambda}{1 + \tfrac14 \lambda - \tfrac34 \lambda^2}
    = \frac{-32/7}{\lambda-(-1)} + \frac{4/7}{\lambda-\tfrac43}
    = \frac{a_1(0)}{\lambda-\lambda_1} + \frac{a_2(0)}{\lambda-\lambda_2}
  \end{equation*}
  and
  \begin{equation*}
    \twin\omega(\lambda;0)
    = - \frac{\twin B(\lambda)}{\twin A(\lambda)}
    = - \frac{-6 + 6 \lambda}{1 - \lambda}
    = 6 + \frac{0}{\lambda - 1}
    = -b_{\infty} + \frac{b_1(0)}{\lambda - \mu_1}
    ,
  \end{equation*}
  so that $\lambda_1 = -1$, $\lambda_2 = \tfrac43$,
  $\mu_1 = 1$, and in particular $\lambda_1 + \mu_1 = 0$.
  However, this is balanced in the peakon solution
  formulas~\eqref{eq:solution-K2}
  by the fact that $b_1(0) = 0$ also.
  In fact, if we take $n_4(0)=-(1+\epsilon)$ instead,
  we find
  \begin{equation*}
    \lambda_1 = \frac{1+5\epsilon-\sqrt{49+58\epsilon+25\epsilon^2}}{6(1+\epsilon)}
    ,\qquad
    \mu_1 = 1
    ,\qquad
    b_1(0) = -2 \epsilon
    ,
  \end{equation*}
  so that
  \begin{equation*}
    \frac{b_1(t)}{\lambda_1+\mu_1}
    = \frac{b_1(0) \, e^{t/\mu_1}}{\lambda_1+\mu_1}
    \to
    - \frac74 \, e^t
    ,\qquad
    \text{as $\epsilon \to 0$}
    .
  \end{equation*}
  Thus the correct solution formulas in this limiting case
  are given by replacing the expression $b_1(t)/(\lambda_1+\mu_1)$
  with $-\tfrac74 e^t$ wherever it occurs
  in~\eqref{eq:solution-K2}.

\end{remark}

\section{Dynamics of $K+K$ interlacing pure peakon solutions}
\label{sec:dynamics-K}

With the detailed examples of the previous sections
(in particular \autoref{sec:dynamics-2}) under our belt,
we are now ready to tackle the asymptotics of the
general $K+K$ interlacing peakon solution
(described in \autoref{cor:peakon-solution-formulas}).
Since the exceptional case $K=1$ has been treated
in \autoref{sec:dynamics-1},
we will assume in this section that $K \ge 2$.
Moreover, we will only consider pure peakon solutions,
i.e., we will assume that all the amplitudes $m_k$ and~$n_k$
are positive.

\subsection{Preparations}

First we establish the large time asymptotic 
behaviour of the sum~$\heineintegral_{nm}^{rs}$
defined by equation~\eqref{eq:heine-integral-as-sum}.
We recall the definition here, for convenience:
\begin{equation*}
  \heineintegral_{nm}^{rs} =
  \sum_{I \in \binom{[K]}{n}} \sum_{J \in \binom{[K-1]}{m}}
  \Psi_{IJ} \, \lambda_I^r a_I \, \mu_J^s b_J.
\end{equation*}
Determining the asymptotics of this sum is quite easy,
since the dominant contribution comes from a
single term, with all other terms being exponentially small
in comparison.
(If we write the sum with the index sets $I$ and~$J$ in
lexicographic order, then the first term dominates as $t \to +\infty$,
and the last term dominates as $t \to -\infty$.)

We remind the reader that our notation was defined
in \autoref{sec:inverse-spectral-map};
in particular, $[k]$ denotes the integer interval
$\{ 1,2, \dots, k \}$ if $k \ge 1$, and the empty set if~$k=0$.

\begin{lemma}
  \label{lem:asymptotics-J}
  Suppose $0 \le n \le K$ and $0 \le m \le K-1$.
  Given the time evolution of the
  spectral data described by \autoref{thm:spectral-variables-ode},
  the leading long-time behaviour of~$\heineintegral_{nm}^{rs}$
  is given by
  \begin{equation}
    \heineintegral_{nm}^{rs}
    =
    \bigl( 1+o(1) \bigr) \,
    \Psi_{AB} \lambda_A^r \mu_B^s a_A b_B
    ,\qquad
    t \to +\infty
    ,
  \end{equation}
  where $A=[n]$ and~$B=[m]$, and by
  \begin{equation}
    \heineintegral_{nm}^{rs}
    =
    \bigl( 1+o(1) \bigr) \,
    \Psi_{CD} \lambda_C^r \mu_D^s a_C b_D
    ,\qquad
    t \to -\infty
    ,
  \end{equation}
  where $C = [K] \setminus [K-n]$
  and $D = [K-1] \setminus [K-1-m]$.
  (Thus, $C = \{ K-n+1,\dots,K \}$ if $n \ge 1$
  and $C = \emptyset$ if $n=0$, and similarly for~$D$.)
\end{lemma}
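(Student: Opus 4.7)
My plan is to treat the sum
\begin{equation*}
  \heineintegral_{nm}^{rs}(t)
  = \sum_{I \in \binom{[K]}{n}} \sum_{J \in \binom{[K-1]}{m}}
  \Psi_{IJ} \, \lambda_I^r \mu_J^s \, a_I(t) \, b_J(t)
\end{equation*}
as a linear combination, with strictly positive and time-independent coefficients $\Psi_{IJ} \lambda_I^r \mu_J^s a_I(0) b_J(0)$, of exponentials $e^{t \Lambda_{IJ}}$, and then identify the unique dominant exponent in each direction. The first step is to insert the explicit time dependence from \eqref{eq:residues-time-dependence} to write
\begin{equation*}
  a_I(t)\, b_J(t)
  = a_I(0)\, b_J(0)\, \exp\bigl(t\, \Lambda_{IJ}\bigr),
  \qquad
  \Lambda_{IJ} := \sum_{i \in I} \frac{1}{\lambda_i} + \sum_{j \in J} \frac{1}{\mu_j}.
\end{equation*}
All prefactors $\Psi_{IJ}$, $\lambda_I^r$, $\mu_J^s$, $a_I(0)$, $b_J(0)$ are strictly positive by our standing pure peakon assumptions.

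The second step is to show that $\Lambda_{IJ}$ is maximized uniquely at $(I,J) = ([n],[m])$ and minimized uniquely at $(I,J) = (C,D)$ as described in the statement. Since $0 < \lambda_1 < \dotsb < \lambda_K$ gives $1/\lambda_1 > \dotsb > 1/\lambda_K > 0$, the sum $\sum_{i \in I} 1/\lambda_i$ over $I \in \binom{[K]}{n}$ is a sum of $n$ distinct positive numbers drawn from a strictly decreasing sequence, and is thus maximized exactly when $I$ consists of the indices of the $n$ largest reciprocals, i.e.\ $I = [n]$, and minimized exactly when $I$ consists of the $n$ smallest reciprocals, i.e.\ $I = \{K-n+1,\dots,K\}$. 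The same holds for the $\mu$-sum. Uniqueness of the optimizer in each case follows from strict monotonicity, and since the objective separates as $\Lambda_{IJ} = \alpha_I + \beta_J$, joint uniqueness is immediate.

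The third step is then routine: factor out the dominant exponential and estimate the remainder. For $t \to +\infty$, write
\begin{equation*}
  \heineintegral_{nm}^{rs}
  = \Psi_{AB} \lambda_A^r \mu_B^s a_A b_B
  \left( 1
  + \sum_{(I,J) \neq (A,B)}
  \frac{\Psi_{IJ} \lambda_I^r \mu_J^s a_I(0) b_J(0)}{\Psi_{AB} \lambda_A^r \mu_B^s a_A(0) b_B(0)}
  \, e^{t(\Lambda_{IJ} - \Lambda_{AB})}
  \right),
\end{equation*}
with $A = [n]$, $B = [m]$. By step two, $\Lambda_{IJ} - \Lambda_{AB} < 0$ for every $(I,J) \neq (A,B)$, so each term in the finite sum is $o(1)$ as $t \to +\infty$, yielding the claimed asymptotic. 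The case $t \to -\infty$ is identical after replacing $(A,B)$ by $(C,D)$ and noting that $\Lambda_{IJ} - \Lambda_{CD} > 0$ for $(I,J) \neq (C,D)$, so that these exponentials vanish as $t \to -\infty$. I do not expect any genuine obstacle here: strict positivity of coefficients rules out cancellations, and the strictly ordered eigenvalues rule out ties among the exponents, so the leading term identification is clean. The only care needed is to check that the degenerate cases $n=0$ or $m=0$, where $A$ or $B$ is empty, are correctly handled by the empty-product conventions in \eqref{eq:heine-integral-one-index-zero} and in the definitions of $\Delta_I^2$, $\twin\Delta_J^2$, $\Gamma_{IJ}$.
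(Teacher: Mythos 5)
Your proposal is correct and follows essentially the same route as the paper: both identify the single dominant term in the finite sum of positive exponentials (the paper phrases the dominance via the pairwise hierarchy $a_1 \gg a_2 \gg \dotsb$ and $b_1 \gg b_2 \gg \dotsb$, while you phrase it by maximizing or minimizing the total exponent $\Lambda_{IJ}$, which is the same fact), then factor it out and observe that the remaining terms decay exponentially. Your explicit verification that the optimizer of $\Lambda_{IJ}$ is unique, and your remark about the degenerate cases $n=0$ or $m=0$, are welcome but not a genuine departure.
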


\begin{proof}
  Remember (see \eqref{eq:spectral-parameters})
  that we always label the eigenvalues so that they are ordered:
  \begin{equation*}
    0 < \lambda_1 < \lambda_2 < \dots < \lambda_K
    , \qquad
    0 < \mu_1 < \mu_2 < \dots < \mu_{K-1}.
  \end{equation*}
  This implies that
  $a_1(t) \gg a_2(t) \gg \dots \gg a_K(t)$
  and
  $b_1(t) \gg b_2(t) \gg \dots \gg b_{K-1}(t)$
  as $t \to +\infty$.
  What we mean by this is that if $i>j$, then
  \begin{equation*}
    \frac{a_i(t)}{a_j(t)}
    = \frac{a_i(0)}{a_j(0)} \, \exp\left( t \left( \frac{1}{\lambda_i} - \frac{1}{\lambda_j} \right) \right)
    \to 0
    ,\qquad
    \frac{b_i(t)}{b_j(t)}
    = \frac{b_i(0)}{b_j(0)} \, \exp\left( t \left( \frac{1}{\mu_i} - \frac{1}{\mu_j} \right) \right)
    \to 0
    ,
  \end{equation*}
  as $t \to +\infty$.
  Then it is clear that the term $a_A b_B$,
  which contains the smallest indices, will be dominant as $t \to +\infty$\,;
  if we factor it out from the sum
  \eqref{eq:heine-integral-as-sum},
  what remains is the constant
  $\Psi_{AB} \lambda_A^r \mu_B^s$ plus terms which tend to zero
  (exponentially fast)
  as $t \to +\infty$.

  The case $t \to -\infty$ is similar,
  taking into account that
  $a_1(t) \ll a_2(t) \ll \dots \ll a_K(t)$
  and
  $b_1(t) \ll b_2(t) \ll \dots \ll b_{K-1}(t)$
  as $t \to -\infty$.
\end{proof} 

\subsection{Asymptotics for positions}

We can now easily derive the general formulas for the asymptotics of the
positions~$x_k$ and their derivatives~$\dot x_k$.
In order to make these formulas more readable,
we use the notation $c_1,\dots,c_{2K-1}$
for the asymptotic velocities,
and we also introduce abbreviations for certain other combinations
of the eigenvalues $\lambda_i$ and~$\mu_j$.

\begin{definition}
  \label{def:notation-asymptotics-positions}
  Define
  $c_1 > c_2 > \dots > c_{2K-1} > 0$
  by
  \begin{equation}
    \label{eq:asymptotic-velocities}
    \begin{aligned}
      c_{2j}
      &=
      \frac12 \left( \frac{1}{\lambda_{j+1}} + \frac{1}{\mu_{j}} \right)
      ,\quad
      j = 1, \dots, K-1
      ,\\[1.5ex]
      c_{2j-1}
      &=
      \begin{cases}
        \displaystyle
        \frac12 \left( \frac{1}{\lambda_{j}} + \frac{1}{\mu_{j}} \right)
        ,&
        j = 1, \dots, K-1
        ,\\[2ex] \displaystyle
        \frac12 \left( \frac{1}{\lambda_{K}} \right)
        ,&
        j = K
        .
      \end{cases}
    \end{aligned}
  \end{equation}
  Moreover, let
  \begin{equation}
    \begin{aligned}
      R_{rj}'
      &=
      \frac12 \ln \frac{(\lambda_r - \lambda_j)^2}{\lambda_r (\lambda_r + \mu_j)}
      ,\\
      R_{rj}''
      &=
      \frac12 \ln \frac{(\lambda_r - \lambda_j)^2}{\lambda_r (\lambda_r + \mu_{j-1})}
      ,\\
      S_{sj}'
      &=
      \frac12 \ln \frac{(\mu_s - \mu_j)^2}{(\lambda_j + \mu_s) \mu_s}
      ,\\
      S_{sj}''
      &=
      \frac12 \ln \frac{(\mu_s - \mu_{j-1})^2}{(\lambda_j + \mu_s) \mu_s}
      .
    \end{aligned}
  \end{equation}
\end{definition}

\begin{theorem}[Asymptotics for positions and velocities]
  \label{thm:asymptotics-generalK-positions}
  \begin{subequations}
  In terms of the abbreviations of \autoref{def:notation-asymptotics-positions},
  the positions and velocities in the
  $K+K$ interlacing Geng--Xue peakon solution with $K \ge 2$
  satisfy the following asymptotic formulas
  (where empty sums, such as $\sum_{s=j}^{K-1}$ when $j=K$,
  should be interpreted as zero).

  Asymptotic velocities as $t \to -\infty$\,:
  \begin{equation}
    \dot x_i \sim
    \begin{cases}
      c_1
      ,&
      i = 1
      ,\\
      c_{i-1}
      ,&
      i = 2, 3, \dots, 2K
      .
    \end{cases}
  \end{equation}

  Asymptotic velocities as $t \to +\infty$\,:
  \begin{equation}
    \dot x_{2K+1-i} \sim
    \begin{cases}
      c_1
      ,&
      i = 1
      ,\\
      c_{i-1}
      ,&
      i = 2, 3, \dots, 2K
      .
    \end{cases}
  \end{equation}

  Asymptotics for positions as $t \to -\infty$\,:
  \begin{equation}
    \label{eq:asy-K-pos-minus-odd}
    x_{2j-1}(t) =
    \begin{cases}
      \displaystyle
      c_1 t
      + \frac12 \ln \frac{2 \, a_1(0) \, b_1(0)}{\lambda_1 + \mu_1}
      \\[2ex] \displaystyle \qquad +
      \sum_{r=2}^K R_{r1}'
      +
      \sum_{s=2}^{K-1} S_{s1}'
      + o(1)
      ,&
      j=1
      ,\\[2em] \displaystyle
      c_{2j-2} t
      +\frac12 \ln \frac{2 \, a_j(0) \, b_{j-1}(0)}{\lambda_j + \mu_{j-1}}
      \\[2ex] \displaystyle \qquad +
      \sum_{r=j+1}^K R_{rj}''
      +
      \sum_{s=j}^{K-1} S_{sj}''
      + o(1)
      ,&
      j = 2,\dots,K
      ,
    \end{cases}
  \end{equation}
  and
  \begin{equation}
    \label{eq:asy-K-pos-minus-even}
    x_{2j}(t) =
    \begin{cases}
      \displaystyle
      c_{2j-1} t
      + \frac12
      \ln \frac{2 \, a_j(0) \, b_j(0)}{\lambda_j + \mu_j}
      \\[2ex] \displaystyle \qquad +
      \sum_{r=j+1}^K R_{rj}'
      +
      \sum_{s=j+1}^{K-1} S_{sj}'
      + o(1)
      ,&
      j=1,\dots,K-1
      ,\\[2em] \displaystyle
      c_{2K-1} t
      + \frac12 \ln \bigl( 2 \, a_K(0) \, b_{\infty} \bigr)
      +o(1)
      ,&
      j= K
      .
    \end{cases}
  \end{equation}
  Asymptotics for positions as $t \to +\infty$\,:
  \begin{equation}
    \label{eq:asy-K-pos-plus-even}
    x_{2(K+1-j)}(t)
    =
    \begin{cases}
      \displaystyle
      c_1 t
      + \frac12 \ln \frac{2\, a_1(0)b_1(0)}{\lambda_1+\mu_1} + o(1)
      ,&
      j = 1
      ,\\[2em] \displaystyle
      c_{2j-2} t
      +\frac12 \ln \frac{2 \, a_j(0) \, b_{j-1}(0)}{\lambda_j + \mu_{j-1}}
      \\[2ex] \displaystyle \qquad +
      \sum_{r=1}^{j-1} R_{rj}''
      +
      \sum_{s=1}^{j-2} S_{sj}''
      + o(1)
      ,&
      j = 2,\dots,K
      ,
    \end{cases}
  \end{equation}
  and
  \begin{equation}
    \label{eq:asy-K-pos-plus-odd}
    x_{2(K+1-j)-1}(t)
    =
    \begin{cases}
      \displaystyle
      c_{2j-1} t
      + \frac12
      \ln \frac{2 \, a_j(0) \, b_j(0)}{\lambda_j + \mu_j}
      \\[2ex] \displaystyle \qquad +
      \sum_{r=1}^{j-1} R_{rj}'
      +
      \sum_{s=1}^{j-1} S_{sj}'
      + o(1)
      ,&
      j=1,\dots,K-1
      ,\\[2em] \displaystyle
      c_{2K-1} t
      + \frac12 \ln \frac{M \, a_K(0)}{\strut L \, b^*_{\infty}}
      \\[2ex] \displaystyle \qquad +
      \frac12 \sum_{r=1}^{K-1} \ln \frac{(\lambda_r - \lambda_K)^2}{\lambda_r (\lambda_K + \mu_r)}
      + o(1)
      ,&
      j= K
      .
    \end{cases}
  \end{equation}
  \end{subequations}
\end{theorem}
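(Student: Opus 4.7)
The plan is to substitute the dominant-term estimates from \autoref{lem:asymptotics-J} into the explicit formulas of \autoref{rem:solution-alternative-form}, take $\tfrac12\log$, and read off the coefficient of~$t$ (the asymptotic velocity) and the constant term. The $K=2$ case already worked out in \autoref{sec:dynamics-2} serves as the template. For $t\to+\infty$, the dominant contribution to $\heineintegral_{nm}^{rs}$ comes from the index sets $I=[n]$, $J=[m]$, while for $t\to-\infty$ it comes from $I=[K]\setminus[K-n]$ and $J=[K-1]\setminus[K-1-m]$. Each position formula in \eqref{eq:solution-alternative-position} is a ratio of two $\heineintegral$-sums (or of simple sums of two such), so substituting the leading terms reduces it to a ratio of the form
\[
\frac{\Psi_{[n_1],[m_1]}}{\Psi_{[n_2],[m_2]}} \cdot \frac{\lambda_{[n_2]}^{r_2}\mu_{[m_2]}^{s_2}}{\lambda_{[n_1]}^{r_1}\mu_{[m_1]}^{s_1}} \cdot \frac{a_{[n_1]}(t)\,b_{[m_1]}(t)}{a_{[n_2]}(t)\,b_{[m_2]}(t)}.
\]
The time-dependent factor is pure exponential in view of \autoref{thm:spectral-variables-ode}, and a direct comparison against \eqref{eq:asymptotic-velocities} shows that in every case its rate equals $2c_k$ for the appropriate~$k$.

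For the constant term, the key observation is that the index sets in the $\Psi$-factors of numerator and denominator differ by a single element on the $\lambda$-side or on the $\mu$-side (or both). The ratio therefore telescopes: the $\Delta^2$- and $\widetilde\Delta^2$-parts contribute products of factors $(\lambda_r-\lambda_j)^2$ and $(\mu_s-\mu_j)^2$, while the $\Gamma$-part contributes factors $(\lambda_r+\mu_j)$. Combining these with the leftover $\lambda$- and $\mu$-factors and with the initial data $a_i(0),b_j(0)$, and then grouping them suitably, one obtains exactly the abbreviations $R_{rj}', R_{rj}'', S_{sj}', S_{sj}''$ from \autoref{def:notation-asymptotics-positions}. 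The velocity statements $\dot x_k\sim c_\ell$ then follow by differentiation of the asymptotic formulas for $x_k$, since the error terms in the $\heineintegral$-estimates are exponentially small with strictly positive rates and remain $o(1)$ after differentiating.

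Boundary cases require separate attention. In the $j=1$ row of \eqref{eq:solution-alternative-position}, the numerator is $\heineintegral_{11}^{00}+b_\infty\heineintegral_{10}^{00}$; as $t\to+\infty$ the first summand dominates (since $b_1(t)\to\infty$), whereas as $t\to-\infty$ the first summand is exponentially smaller than the second (its dominant term carries an extra factor $b_{K-1}(t)\to 0$). This is why $b_\infty$ appears in \eqref{eq:asy-K-pos-minus-even} but not in \eqref{eq:asy-K-pos-plus-even}. Symmetrically, the $j=K$ denominator $\heineintegral_{K-1,K-2}^{11}+(2b^*_\infty L/M)\heineintegral_{K-1,K-1}^{10}$ is dominated by the second summand as $t\to+\infty$ (through a factor $b_{K-1}(t)\to\infty$) but by the first as $t\to-\infty$ (the second carries an extra factor $b_1(t)\to 0$), producing the presence of $b^*_\infty$ in \eqref{eq:asy-K-pos-plus-odd} and its absence from \eqref{eq:asy-K-pos-minus-odd}. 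The main obstacle throughout is purely bookkeeping: correctly identifying the dominant index sets in each temporal limit and then organizing the algebra of $\Psi$-ratios so that the telescoping matches the compact form of \autoref{def:notation-asymptotics-positions}. Once the combinatorial structure exhibited in the $K=2$ case is in place, nothing qualitatively new is required for general~$K$.
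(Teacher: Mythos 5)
Your proposal is correct and follows essentially the same route as the paper's own proof: substitute the single-dominant-term asymptotics of \autoref{lem:asymptotics-J} into the exact formulas of \autoref{thm:inverse-spectral-map-formulas}/\autoref{rem:solution-alternative-form}, take $\tfrac12\ln$, and cancel common factors in the ratio of $\Psi$'s to recover the $R',R'',S',S''$ constants, with the velocity claims following because the $o(1)$ errors are exponentially small and survive differentiation. Your explicit treatment of which summand dominates in the $j=1$ and $j=K$ boundary cases is exactly the bookkeeping the paper declares ``entirely similar'' and omits.
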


\begin{remark}
  The formulas in \autoref{thm:asymptotics-generalK-positions}
  are somewhat involved, so let us say a few words about their structure
  before we delve into the details of the proof.
  (Cf. the discussion of the case $K=2$ in \autoref{sec:dynamics-2}.)

  The essential feature is that each curve $x = x_k(t)$ approaches a certain
  straight line $x=At+B$ as $t \to -\infty$,
  and another line $x=Ct+D$ as $t \to +\infty$.
  The coefficients $A$ and~$C$ are the asymptotic velocities of the peakon
  in question, and they belong to the set $\{ c_1,c_2,\dots,c_{2K-1} \}$.
  So even though there are $2K$ peakons, there are only $2K-1$ asymptotic
  velocities, which are numbered in decreasing order,
  \begin{equation*}
    c_1 > c_2 > \dots > c_{2K-1}
    ,
  \end{equation*}
  and only depend on the eigenvalues
  (see \autoref{def:notation-asymptotics-positions}).
  The coefficients $B$ and~$D$ are given by more complicated expressions
  involving all the spectral variables.

  As $t \to -\infty$, the two leftmost peakons both asymptotically have the 
  fastest velocity~$c_1$, and in fact we see from the cases $j=1$
  in \eqref{eq:asy-K-pos-minus-odd} and~\eqref{eq:asy-K-pos-minus-even}
  that the curves $x=x_1(t)$ and $x=x_2(t)$
  approach the same asymptotic line
  \begin{equation*}
    x =
    c_1 t
    +
    B
    ,\qquad
    \text{where}
    \quad
    B =
    \frac12 \ln \frac{2 \, a_1(0) \, b_1(0)}{\lambda_1 + \mu_1}
    +
    \sum_{r=2}^K R_{r1}'
    +
    \sum_{s=2}^{K-1} S_{s1}'
    ,
  \end{equation*}
  i.e., the distance $x_2(t)-x_1(t)$ tends to zero.
  The third peakon has asymptotic velocity $c_2$,
  the fourth one~$c_3$, and so on.

  Similarly, as $t \to +\infty$,
  we see from the cases $j=1$ in
  in \eqref{eq:asy-K-pos-plus-odd} and~\eqref{eq:asy-K-pos-plus-even}
  that the curves $x=x_{2K-1}(t)$ and $x=x_{2K}(t)$ have the same asymptote
  \begin{equation*}
    x = c_1 t + D
    ,\qquad
    \text{where}
    \quad
    D =
    \frac12 \ln \frac{2 \, a_1(0) \, b_1(0)}{\lambda_1 + \mu_1}
    ,
  \end{equation*}
  so the two rightmost peakons approach each other and both asymptotically
  have the fastest velocity~$c_1$.
  The third peakon from the right (number $2K-2$) has asymptotic velocity~$c_2$,
  and so on.

  Recall that $x_1(t)$ and $x_{2K}(t)$ are given by formulas which look
  different from the ones for the other positions~$x_j(t)$;
  in particular,
  the spectral variable $b_{\infty}$ only enters in the formula for~$x_{2K}$,
  and $b^*_{\infty}$ only affects~$x_1$
  (see \autoref{thm:inverse-spectral-map-formulas}).
  This is the reason for the division into cases in the asymptotic formulas here.
  If we look at the curve $x_1(t)$
  and its asymptote $x = At+B$ as $t \to -\infty$,
  we see that both its velocity $A=c_1$ and the coefficient~$B$
  deviate completely from the pattern followed by the other odd-numbered
  peakons~$x_{2j-1}$.
  However, in the asymptote $x = Ct+D$ for $x_1(t)$ as $t \to +\infty$
  (the case $j=K$ in~\eqref{eq:asy-K-pos-plus-odd}),
  it is only $D$ which is exceptional; the velocity $C=c_{2K-1}$ follows
  the general pattern $\dot x_{2(K+1-j)-1} \sim c_{2j-1}$.
  Similar remarks apply to $x_{2K}(t)$: it has the expected velocity $c_{2K-1}$
  as $t \to -\infty$,
  but apart from that, its asymptotics
  deviate from the pattern followed by the other even-numbered peakons.
\end{remark}

\begin{proof}[Proof of \autoref{thm:asymptotics-generalK-positions}]
  We will work out the details for $x_{2(K+1-j)}$ with $2 \le j \le K$
  as~$t \to +\infty$.
  The proofs for the other cases are entirely similar and will be omitted.
  The formulas for velocities follow from the ones from positions,
  because the $o(1)$ terms and their derivatives are actually
  bounded
  by factors of the form $e^{\mp\delta t}$ with $\delta > 0$,
  as $t \to \pm\infty$;
  see the proof of \autoref{lem:asymptotics-J}.

  From \autoref{thm:inverse-spectral-map-formulas}
  we have the exact formula for the solution:
  \begin{equation*}
    x_{2(K+1-j)}(t) = \frac12 \ln \left( \frac{2 \, \heineintegral_{j,j-1}^{00}(t)}{\heineintegral_{j-1,j-2}^{11}(t)} \right)
    .
  \end{equation*}
  The asymptotic behaviour of the factors as $t \to +\infty$
  is given by \autoref{lem:asymptotics-J}:
  \begin{equation*}
    \heineintegral_{j,j-1}^{00}(t)
    =
    \bigl( 1+o(1) \bigr) \,
    \Psi_{[j]\, [j-1]} \, a_{[j]}(t) \, b_{[j-1]}(t)
  \end{equation*}
  and
  \begin{equation*}
    \heineintegral_{j-1,j-2}^{11}
    =
    \bigl( 1+o(1) \bigr) \,
    \Psi_{[j-1]\, [j-2]} \, \lambda_{[j-1]} \, \mu_{[j-2]} \, a_{[j-1]}(t) \, b_{[j-2]}(t)
    .
  \end{equation*}
  It follows that
  \begin{equation*}
    \begin{split}
      x_{2(K+1-j)}(t)
      &=
      \frac12 \ln \left( \bigl( 1+o(1) \bigr) \, \frac{2 \, \Psi_{[j][j-1]} \, a_{[j]}(t) \, b_{[j-1]}(t)}{\Psi_{[j-1][j-2]} \, \lambda_{[j-1]} \, \mu_{[j-2]} \, a_{[j-1]}(t) \, b_{[j-2]}(t)} \right)
      \\
      &=
      \frac12 \ln \left( \frac{2 \, \Psi_{[j][j-1]} \, a_{j}(t) \, b_{j-1}(t)}{\Psi_{[j-1][j-2]} \, \lambda_{[j-1]} \, \mu_{[j-2]}} \right)
      + \frac12 \ln \bigl( 1+o(1) \bigr)
      \\
      &=
      \frac12 \ln \left( \frac{2 \, \Psi_{[j][j-1]} \, a_{j}(0) \, b_{j-1}(0) \, e^{t/\lambda_j} \, e^{t/\mu_{j-1}}}{\Psi_{[j-1][j-2]} \, \lambda_{[j-1]} \, \mu_{[j-2]}} \right)
      + o(1)
      \\
      &=
      \frac{t}{2} \left( \frac{1}{\lambda_j} + \frac{1}{\mu_{j-1}} \right)
      +\frac12 \ln \frac{2 \Psi_{[j][j-1]} \, a_j(0) \, b_{j-1}(0)}{\Psi_{[j-1][j-2]} \, \lambda_{[j-1]} \, \mu_{[j-2]}}
      + o(1)
      .
    \end{split}
  \end{equation*}
  We now obtain the claimed formula by expanding the definition
  of $\Psi_{IJ}$ and
  cancelling all common factors from the ratios involving $\Delta_I^2$,
  $\twin \Delta_J^2$ and~$\Gamma_{IJ}$:
  \begin{equation*}
    \begin{split}
      &
      \frac{\Psi_{[j][j-1]}}{\Psi_{[j-1][j-2]} \, \lambda_{[j-1]} \, \mu_{[j-2]}}
      \\
      &=
      \frac{\Delta_{[j]}^2}{\Delta_{[j-1]}^2}
      \times
      \frac{\twin\Delta_{[j-1]}^2}{\twin\Delta_{[j-2]}^2}
      \times
      \frac{\Gamma_{[j-1][j-2]}}{\Gamma_{[j][j-1]}}
      \times
      \frac{1}{\lambda_{[j-1]} \, \mu_{[j-2]}}
      \\
      &=
      \frac{
        \displaystyle
        \left( \prod_{r=1}^{j-1} (\lambda_r - \lambda_j)^2 \right)
        \left( \prod_{s=1}^{j-2} (\mu_s - \mu_{j-1})^2 \right)
      }{
        \displaystyle
        \left( \prod_{r=1}^{j-1} (\lambda_r + \mu_{j-1}) \right)
        \left( \prod_{s=1}^{j-2} (\lambda_j + \mu_s) \right)
        (\lambda_j + \mu_{j-1})
        \left( \prod_{r=1}^{j-1} \lambda_r \right)
        \left( \prod_{s=1}^{j-2} \mu_s \right)
      }
      \\
      &=
      \left( \prod_{r=1}^{j-1} \frac{(\lambda_r - \lambda_j)^2}{\lambda_r (\lambda_r + \mu_{j-1})} \right)
      \left( \prod_{s=1}^{j-2} \frac{(\mu_s - \mu_{j-1})^2}{(\lambda_j + \mu_s) \mu_s} \right)
      \frac{1}{\lambda_j + \mu_{j-1}}
      .
    \end{split}
  \end{equation*}
  (The empty product appearing when $j=2$ should be read as having the value~$1$.)
\end{proof}

\begin{corollary}[Phase shifts for positions]
  \label{cor:phaseshift-positions}
  The following formulas hold for the $K+K$ interlacing Geng--Xue
  peakon solution with $K \ge 2$:
  \begin{subequations}
    \begin{equation}
      \label{eq:phaseshift-pos-fastest}
      \begin{split}
        &
        \lim_{t \to +\infty} \bigl( x_a(t) - c_1 t \bigr)
        - \lim_{t \to -\infty} \bigl( x_b(t) - c_1 t \bigr)
        \\ &
        \qquad
        =
        - \sum_{r=2}^K R_{r1}'
        - \sum_{s=2}^{K-1} S_{s1}'
        ,\qquad
        \text{for $a \in \bigl\{ 2K-1, \, 2K \bigr\}$ and $b \in \bigl\{ 1, \, 2 \bigr\}$}
        ,
      \end{split}
    \end{equation}
    \begin{equation}
      \label{eq:phaseshift-pos-odd}
      \begin{split}
        &
        \lim_{t \to +\infty} \bigl( x_{2(K+1-j)-1}(t) - c_{2j-1} t \bigr)
        - \lim_{t \to -\infty} \bigl( x_{2j}(t) - c_{2j-1} t \bigr)
        \\ &
        \qquad
        =
        \sum_{r=1}^{j-1} R_{rj}'
        - \sum_{r=j+1}^K R_{rj}'
        + \sum_{s=1}^{j-1} S_{sj}'
        - \sum_{s=j+1}^{K-1} S_{sj}'
        ,\qquad
        \text{for $j=1,\dots,K-1$}
        ,
      \end{split}
    \end{equation}
    \begin{equation}
      \label{eq:phaseshift-pos-even}
      \begin{split}
        &
        \lim_{t \to +\infty} \bigl( x_{2(K+1-j)}(t) - c_{2j-2} t \bigr)
        - \lim_{t \to -\infty} \bigl( x_{2j-1}(t) - c_{2j-2} t \bigr)
        \\ &
        \qquad
        =
        \sum_{r=1}^{j-1} R_{rj}''
        - \sum_{r=j+1}^K R_{rj}''
        + \sum_{s=1}^{j-2} S_{sj}''
        - \sum_{s=j}^{K-1} S_{sj}''
        ,\qquad
        \text{for $j=2,\dots,K-1$}
        ,
      \end{split}
    \end{equation}
    and
    \begin{equation}
      \label{eq:phaseshift-pos-slowest}
      \begin{split}
        &
        \lim_{t \to +\infty} \bigl( x_{1}(t) - c_{2K-1} t \bigr)
        - \lim_{t \to -\infty} \bigl( x_{2K}(t) - c_{2K-1} t \bigr)
        \\ &
        \qquad
        =
        \frac12 \ln \frac{M}{2L}
        - \frac12 \ln \bigl( b_{\infty} \, b_{\infty}^* \bigr)
        + \frac12 \sum_{r=1}^{K-1} \ln \frac{(\lambda_r - \lambda_K)^2}{\lambda_r (\lambda_K + \mu_r)}
        .
      \end{split}
    \end{equation}
  \end{subequations}
\end{corollary}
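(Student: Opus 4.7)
The plan is straightforward: \autoref{cor:phaseshift-positions} is a direct corollary of \autoref{thm:asymptotics-generalK-positions}, obtained by identifying, for each asymptotic velocity $c_k$, the two peakon trajectories whose incoming and outgoing asymptotes have slope $c_k$, and then subtracting the corresponding additive constants. No additional dynamical input is required; the $o(1)$ error terms cancel in the limits.

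First, I would match up peakons with each asymptotic velocity. From \autoref{thm:asymptotics-generalK-positions}, the velocity $c_1$ governs $\dot x_1$ and $\dot x_2$ as $t\to-\infty$, and $\dot x_{2K-1}$ and $\dot x_{2K}$ as $t\to+\infty$; this gives the four pairings in~\eqref{eq:phaseshift-pos-fastest}. For intermediate $j\in\{2,\dots,K-1\}$, the velocity $c_{2j-1}$ pairs $x_{2j}$ (at $-\infty$) with $x_{2(K+1-j)-1}$ (at $+\infty$), while $c_{2j-2}$ pairs $x_{2j-1}$ with $x_{2(K+1-j)}$. Finally, $c_{2K-1}$ pairs $x_{2K}$ (at $-\infty$) with $x_1$ (at $+\infty$), corresponding to the ``slow'' exceptional cases in~\eqref{eq:asy-K-pos-minus-even} ($j=K$) and~\eqref{eq:asy-K-pos-plus-odd} ($j=K$).

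Next, for each pairing I would subtract the constant terms as given in \autoref{thm:asymptotics-generalK-positions}. For the generic case~\eqref{eq:phaseshift-pos-odd}, the terms $\tfrac12\ln\bigl(2 a_j(0)b_j(0)/(\lambda_j+\mu_j)\bigr)$ appear in both asymptotes (for $x_{2j}$ as $t\to-\infty$ from the first case of~\eqref{eq:asy-K-pos-minus-even} and for $x_{2(K+1-j)-1}$ as $t\to+\infty$ from the first case of~\eqref{eq:asy-K-pos-plus-odd}) and cancel, leaving exactly the stated combination of $R'_{rj}$ and $S'_{sj}$ sums split across $r<j$ versus $r>j$ (and likewise for $s$). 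The even-numbered case~\eqref{eq:phaseshift-pos-even} is identical in structure, using instead the constants $\tfrac12\ln\bigl(2 a_j(0)b_{j-1}(0)/(\lambda_j+\mu_{j-1})\bigr)$ together with the primed quantities $R''_{rj}$ and $S''_{sj}$. For~\eqref{eq:phaseshift-pos-fastest}, specializing the formulas to $j=1$ makes the two ``lower'' sums $\sum_{r=1}^{0}$ and $\sum_{s=1}^{0}$ empty, so the $+\infty$ asymptote constant is simply $\tfrac12\ln\bigl(2a_1(0)b_1(0)/(\lambda_1+\mu_1)\bigr)$, and subtraction produces the desired formula with a single overall minus sign.

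The slowest case~\eqref{eq:phaseshift-pos-slowest} requires the most bookkeeping: the $+\infty$ asymptote for $x_1$ from the $j=K$ case of~\eqref{eq:asy-K-pos-plus-odd} contributes $\tfrac12\ln\bigl(M a_K(0)/(Lb^*_\infty)\bigr)$ together with the explicit sum over $r$, while the $-\infty$ asymptote for $x_{2K}$ from the $j=K$ case of~\eqref{eq:asy-K-pos-minus-even} contributes $\tfrac12\ln\bigl(2a_K(0)b_\infty\bigr)$. Subtracting, the $a_K(0)$ factors cancel inside the logarithms and one collects the remaining constants as
\[
  \tfrac12 \ln \frac{M}{2L} \;-\; \tfrac12 \ln\bigl(b_\infty b^*_\infty\bigr) \;+\; \tfrac12 \sum_{r=1}^{K-1} \ln \frac{(\lambda_r-\lambda_K)^2}{\lambda_r (\lambda_K + \mu_r)},
\]
which is the claimed expression. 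The only mild obstacle in the whole argument is verifying that the additive constants coming from the two different branches of \autoref{thm:asymptotics-generalK-positions} (odd/even indexing, and the exceptional $j=1$ and $j=K$ cases) are correctly paired up; once that accounting is fixed, the result is immediate.
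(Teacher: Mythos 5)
Your proposal is correct and is exactly the argument the paper intends: the corollary is obtained by pairing the incoming and outgoing asymptotes of equal slope from \autoref{thm:asymptotics-generalK-positions} and subtracting the additive constants, with the common term $\tfrac12\ln\bigl(2a_j(0)b_j(0)/(\lambda_j+\mu_j)\bigr)$ (resp.\ its primed analogue) cancelling in each generic case. Your bookkeeping of the exceptional $j=1$ and $j=K$ branches, including the cancellation of $a_K(0)$ in the slowest case, matches the paper's computation.
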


\begin{remark}
  This way of writing the formulas is slightly redundant,
  since the case $j=1$ of~\eqref{eq:phaseshift-pos-odd}
  is already included in~\eqref{eq:phaseshift-pos-fastest}
  as the case $a=2K-1$, $b=2$.
  The purpose of including $j=1$ in~\eqref{eq:phaseshift-pos-odd}
  is to show that the pattern persists,
  and the purpose of~\eqref{eq:phaseshift-pos-fastest}
  is to emphasize that the curves $x=x_1(t)$ and $x=x_2(t)$
  have a common asymptote as $t \to -\infty$,
  which is parallel to the
  common asymptote of $x=x_{2K-1}(t)$ and $x=x_{2K}(t)$ as $t \to +\infty$.
\end{remark}

\subsection{Asymptotics for amplitudes}

Next, we turn to the asymptotics of the
amplitudes $m_k$ and~$n_k$,
which exhibit exponential growth or decay (or tend to constant
values in borderline cases),
so that their logarithms asymptotically behave like straight lines.
To obtain concise formulas, we make definitions similar to
\autoref{def:notation-asymptotics-positions} above.

\begin{definition}
  \label{def:notation-asymptotics-amplitudes}
  Let
  \begin{equation}
    \label{eq:asymptotic-amplitude-slopes}
    \begin{aligned}
      d_{2j}
      &=
      \frac12 \left( \frac{1}{\lambda_{j+1}} - \frac{1}{\mu_{j}} \right)
      ,\quad
      j = 1, \dots, K-1
      ,\\[1.5ex]
      d_{2j-1}
      &=
      \begin{cases}
        \displaystyle
        \frac12 \left( \frac{1}{\lambda_{j}} - \frac{1}{\mu_{j}} \right)
        ,&
        j = 1, \dots, K-1
        ,\\[2ex] \displaystyle
        \frac12 \left( \frac{1}{\lambda_{K}} \right)
        ,&
        j = K
        ,
      \end{cases}
    \end{aligned}
  \end{equation}
  and
  \begin{equation}
    \begin{aligned}
      P_{rj}' &= \frac12 \ln \frac{\bigl( \lambda_r - \lambda_j \bigr)^2 (\lambda_r + \mu_j)}{\lambda_r^3}
      ,\\
      P_{rj}'' &= \frac12 \ln \frac{\bigl( \lambda_r - \lambda_j \bigr)^2 (\lambda_r + \mu_{j-1})}{\lambda_r^3}
      ,\\
      Q_{sj}' &= \frac12 \ln \frac{\bigl( \mu_s - \mu_j \bigr)^2 (\lambda_j + \mu_s)}{\mu_s^3}
      ,\\
      Q_{sj}'' &= \frac12 \ln \frac{\bigl( \mu_s - \mu_{j-1} \bigr)^2 (\lambda_j + \mu_s)}{\mu_s^3}
      .
    \end{aligned}
  \end{equation}
\end{definition}

\begin{theorem}[Asymptotics for amplitudes]
  \label{thm:asymptotics-generalK-amplitudes}
  \begin{subequations}
  In terms of the abbreviations of \autoref{def:notation-asymptotics-amplitudes},
  the amplitudes in the $K+K$ interlacing Geng--Xue peakon solution
  with $K \ge 2$
  satisfy the following asymptotic formulas.

  Asymptotics for amplitudes as $t \to -\infty$\,:
  \begin{equation}
    \ln m_{2j-1}(t) =
    \begin{cases}
      \displaystyle
      d_1 t
      + \ln \frac{\mu_1}{\lambda_1}
      + \frac12 \ln \frac{a_1(0)}{2 \, b_1(0) \, (\lambda_1 + \mu_1)}
      & \\ \qquad \displaystyle
      + \sum_{r=2}^K P_{r1}'
      - \sum_{s=2}^{K-1} Q_{s1}'
      + o(1)
      ,&
      j = 1
      ,\\[2em] \displaystyle
      d_{2j-2} \, t
      - \ln \lambda_j
      + \frac12 \ln \frac{a_j(0) \, (\lambda_j + \mu_{j-1})}{2 \, b_{j-1}(0)}
      & \\ \qquad \displaystyle
      + \sum_{r=j+1}^K P_{rj}''
      - \sum_{s=j}^{K-1} Q_{sj}''
      + o(1)
      ,&
      j = 2,\dots,K
      ,
    \end{cases}
  \end{equation}
  and
  \begin{equation}
    -\ln n_{2j}(t) =
    \begin{cases}
      \displaystyle
      d_{2j-1} \, t
      + \ln \mu_j
      + \frac12 \ln \frac{2 \, a_j(0)}{b_j(0) \, (\lambda_j + \mu_j)}
      & \\ \qquad \displaystyle
      + \sum_{r=j+1}^K P_{rj}'
      - \sum_{s=j+1}^{K-1} Q_{sj}'
      + o(1)
      ,&
      j=1,\dots,K-1
      ,\\[2em] \displaystyle
      d_{2K-1} \, t
      + \frac12 \ln \frac{2 a_K(0)}{b_{\infty}}
      + o(1)
      ,&
      j= K
      ,
    \end{cases}
  \end{equation}
  Asymptotics for amplitudes as $t \to +\infty$\,:
  \begin{equation}
    -\ln n_{2(K+1-j)}(t) =
    \begin{cases}
      \displaystyle
      d_1 t
      + \frac12 \ln \frac{2 \, a_1(0) \, (\lambda_1 + \mu_1)}{b_1(0)}
      + o(1)
      ,&
      j = 1
      ,\\[2em] \displaystyle
      d_{2j-2} \, t
      +\ln \mu_{j-1}
      + \frac12 \ln \frac{2 \, a_j(0)}{b_{j-1}(0) \, (\lambda_j + \mu_{j-1})}
      & \\ \qquad \displaystyle
      + \sum_{r=1}^{j-1} P_{rj}''
      - \sum_{s=1}^{j-2} Q_{sj}''
      + o(1)
      ,&
      j = 2,\dots,K
      ,
    \end{cases}
  \end{equation}
  and
  \begin{equation}
    \ln m_{2(K+1-j)-1}(t) =
    \begin{cases}
      \displaystyle
      d_{2j-1} \, t
      - \ln \lambda_j
      + \frac12 \ln \frac{a_j(0) \, (\lambda_j + \mu_j)}{2 \, b_j(0)}
      & \\ \qquad \displaystyle
      + \sum_{r=1}^{j-1} P_{rj}'
      - \sum_{s=1}^{j-1} Q_{sj}'
      + o(1)
      ,&
      j=1,\dots,K-1
      ,\\[2em] \displaystyle
      d_{2K-1} \, t
      + \frac12 \ln \left( b_{\infty}^* \, a_K(0) \, \frac{M}{L} \right)
      & \\[1em] \qquad \displaystyle
      + \frac12 \sum_{r=1}^{K-1} \ln \frac{\bigl( \lambda_r - \lambda_K \bigr)^2}{\lambda_r (\lambda_K + \mu_r)}
      ,&
      j= K
      .
    \end{cases}
  \end{equation}
  \end{subequations}
\end{theorem}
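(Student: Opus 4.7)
The plan is to follow verbatim the strategy already used for \autoref{thm:asymptotics-generalK-positions}: start from the explicit amplitude formulas in \autoref{thm:inverse-spectral-map-formulas}, replace every $\heineintegral_{nm}^{rs}(t)$ by its dominant term from \autoref{lem:asymptotics-J}, substitute $a_i(t) = a_i(0) \, e^{t/\lambda_i}$ and $b_j(t) = b_j(0) \, e^{t/\mu_j}$, take the logarithm, and use $\ln(1+o(1)) = o(1)$. The outcome has the shape $(\text{slope}) \cdot t + (\text{constant}) + o(1)$; the slope is what remains after the exponential factors partly cancel, and it coincides with one of the $d_k$'s of \autoref{def:notation-asymptotics-amplitudes}, the specific $d_k$ being determined by which $a_i$'s and $b_j$'s survive.

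The constant is obtained by simplifying ratios of $\Psi_{IJ}$'s through the expansion $\Psi_{IJ} = \Delta_I^2 \twin\Delta_J^2 / \Gamma_{IJ}$. All Vandermonde factors that appear both in numerator and denominator cancel, and the leftover pieces assemble into products of terms of the form $(\lambda_r - \lambda_j)^2 (\lambda_r + \mu_\bullet)/\lambda_r^3$ and $(\mu_s - \mu_\bullet)^2 (\lambda_j + \mu_s)/\mu_s^3$ --- precisely the $P_{rj}'$, $P_{rj}''$, $Q_{sj}'$, $Q_{sj}''$ of \autoref{def:notation-asymptotics-amplitudes}, with the choice of single- versus double-prime recording whether the remaining Cauchy factor carries $\mu_j$ or $\mu_{j-1}$. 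The combinatorics is the same as for the position asymptotics; only the $\heineintegral$-factors occurring in the amplitude formulas differ, so the index-by-index accounting follows the same template.

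The only genuine subtlety lies in the two boundary cases $j = K$. For $-\ln n_{2K}(t)$ as $t \to -\infty$, one must decide inside $\sqrt{(I_{00} + b_\infty \alpha_0)/2}$ which summand dominates: since $b_1(t) \to 0$, it is $b_\infty \alpha_0$ that wins, producing the $b_\infty$-contribution. For $\ln m_1(t)$ as $t \to +\infty$, the composite factor $\heineintegral_{K-1,K-2}^{11} + (2 b^*_\infty L / M) \, \heineintegral_{K-1,K-1}^{10}$ has two summands whose ratio is proportional to $b_{K-1}(t)$; since $b_{K-1}(t) \to +\infty$, the $b^*_\infty$-summand wins, contributing the terms involving $b^*_\infty$ and the final Vandermonde sum in the theorem statement. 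In the opposite time limits these boundary formulas revert to the generic pattern, so no additional work is required there.

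The main obstacle is purely notational: verifying that after every cancellation the surviving Vandermonde and Cauchy factors re-organise themselves into exactly the advertised products of $P$'s and $Q$'s. We would carry out one representative interior case in full detail --- say $\ln m_{2(K+1-j)-1}$ for $1 \le j \le K-1$ as $t \to +\infty$, which is the amplitude analogue of the case worked out in detail in the proof of \autoref{thm:asymptotics-generalK-positions} --- and then invoke the same manipulation, with the appropriate index shifts and prime/double-prime choices, for each of the remaining sub-cases.
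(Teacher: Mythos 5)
Your proposal is correct and coincides with the paper's approach: the paper's own proof of this theorem consists entirely of the remark that the calculations are very similar to those for \autoref{thm:asymptotics-generalK-positions}, with details omitted, and your plan (dominant-term substitution via \autoref{lem:asymptotics-J}, cancellation of $\Psi_{IJ}$ ratios into the $P$'s and $Q$'s, plus the separate dominance analysis for the $b_{\infty}$ and $b_{\infty}^*$ summands in the boundary cases $j=K$) is exactly that omitted computation. Your handling of the two boundary cases is the right one and is the only place where anything beyond the positions template is needed.
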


\begin{proof}
  Calculations very similar to those in the proof of
  \autoref{thm:asymptotics-generalK-positions}.
  We omit the details.
\end{proof}

\begin{corollary}[Phase shifts for amplitudes]
  \label{cor:phaseshift-amplitudes}
  \begin{subequations}
  The following formulas hold for the $K+K$ interlacing Geng--Xue
  peakon solution with $K \ge 2$:
  \begin{equation}
    \begin{split}
      &
      \lim_{t \to +\infty} \Bigl( -\ln n_{2K}(t) - d_1 t \Bigl)
      - \lim_{t \to -\infty} \Bigl( \ln m_1(t) - d_1 t \Bigr)
      \\ & \qquad
      =
      \ln \frac{2 \lambda_1 (\lambda_1 + \mu_1)}{\mu_1}
      - \sum_{r=2}^K P_{r1}'
      + \sum_{s=2}^{K-1} Q_{s1}'
      ,
    \end{split}
  \end{equation}
  \begin{equation}
    \begin{split}
      &
      \lim_{t \to +\infty} \Bigl( -\ln n_{2(K+1-j)}(t) - d_{2j-2} \, t \Bigl)
      - \lim_{t \to -\infty} \Bigl( \ln m_{2j-1}(t) - d_{2j-2} \, t \Bigr)
      \\ & \qquad
      =
      \ln \frac{2 \lambda_j \mu_{j-1}}{\lambda_j + \mu_{j-1}}
      + \sum_{r=1}^{j-1} P_{rj}''
      - \sum_{r=j+1}^K P_{rj}''
      \\ & \qquad\quad
      - \sum_{s=1}^{j-2} Q_{sj}''
      + \sum_{s=j}^{K-1} Q_{sj}''
      ,\qquad
      \text{for $j=2,\dots,K$}
      ,
    \end{split}
  \end{equation}
  \begin{equation}
    \begin{split}
      &
      \lim_{t \to +\infty} \Bigl( \ln m_{2(K+1-j)-1}(t) - d_{2j-1} \, t \Bigr)
      - \lim_{t \to -\infty} \Bigl( -\ln n_{2j}(t) - d_{2j-1} \, t \Bigl)
      \\ & \qquad
      =
      - \ln \frac{2 \lambda_j \mu_j}{\lambda_j + \mu_j}
      + \sum_{r=1}^{j-1} P_{rj}'
      - \sum_{r=j+1}^K P_{rj}'
      \\ & \qquad\quad
      - \sum_{s=1}^{j-1} Q_{sj}'
      + \sum_{s=j+1}^{K-1} Q_{sj}'
      ,\qquad
      \text{for $j=1,\dots,K-1$}
      ,
    \end{split}
  \end{equation}
  and
  \begin{equation}
    \begin{split}
      &
      \lim_{t \to +\infty} \Bigl( \ln m_1(t) - d_{2K-1} \, t \Bigr)
      - \lim_{t \to -\infty} \Bigl( -\ln n_{2K}(t) - d_{2K-1} \, t \Bigl)
      \\ & \qquad
      =
      \frac12 \ln \frac{M}{2L}
      + \frac12 \ln \bigl( b_{\infty} \, b_{\infty}^* \bigr)
      + \frac12 \sum_{r=1}^{K-1} \ln \frac{\bigl( \lambda_r - \lambda_K \bigr)^2}{\lambda_r (\lambda_K + \mu_r)}
      .
    \end{split}
  \end{equation}
  \end{subequations}
\end{corollary}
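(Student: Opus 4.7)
The proof is essentially a direct consequence of \autoref{thm:asymptotics-generalK-amplitudes}, so my plan is to substitute the four cases of that theorem into the four differences written in the corollary and simplify. The key observation which makes the definitions well-posed is that the slopes $d_k$ of the asymptotes match in the pairings we compare: at $t\to+\infty$ the quantity $-\ln n_{2(K+1-j)}$ has slope $d_{2j-2}$ (for $j\ge 2$) while at $t\to-\infty$ the quantity $\ln m_{2j-1}$ has the same slope $d_{2j-2}$ (for $j\ge 2$), and symmetrically for the odd-indexed amplitudes via $d_{2j-1}$. Thus the $d_k t$ terms cancel in each difference, the $o(1)$ error terms drop out in the limit, and what remains is a difference of constants.

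First I would handle the "extreme" pairings separately, since these involve the special constants $b_\infty$ and $b_\infty^*$. For the first formula, at $t\to+\infty$ the $j=1$ branch of $-\ln n_{2K}$ gives the constant $\tfrac12\ln\frac{2a_1(0)(\lambda_1+\mu_1)}{b_1(0)}$, and at $t\to-\infty$ the $j=1$ branch of $\ln m_1$ gives $\ln\frac{\mu_1}{\lambda_1}+\tfrac12\ln\frac{a_1(0)}{2b_1(0)(\lambda_1+\mu_1)}+\sum_{r=2}^K P_{r1}'-\sum_{s=2}^{K-1}Q_{s1}'$. Subtracting, the $a_1(0)$ and $b_1(0)$ drop, and the surviving logarithms combine into $\ln\frac{2\lambda_1(\lambda_1+\mu_1)}{\mu_1}$, up to the remaining sums of $P'$ and $Q'$. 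The slowest pair is handled analogously using the $j=K$ cases, where $b_\infty$ enters via $-\ln n_{2K}$ at $t\to-\infty$ and $b_\infty^*$ enters via $\ln m_1$ at $t\to+\infty$; this is where the factor $\tfrac12\ln(b_\infty b_\infty^*)$ appears and accounts for the fact that the other three formulas are spectrally symmetric whereas this one is not.

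For the two "interior" pairings (valid for $j=2,\dots,K$ and $j=1,\dots,K-1$ respectively), the calculation is the same pattern: $a_j(0)$ and $b_{j-1}(0)$ (or $b_j(0)$) cancel, the factors of $\lambda_j$, $\mu_{j-1}$ (or $\mu_j$) combine into $\ln\frac{2\lambda_j\mu_{j-1}}{\lambda_j+\mu_{j-1}}$ (or its analogue), and the $P''$, $Q''$ (resp.\ $P'$, $Q'$) sums split according to whether the index is below or above the "gap" $j$. Concretely, the $-\infty$ asymptote contributes $\sum_{r=j+1}^K P''$ and $\sum_{s=j}^{K-1}Q''$, while the $+\infty$ asymptote contributes $\sum_{r=1}^{j-1}P''$ and $\sum_{s=1}^{j-2}Q''$, and their signed combination in the difference yields the stated formula.

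The only real obstacle is bookkeeping: making sure the index ranges of the $P', P'', Q', Q''$ sums are lined up correctly (since the two limits have the \emph{same} slope $d_k$ but come from different values of the parameter $j$ in \autoref{thm:asymptotics-generalK-amplitudes}), and verifying that the "obvious" logarithmic identities produce exactly the compact forms written in the corollary. No new analytic input is needed beyond the preceding theorem; once the substitution is carried out carefully, each of the four identities reduces to a finite algebraic check.
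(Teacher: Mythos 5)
Your proposal is correct and coincides with the paper's (implicit) argument: the corollary is obtained exactly by pairing the two branches of \autoref{thm:asymptotics-generalK-amplitudes} that share the same slope $d_k$, subtracting the constant terms, and simplifying the logarithms. The sample computation you give for the first formula (the $a_1(0)$, $b_1(0)$ cancellation producing $\ln\frac{2\lambda_1(\lambda_1+\mu_1)}{\mu_1}$) and your accounting of where $b_\infty$, $b_\infty^*$ and the split $P$, $Q$ sums enter are all accurate.
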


\section{Concluding remarks}
\label{sec:conclusions}

In this article we have given a fairly complete treatment
of interlacing pure $(K+K)$-peakon solutions of the two-component
Geng--Xue equation.
To our knowledge, this is the first multi-component peakon equation
for which the peakon ODEs have been solved explicitly.
The third-order inverse spectral problem used for deriving the explicit
solution formulas involves two Lax pairs and correspondingly
two spectra~\cite{lundmark-szmigielski:GX-inverse-problem}.
(There are some similarities to the $3\times 3$ inverse problems
studied by Kaup and collaborators,
for example in~\cite{kaup-vangorder:nondegenerate-3x3-operator},
and it might be interesting to investigate whether there are
any deeper connections.)
This is the first application to peakon equations
of the theory of Cauchy biorthogonal polynomials
\cite{bertola-gekhtman-szmigielski:cubicstring,
  bertola-gekhtman-szmigielski:cauchy}
in its full generality, where the polynomials are
biorthogonal with respect to two \emph{independent} spectral measures.

The interlacing peakon solutions display the Toda-like
asymptotic properties typical of other peakon equations,
with the peakons having the same asymptotic velocities
when $t\to+\infty$ as when $t\to-\infty$,
but in the opposite order,
and the velocities depending only on the eigenvalues in the two spectra.
A noteworthy feature, however, is that the peakons don't scatter completely;
instead, the two fastest peakons have the same asymptotic velocity,
and in fact the distance between them tends to zero.
Another interesting phenomenon is that the amplitudes of the peakons
grow or decay exponentially, instead of just approaching constant
values as is usually the case for peakons,
and that the logarithms of the amplitudes exhibit similar
scattering and phase shifts as the positions.

In the case of non-interlacing peakon configurations,
to be studied in a separate article,
we anticipate that there will be even less scattering,
and more peakons clustering together with the same asymptotic velocity.

We have also reported on the discovery that the Geng--Xue equation admits
discontinuous shockpeakon solutions,
like the Degasperis--Procesi equation.
Although we were able to integrate the $1+1$ shockpeakon ODEs,
there are still several questions open for further investigation,
concerning for example
the status of the Lax pairs in the context of shockpeakons,
formation of shockpeakons at peakon--antipeakon collisions,
continuation of solutions past singularities,
and the possibility of allowing overlapping peakon or shockpeakon
solutions.

\appendix

\section{Distributional interpretation of the Lax pair} 
\label{sec:lax-pair}

The aim of this appendix is to derive the Geng--Xue
shockpeakon ODEs~\eqref{eq:GX-shockpeakon-ode},
which of course include the peakon ODEs~\eqref{eq:GX-peakon-ode}
as a special case,
and to verify that the Lax pairs
\eqref{eq:laxI} and~\eqref{eq:laxII} are valid not only for smooth solutions,
but also for peakon solutions. In other words, we want to show that these Lax pairs,
if given an appropriate distributional interpretation, really are equivalent to the
peakon ODEs~\eqref{eq:GX-peakon-ode} when $u(x,t)$ and~$v(x,t)$
are given by the peakon ansatz~\eqref{eq:GXpeakons}
with non-overlapping peakons.
The analysis is similar to the one done for Novikov peakons in
Appendix~B of our previous article~\cite{hone-lundmark-szmigielski:novikov}.

\subsection{Notation}
\label{sec:notation-distributions}

The functions $u$ and~$v$ considered in this article, as well as
all their partial derivatives in the classical sense,
belong to a certain family of piecewise smooth functions
which we will denote by ~$\piecewiseclass$.
We say that a function $f(x,t)$ belongs to~$\piecewiseclass$
if there are finitely many smooth curves
$\{ x=x_k(t), \, t \in \R \}_{k=1}^N$
such that $x_1(t) < \dots < x_N(t)$,
so that they divide the $(x,t)$ plane into $N+1$ open regions $\{ \Omega_k \}_{k=0}^N$,
and if the restriction of~$f$ to each $\Omega_k$
is a smooth function of $x$ and~$t$ which can be extended to a smooth function
on a neighbourhood of the closure~$\overline{\Omega}_k$.
We do not require that $f(x,t)$ is defined on the curves $x=x_k(t)$.
However, the assumptions imply that the left and right limits of~$f$
exist at every~$x$, and they will be denoted by $f(x^-)$ and~$f(x^+)$,
respectively.
Hence we can define the jump function and the average function,
\begin{equation}
  \label{eq:jump-avg-notation}
  \jump{f}(x) := f(x^+) - f(x^-)
  ,\qquad
  \avg{f}(x) := \frac{f(x^+) + f(x^-)}{2},
\end{equation}
which belong to~$\piecewiseclass$, 
are defined everywhere,
and satisfy $\avg{f}=f$ and $\jump{f}=0$
away from the curves $x=x_k(t)$.
In addition, we have the product rules
\begin{equation}
  \label{eq:jump-avg-products}
  \jump{fg} = \avg{f} \jump{g} + \jump{f} \avg{g},
  \qquad
  \avg{fg} = \avg{f} \avg{g} + \tfrac14 \jump{f} \jump{g}.
\end{equation}
The class~$\piecewiseclass$ is closed under partial differentiation
in the classical sense, and
we will use subscripts to denote such partial derivatives,
for example $f_x$ or~$f_{xxt}$.

On the other hand, we can also interpret functions in~$\piecewiseclass$
as distributions:
for each fixed $t$, the function $x \mapsto f(x,t)$ defines a regular distribution
in the class~$\spaceDprime$,
by acting on test functions $\varphi(x)$ in the usual way,
\begin{equation*}
  \langle f, \varphi \rangle
  = \int_{\R} f(x,t) \, \varphi(x) \, dx
  .
\end{equation*}
The notation $D_x f$ will denote the distributional derivative:
\begin{equation*}
  \langle D_x f, \varphi \rangle
  = - \langle f, \varphi_x \rangle
  .
\end{equation*}
Note that we view $f$ and~$D_x f$ as distributions with respect to the variable~$x$,
depending only parametrically on~$t$.
Therefore the time derivative $D_t f$ is defined differently,
as a limit in the space~$\spaceDprime$:
\begin{equation}
  \label{eq:def-Dt}
  D_t f(\cdot,t) = \lim_{\tau \to 0} \frac{f(\cdot,t+\tau) - f(\cdot,t)}{\tau}
  .
\end{equation}
This limit, provided it exists, commutes with~$D_x$ by the continuity of~$D_x$ on~$\spaceDprime$.

If $f\in \piecewiseclass$ with (possible) jump discontinuities
at $\{ x = x_k(t)\}_{k=1}^N$, then 
\begin{equation*}
  D_x f(\cdot,t) = f_x(\cdot,t) + \sum_{k=1}^N \jump{f}(\cdot,x_k(t)) \, \delta_{x_k(t)}
  ,
\end{equation*}
or
\begin{equation}
  \label{eq:Dx-jump}
  D_x f = f_x + \sum_{k=1}^N \jump{f}(x_k) \, \delta_{x_k}
\end{equation}
for short.
(Here, of course, $\delta_a$ denotes the Dirac delta at the point $x=a$.)
Moreover,
\begin{equation}
  \label{eq:Dt-jump}
  D_t f = f_t - \sum_{k=1}^n \dot x_k \, \jump{f}(x_k) \, \delta_{x_k},
\end{equation}
where $\dot x_k = dx_k / dt$.

We also note that
\begin{equation*}
  \tfrac{d}{dt} f(x_k(t)^{\pm},t)
  = f_x(x_k(t)^{\pm},t) \, \dot x_k(t) + f_t(x_k(t)^{\pm},t)
  ,
\end{equation*}
which gives
\begin{equation}
  \label{eq:chainrule-jump-avg}
  \begin{split}
    \tfrac{d}{dt} \bigl( \jump{f}(x_k) \bigr)
    &=
    \jump{f_x}(x_k) \, \dot x_k + \jump{f_t}(x_k)
    , \\
    \tfrac{d}{dt} \bigl( \avg{f}(x_k) \bigr)
    &=
    \avg{f_x}(x_k) \, \dot x_k + \avg{f_t}(x_k)
    .
  \end{split}
\end{equation}
Finally, we remark that the discussion above 
generalizes easily to the case of matrix-valued functions with entries
in~$\piecewiseclass$.
For example, if $A$ and~$B$ are two matrices 
with entries from $\piecewiseclass$, and the matrix product $AB$ is defined,
then 
\begin{equation*}
  \jump{AB} = \avg{A} \jump{B} + \jump{A} \avg{B},
  \qquad
  \avg{AB} = \avg{A} \avg{B} + \tfrac14 \jump{A} \jump{B}.
\end{equation*}
Likewise, equation \eqref{eq:chainrule-jump-avg} generalizes to matrices.

\subsection{Derivation of the shockpeakon ODEs}
\label{sec:proof-shockpeakon-odes}

In this section we prove \autoref{thm:GX-shockpeakons},
which says that the shockpeakon ansatz~\eqref{eq:GXshockpeakons}
is a solution of the distributional Geng--Xue equation
\eqref{eq:GX-distributional}
if and only if it is non-overlapping and satisfies the
shockpeakon ODEs~\eqref{eq:GX-shockpeakon-ode}.

\begin{proof}[Proof of \autoref{thm:GX-shockpeakons}.]
  A function $u$ given by the shockpeakon ansatz
  is piecewise of the form
  \begin{equation*}
    A \, e^x + B \, e^{-x}
    ,
  \end{equation*}
  which implies that $\frac12 u^2$
  is piecewise of the form
  \begin{equation*}
    \tfrac12 A^2 \, e^{2x} + AB + \tfrac12 B^2 \, e^{-2x}
    ,
  \end{equation*}
  which lies in the kernel of the differential operator
  $(4-D_x^2) D_x$.
  Thus, the expression
  $(4-D_x^2) D_x (\tfrac12 u^2)$
  vanishes identically away from the points $x=x_k$,
  and will therefore be a purely singular distribution:
  a linear combination of
  $\{ \delta_{x_k}, \delta'_{x_k}, \delta''_{x_k} \}_{k=1}^N$
  resulting from differentiating the jump discontinuities of $\frac12 u^2$
  at those points.

  Using the notations for jump~$\jump{u}$ and average~$\avg{u}$
  as in~\eqref{eq:jump-avg-notation},
  we immediately have
  \begin{equation}
    \label{eq:u-ux-jump}
    \jump{u}(x_k) = -2 s_k
    ,\qquad
    \jump{u_x}(x_k) = -2 m_k
    ,
  \end{equation}
  and we also recall from
  \eqref{eq:uv-shorthand-shock} and~\eqref{eq:u-ux-are-averages}
  that we defined the notation $u(x_k)$ and~$u_x(x_k)$
  simply as abbreviations for the averages,
  \begin{equation}
    u(x_k) := \avg{u}(x_k)
    ,\qquad
    u_x(x_k) := \avg{u_x}(x_k)
    .
  \end{equation}
  From the rules~\eqref{eq:jump-avg-products}, we then find the jump in~$\frac12 u^2$
  at $x=x_k$:
  \begin{equation*}
    \begin{split}
      \jump{\tfrac12 u^2}(x_k)
      &
      = \jump{u}(x_k) \cdot \avg{u}(x_k)
      \\ &
      = (-2 s_k) \cdot u(x_k)
      .
    \end{split}
  \end{equation*}
  Each such jump contributes a Dirac delta to the distributional
  derivative,
  so the singular part of the distribution
  $D_x (\tfrac12 u^2)$
  is
  \begin{equation*}
    -2 \sum_{k=1}^N s_k \, u(x_k) \, \delta_{x_k}
    ,
  \end{equation*}
  and the regular part is just
  the function $uu_x$ (the classical partial derivative of $\frac12 u^2$,
  defined away from the points $x=x_k$).
  Thus, as a distribution,
  \begin{equation}
    \label{eq:shock-first-derivative}
    D_x (\tfrac12 u^2)
    = uu_x - 2 \sum_{k=1}^N s_k \, u(x_k) \, \delta_{x_k}
    .
  \end{equation}
  In the next step, we need
  \begin{equation*}
    \begin{split}
      \jump{u u_x}(x_k)
      &
      = \jump{u}(x_k) \cdot \avg{u_x}(x_k)
      + \avg{u}(x_k) \cdot \jump{u_x}(x_k)
      \\ &
      = (-2 s_k) \cdot u_x(x_k) + u(x_k) \cdot (- 2 m_k)
    ,
    \end{split}
  \end{equation*}
  together with the fact that $u_{xx}=u$ for $x \neq x_k$
  (since $u$  is piecewise of the form $A \, e^x + B \, e^{-x}$).
  Using this, we find
  when differentiating \eqref{eq:shock-first-derivative} that,
  as a distribution,
  \begin{equation}
    \label{eq:shock-second-derivative}
    \begin{split}
      D_x^2 (\tfrac12 u^2)
      &
      = u_x^2 + uu_{xx}
      + \sum_{k=1}^N \jump{uu_x}(x_k) \, \delta_{x_k}
      - 2 \sum_{k=1}^N s_k \, u(x_k) \, \delta'_{x_k}
      \\ &
      = u^2 + u_x^2
      - 2 \sum_{k=1}^N \bigl( m_k \, u(x_k) + s_k \, u_x(x_k) \bigr) \, \delta_{x_k}
      - 2 \sum_{k=1}^N s_k \, u(x_k) \, \delta'_{x_k}
      .
    \end{split}
  \end{equation}
  For the final differentiation,
  we can reuse the result~\eqref{eq:shock-first-derivative} for
  $D_x (u^2) = 2 \, D_x (\frac12 u^2)$,
  and we also need
  \begin{equation*}
    \begin{split}
      \jump{u_x^2}(x_k)
      &
      = 2 \jump{u_x}(x_k) \cdot \avg{u_x}(x_k)
      \\ &
      = 2 \cdot (- 2 m_k) \cdot u_x(x_k)
    .
    \end{split}
  \end{equation*}
  Upon differentiating \eqref{eq:shock-second-derivative},
  this gives
  \begin{equation}
    \label{eq:shock-third-derivative}
    \begin{split}
      D_x^3 (\tfrac12 u^2)
      &
      = 2 \, D_x (\tfrac12 u^2)
      + 2 u_x u_{xx}
      + \sum_{k=1}^N \jump{u_x^2}(x_k) \, \delta_{x_k}
      \\ & \quad
      - 2 \sum_{k=1}^N \bigl( m_k \, u(x_k) + s_k \, u_x(x_k) \bigr) \, \delta'_{x_k}
      - 2 \sum_{k=1}^N s_k \, u(x_k) \, \delta''_{x_k}
      \\ &
      = 4 uu_x 
      - 4 \sum_{k=1}^N \bigl( s_k \, u(x_k) + m_k \, u_x(x_k) \bigr) \, \delta_{x_k}
      \\ & \quad
      - 2 \sum_{k=1}^N \bigl( m_k \, u(x_k) + s_k \, u_x(x_k) \bigr) \, \delta'_{x_k}
      - 2 \sum_{k=1}^N s_k \, u(x_k) \, \delta''_{x_k}
      .
    \end{split}
  \end{equation}
  Combining \eqref{eq:shock-first-derivative}
  and~\eqref{eq:shock-third-derivative},
  we get
  \begin{equation}
    \label{eq:DP-operator}
    \begin{split}
      (4-D_x^2) D_x (\tfrac12 u^2)
      &
      = 4 D_x (\tfrac12 u^2) - D_x^3 (\tfrac12 u^2)
      \\ &
      =
      4 \biggl( uu_x - 2 \sum_{k=1}^N s_k \, u(x_k) \, \delta_{x_k} \biggr)
      \\ & \quad
      -
      \biggl( 4 uu_x - 4 \sum_{k=1}^N \bigl( s_k \, u(x_k) + m_k \, u_x(x_k) \bigr) \, \delta_{x_k}
      \\ & \qquad
      - 2 \sum_{k=1}^N \bigl( m_k \, u(x_k) + s_k \, u_x(x_k) \bigr) \, \delta'_{x_k}
      - 2 \sum_{k=1}^N s_k \, u(x_k) \, \delta''_{x_k}
      \biggr)
      \\ &
      =
      4 \sum_{k=1}^N \bigl( -s_k \, u(x_k) + m_k \, u_x(x_k) \bigr) \, \delta_{x_k}
      \\ & \quad
      + 2 \sum_{k=1}^N \bigl( m_k \, u(x_k) + s_k \, u_x(x_k) \bigr) \, \delta'_{x_k}
      + 2 \sum_{k=1}^N s_k \, u(x_k) \, \delta''_{x_k}
      .
    \end{split}
  \end{equation}
  (Notice that the regular parts cancel out, as predicted.)

  Indices~$k$ for which $m_k = s_k = 0$ give no contribution
  to the sums here.
  Let $\mathcal{K} \subset \{ 1,2,\dots,N \}$
  be the set of the remaining indices (those that do contribute);
  then we can replace $\sum_{k=1}^N$ with $\sum_{k \in \mathcal{K}}$ above.
  Because of the non-overlapping assumption,
  $v$ is smooth near $x_k$ for $k \in \mathcal{K}$,
  so the values $v(x_k)$, $v_x(x_k)$ and $v_{xx}(x_k) = v(x_k)$ exist
  for $k \in \mathcal{K}$
  (and they coincide with the averages $\avg{v}(x_k)$ and $\avg{v_x}(x_k)$,
  so there is no conflicting notation).
  Thus, multiplying \eqref{eq:DP-operator} by $v(x,t)$ according to the
  rules
  \begin{equation*}
    \begin{aligned}
      v(x) \, \delta_a &= v(a) \, \delta_a
      ,\\
      v(x) \, \delta'_a &= v(a) \, \delta'_a - v'(a) \, \delta_a
      ,\\
      v(x) \, \delta''_a &= v(a) \, \delta''_a - 2 v'(a) \, \delta'_a +  v''(a) \, \delta_a
      ,
    \end{aligned}
  \end{equation*}
  we obtain
  \begin{equation}
    \label{eq:v-times-DP-operator}
    \begin{split}
      v \cdot (4-D_x^2) D_x (\tfrac12 u^2)
      &
      =
      4 \sum_{k \in \mathcal{K}} \bigl( -s_k \, u(x_k) + m_k \, u_x(x_k) \bigr) \, v(x_k) \, \delta_{x_k}
      \\ & \quad
      + 2 \sum_{k \in \mathcal{K}} \bigl( m_k \, u(x_k) + s_k \, u_x(x_k) \bigr) \,
      \bigl( v(x_k) \, \delta'_{x_k} - v_x(x_k) \, \delta_{x_k} \bigr)
      \\ & \quad
      + 2 \sum_{k \in \mathcal{K}} s_k \, u(x_k) \,
      \bigl( v(x_k) \, \delta''_{x_k} - 2 v_x(x_k) \, \delta'_{x_k} + v_{xx}(x_k) \, \delta_{x_k} \bigr)
      \\ &
      =
      2 \sum_{k \in \mathcal{K}} \Bigl( \bigl( -s_k \, u(x_k) + 2m_k \, u_x(x_k) \bigr) \, v(x_k)
      \\ & \qquad\qquad
      - \bigl( m_k \, u(x_k) + s_k \, u_x(x_k) \bigr) \, v_x(x_k) \Bigr) \, \delta_{x_k}
      \\ & \quad
      + 2 \sum_{k \in \mathcal{K}} \Bigl( \bigl( m_k \, u(x_k) + s_k \, u_x(x_k) \bigr) \, v(x_k)
      \\ & \qquad\qquad
      - 2 s_k \, u(x_k) \, v_x(x_k) \Bigr) \, \delta'_{x_k}
      \\ & \quad
      + 2 \sum_{k \in \mathcal{K}} s_k \, u(x_k) \, v(x_k) \, \delta''_{x_k}
      .
    \end{split}
  \end{equation}
  The requirement of the distributional Geng--Xue equation~\eqref{eq:GX-distributional}
  is that this should equal $-D_t m$, where $m = u - D_x^2 u$.
  From
  \begin{equation*}
    D_x u = u_x + \sum_{k \in \mathcal{K}} \jump{u}(x_k) \, \delta_{x_k} 
    = u_x - 2 \sum_{k \in \mathcal{K}} s_k \, \delta_{x_k}
  \end{equation*}
  and
  \begin{equation*}
    \begin{split}
      D_x^2 u
      &
      = u_{xx} + \sum_{k \in \mathcal{K}} \jump{u_x}(x_k) \, \delta_{x_k}
      - 2 \sum_{k \in \mathcal{K}} s_k \, \delta'_{x_k}
      \\ &
      = u 
      - 2 \sum_{k \in \mathcal{K}} m_k \, \delta_{x_k}
      - 2 \sum_{k \in \mathcal{K}} s_k \, \delta'_{x_k}
    \end{split}
  \end{equation*}
  we get
  \begin{equation*}
    m =
    2 \sum_{k \in \mathcal{K}} m_k \, \delta_{x_k}
    + 2 \sum_{k \in \mathcal{K}} s_k \, \delta'_{x_k}
    ,
  \end{equation*}
  and thus
  \begin{equation}
    \label{eq:m-t}
    -D_t m =
    - 2 \sum_{k \in \mathcal{K}} \dot m_k \, \delta_{x_k}
    - 2 \sum_{k \in \mathcal{K}} \dot s_k \, \delta'_{x_k}
    + 2 \sum_{k \in \mathcal{K}} m_k \, \dot x_k \, \delta'_{x_k}
    + 2 \sum_{k \in \mathcal{K}} s_k \, \dot x_k \, \delta''_{x_k}
    .
  \end{equation}
  Identifying coefficients in \eqref{eq:v-times-DP-operator} and~\eqref{eq:m-t},
  we find, for $k \in \mathcal{K}$,
  \begin{equation*}
    \begin{aligned}
      -\dot m_k
      &
      =
      \bigl( - s_k \, u(x_k) + 2m_k \, u_x(x_k) \bigr) \, v(x_k)
      - \bigl( m_k \, u(x_k) + s_k \, u_x(x_k) \bigr) \, v_x(x_k)
      ,\\
      - \dot s_k + m_k \, \dot x_k
      &
      =
      \bigl( m_k \, u(x_k) + s_k \, u_x(x_k) \bigr) \, v(x_k)
      - 2 s_k \, u(x_k) \, v_x(x_k)
      ,\\
      s_k \, \dot x_k
      &
      =
      s_k \, u(x_k) \, v(x_k)
      .
    \end{aligned}
  \end{equation*}
  If $s_k \neq 0$, the third equation implies that
  $\dot x_k = u(x_k) \, v(x_k)$,
  and then the second equation reduces to
  $-\dot s_k = s_k \bigl( u_x(x_k) \, v(x_k) - 2 u(x_k) \, v_x(x_k) \bigr)$.
  If $s_k = 0$, then $m_k \neq 0$ (since $k \in \mathcal{K}$),
  and the second equation shows that
  $\dot x_k = u(x_k) \, v(x_k)$.
  So in either case, we can simplify the equations to
  \begin{equation}
    \begin{aligned}
      \dot x_k
      &
      =
      u(x_k) \, v(x_k)
      ,\\
      \dot m_k
      &
      =
      \bigl( s_k \, u(x_k) - 2m_k \, u_x(x_k) \bigr) \, v(x_k)
      + \bigl( m_k \, u(x_k) + s_k \, u_x(x_k) \bigr) \, v_x(x_k)
      \\ &
      =
      m_k \bigl( u(x_k) \, v_x(x_k)  - 2 u_x(x_k) \, v(x_k) \bigr)
      + s_k \bigl( u(x_k) \, v(x_k) + u_x(x_k) \, v_x(x_k) \bigr) 
      ,\\
      \dot s_k
      &
      =
      s_k \bigl( 2 u(x_k) \, v_x(x_k) - u_x(x_k) \, v(x_k) \bigr)
      ,
    \end{aligned}
  \end{equation}
  for $k \in \mathcal{K}$,
  in agreement with the claimed shockpeakon ODEs~\eqref{eq:GX-shockpeakon-ode}.
  (And for $k \notin \mathcal{K}$, we can include the same equations
  for $m_k$ and $s_k$ if we like, since they are consistent with $m_k=s_k=0$.)

  By symmetry (simply interchanging the roles of $u$ and~$v$),
  we immediately obtain the corresponding equations
  for $x_k$, $n_k$, $r_k$ with $k \notin K$.
\end{proof}

\begin{remark}
  Using \eqref{eq:u-ux-jump},
  one can write the shockpeakon ODEs~\eqref{eq:GX-shockpeakon-ode}
  for $k \in \mathcal{K}$ as
  \begin{equation}
    \begin{aligned}
    \tfrac{d}{dt} x_k &= \avg{u}v 
    ,\\
    \tfrac{d}{dt} {[u]} &= [u] \, \Bigl( 2 \avg{u}v_x - \avg{u_x} v \Bigr)
    ,\\
    \tfrac{d}{dt}{[u_x]} &=
    [u] \, \Bigl( \avg{u} v + \avg{u_x} v_x \Bigr)
    +[u_x] \, \Bigl( \avg{u} v_x - 2 \avg{u_x} v \Bigr)
    ,
    \end{aligned}
  \end{equation}
  where all evaluations of jumps and averages are carried out at~$x_k$,
  and similarly for $k \notin \mathcal{K}$
  but with $u$ and~$v$ interchanged.
  For comparison, the Degasperis--Procesi shockpeakon ODEs are
  \begin{equation}
    \begin{aligned}
      \tfrac{d}{dt} x_k &= \avg{u}
      ,\\
      \tfrac{d}{dt} {[u]} &= -[u] \avg{u_x}
      ,\\
      \tfrac{d}{dt} [u_x] &= 2 [u] \avg{u} - 2 [u_x] \avg{u_x}
      .
    \end{aligned}
  \end{equation}
\end{remark}

\subsection{Verification of the Lax pair for peakon solutions}

The peakon solutions considered in this article are of the form
\begin{equation}
  \label{eq:weak-uv}
  u(x,t) = \sum_{k=1}^K m_k(t) \, e^{-\abs{x-x_k(t)}}
  , \qquad
  v(x,t) = \sum_{k=1}^K n_k(t) \, e^{-\abs{x-y_k(t)}}
  . 
\end{equation}
(Here we have found it convenient to change the notation from the main text,
and use $x_k$ and~$y_k$ instead of $x_{2k-1}$ and~$x_{2k}$,
and $m_k$ and~$n_k$ instead of $m_{2k-1}$ and~$n_{2k}$.)

The functions $u$ and~$v$ both belong to the piecewise smooth class~$\piecewiseclass$.
They are continuous and satisfy
\begin{equation*}
  \begin{split}
    D_x u &= u_x = \sum_{k=1}^K m_k \, \sgn(x_k-x) \, e^{-\abs{x-x_k}},\\
    D_x^2 u &= D_x(u_x) = u_{xx} + \sum_{k=1}^K \jump{u_x}(x_k) \, \delta_{x_k}
    = u + \sum_{k=1}^K (-2m_k) \, \delta_{x_k},
  \end{split}
\end{equation*}
with analogous formulas for $v$.  
These formulas imply
\begin{equation}
  \label{eq:weak-mn}
  m = u - D_x^2 u = 2 \sum_{k=1}^K m_k \, \delta_{x_k}
  , \qquad
  n = v- D_x^2 v= 2 \sum_{k=1}^K n_k \, \delta_{y_k}
  .
\end{equation}
The Lax pair \eqref{eq:laxI-x}--\eqref{eq:laxI-t} will
involve the functions $u$, $v$, $u_x$, $v_x$, as well as the purely
singular distributions $m$ and~$n$.
We will take $\psi_1$, $\psi_2$, $\psi_3$ to be functions in $\piecewiseclass$,
and separate the regular (function) part from the singular
(Dirac delta) part.
Writing $\Psi = (\psi_1,\psi_2,\psi_3)^t$,
the formulation obtained in this way reads
\begin{equation}
  \label{eq:weak-lax}
  D_x \Psi = \widehat{L} \Psi,
  \qquad
  D_t \Psi = \widehat{A} \Psi,
\end{equation}
where
\begin{equation}
  \label{eq:weak-lax-x}
  \begin{gathered}
    \widehat{L} =
    L + 2z \left( \sum_{k=1}^K n_k \, \delta_{y_k} \right) E_{12}+
    2z\left( \sum_{k=1}^K m_k \, \delta_{x_k} \right) E_{23}
    ,\\[1ex]
    L =
    \begin{pmatrix}
      0 & 0 & 1 \\
      0 & 0 & 0 \\
      1 & 0 & 0
    \end{pmatrix},
    \quad
    E_{12}=
    \begin{pmatrix}
      0 & 1 & 0 \\
      0 & 0 & 0 \\
      0 & 0 & 0
    \end{pmatrix},
    \quad 
    E_{23}=
    \begin{pmatrix}
      0 & 0 & 0 \\
      0 & 0 & 1 \\
      0 & 0 & 0
    \end{pmatrix}
    ,
  \end{gathered}
\end{equation}
and
\begin{equation}
  \label{eq:weak-lax-t}
  \begin{gathered}
    \widehat{A} =
    A - 2z \left( \sum_{k=1}^K n_k \, u(y_k) \, v(y_k) \, \delta_{y_k} \right) E_{12}-2z \left( \sum_{k=1}^K m_k \, u(x_k) \, v(x_k) \, \delta_{x_k} \right) E_{23}
    ,\\[1ex]
    A =
    \begin{pmatrix}
      -v_xu & v_x/z & v_xu_x\\
      u /z & v_xu-vu_x-1/z^2 & -u_x/z \\
      -uv& v/z & vu_x
    \end{pmatrix}
    .
  \end{gathered}
\end{equation}
First we make a few general comments.
Note that \eqref{eq:weak-lax} involves multiplying
$(\psi_2,0,0)^t$ by~$\delta_{y_k}$
and multiplying $(0,\psi_3,0)^t$ by~$\delta_{x_k}$.
In the second case there is no problem, since the function
$x \mapsto \psi_3(x,t)$ will automatically be continuous
according the the third component of the vector equation
$D_x \Psi = \widehat{L} \Psi$.
But in the first case, the function
$x \mapsto \psi_2(x,t)$
may have a jump at $x=y_k$, so some value $\psi_2(y_k)$ must be assigned
in order for this operation to be well-defined,
and this assignment must be consistent with $D_x D_t \Psi = D_t D_x \Psi$.
However, this is only a problem if one tries to consider the general case of
overlapping supports.
In this article, the supports of $m$ and~$n$ are assumed to be disjoint, and
as a result $\psi_2$ is continuous at the points of support of $n$,
and no other assignment is needed.

\begin{theorem}
  \label{thm:distributionalLaxpair}
  Let $u$, $v$, $m$, $n$ be given 
  by \eqref{eq:weak-uv}--\eqref{eq:weak-mn} and assume 
  that the supports of $m$ and~$n$ are disjoint.
  Then \eqref{eq:weak-lax-x}--\eqref{eq:weak-lax-t}
  form a weak Lax pair whose compatibility condition is given by the peakon ODEs
  \begin{align*}
    \dot x_k &= u(x_k) \, v(x_k)
    , &
    \dot m_k &= m_k \, \Bigl( u(x_k) \, v_x(x_k) - 2 \avg{u_x}(x_k) \, v(x_k) \Bigr)
    , \\
    \dot y_k &= u(y_k) \, v(y_k)
    , &
    \dot n_k &= n_k \, \Bigl( v(y_k) \, u_x(y_k) - 2 \avg{v_x}(y_k) \, u(y_k) \Bigr)
    .
  \end{align*}
\end{theorem}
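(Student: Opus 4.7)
The approach is to expand the distributional Lax compatibility $\widehat{L}_t - \widehat{A}_x + [\widehat{L}, \widehat{A}] = 0$ as an equality of matrix-valued distributions, and match regular and singular parts separately. All products and derivatives are interpreted in the sense of \autoref{sec:notation-distributions}. Crucially, the non-overlapping assumption ensures that $v$ and $v_x$ are classically smooth in a neighbourhood of every $x_k$, while $u$ and $u_x$ are smooth near every $y_k$, so each product of a Dirac delta with an entry of $\widehat{A}$ is well-defined via the rules \eqref{eq:delta-multiplication-rules}.

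First I would verify the regular (non-singular) part. On any open strip between consecutive particle locations, both $u$ and $v$ are linear combinations of $e^{\pm x}$, so $u_{xx} = u$ and $v_{xx} = v$, whence $m = n = 0$ and $\widehat{L}, \widehat{A}$ reduce to the classical Lax matrices $L, A$. By the standard calculation that recovers the Geng--Xue equation as the smooth compatibility (see \cite{geng-xue:cubic-nonlinearity}), the regular part of $\widehat{L}_t - \widehat{A}_x + [\widehat{L}, \widehat{A}]$ reduces to expressions proportional to $m_t + (m_x u + 3 m u_x) v$ and $n_t + (n_x v + 3 n v_x) u$, which vanish trivially on these strips.

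The core work is then to isolate, for each~$k$, the coefficients of $\delta_{x_k}$ and $\delta'_{x_k}$ in the compatibility relation. I would expand $D_t \widehat{L}$ using \eqref{eq:Dt-jump} applied to the moving Dirac delta $\delta_{x_k(t)}$, which produces a $\dot m_k \, \delta_{x_k}$ term plus a $-\dot x_k \, m_k$ contribution to the coefficient of $\delta'_{x_k}$; then expand $D_x \widehat{A}$ using the jumps of $u_x$ across $x_k$; and finally compute the commutator $[\widehat{L}, \widehat{A}]$ using the delta-multiplication rules \eqref{eq:delta-multiplication-rules}. Because $v$ is smooth at $x_k$, each occurrence of $v$ or $v_x$ paired with $\delta_{x_k}$ or $\delta'_{x_k}$ collapses to $v(x_k)$ or $v_x(x_k)$; $u$ is continuous at $x_k$ but $u_x$ is not, so every $u_x \cdot \delta_{x_k}$ product yields the symmetric average $\avg{u_x}(x_k)$ by the jump-average product rules \eqref{eq:jump-avg-products}. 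Equating coefficients of $\delta'_{x_k}$ in each matrix entry produces $\dot x_k = u(x_k) \, v(x_k)$, and equating coefficients of $\delta_{x_k}$ gives exactly $\dot m_k = m_k \bigl( u(x_k) \, v_x(x_k) - 2 \avg{u_x}(x_k) \, v(x_k) \bigr)$; a fully symmetric computation at each $y_k$ delivers the ODEs for $\dot y_k$ and $\dot n_k$.

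The main obstacle will be the matrix bookkeeping: the $3 \times 3$ structure of $\widehat{L}$ and $\widehat{A}$ means that different entries contribute to the coefficient of $\delta_{x_k}$ (or $\delta'_{x_k}$) in different matrix positions, and one must verify that these relations are mutually consistent and collapse to the single scalar ODEs claimed. A secondary delicate point is the cancellation of all potential $\delta''_{x_k}$ contributions --- which would otherwise require the ill-defined value $u_x(x_k)^2$ --- and this cancellation relies on the precise form of the singular corrections $-2z n_k u(y_k) v(y_k) \delta_{y_k} E_{12}$ and $-2z m_k u(x_k) v(x_k) \delta_{x_k} E_{23}$ in \eqref{eq:weak-lax-t}, whose role is exactly to kill such terms.
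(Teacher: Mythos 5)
Your proposal takes a genuinely different route from the paper: you work with the zero-curvature matrix equation $D_t\widehat{L} - D_x\widehat{A} + [\widehat{L},\widehat{A}] = 0$, whereas the paper never forms this equation. Instead it computes $D_t(D_x\Psi)$ and $D_x(D_t\Psi)$ directly on the wavefunction, localizes near one support point $y_k$, and matches the coefficients of $\delta'_{y_k}$ (giving $\dot y_k = u(y_k)v(y_k)$) and of $\delta_{y_k}$ (giving the $\dot n_k$ equation), using the jump/average product rules \eqref{eq:jump-avg-products} on the products $A\Psi$. The reason the paper does this is precisely the point where your argument has a gap: the commutator $[\widehat{L},\widehat{A}]$ contains products such as $m_k\,\delta_{x_k}E_{23}\cdot A$ and $A\cdot m_k\,\delta_{x_k}E_{23}$, in which $\delta_{x_k}$ multiplies entries of $A$ (namely $v_xu_x$, $-u_x/z$, $vu_x$, $v_xu-vu_x-1/z^2$) that are \emph{discontinuous at} $x_k$. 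These products are not defined by the rules \eqref{eq:delta-multiplication-rules}, which apply only when the multiplying function is smooth near the support of the delta; the non-overlapping hypothesis makes $v$, $v_x$ smooth at $x_k$, but it does nothing for $u_x$ there. Declaring that every such product should be evaluated with the symmetric average $\avg{u_x}(x_k)$ is an additional ad hoc prescription, and it is exactly the kind of assignment the paper's Remark on overlapping peakons warns about. The prescription happens to yield the correct ODEs here, but to \emph{justify} it one has to return to the wavefunction formulation: in the paper's computation the only delta-times-function products that occur are $\delta_{y_k}\psi_2$ and $\delta_{x_k}\psi_3$ with the relevant component of $\Psi$ continuous, and the would-be ambiguous cross terms appear as $\tfrac14\jump{A}\jump{\Psi}$ corrections that vanish because $E_{12}\jump{A}E_{12}=0$ (nilpotency), not because of an averaging convention. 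Your proof as written silently assumes the conclusion of this consistency check.

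A secondary inaccuracy: no $\delta''$ terms arise anywhere in the Lax compatibility, in either formulation. The singular part of $\widehat{A}$ is a multiple of $\delta_{x_k}$ (resp.\ $\delta_{y_k}$), so $D_x\widehat{A}$ produces at worst $\delta'$; $D_t\widehat{L}$ produces $\delta$ and $\delta'$ via \eqref{eq:Dt-jump}; and the commutator produces only $\delta$ terms (the $\delta\cdot\delta$ products vanish by $E_{12}^2=E_{23}^2=0$ and disjointness of supports). The role of the singular corrections $-2z\,m_k\,u(x_k)v(x_k)\,\delta_{x_k}E_{23}$ in $\widehat{A}$ is therefore not to cancel $\delta''$ contributions but to supply the $\delta'_{x_k}$ term that, matched against the $-\dot x_k\,m_k\,\delta'_{x_k}$ term coming from differentiating the moving delta in $\widehat{L}$, forces $\dot x_k = u(x_k)v(x_k)$. (The $\delta''$ terms and the quantity $u_x(x_k)^2$ you mention do occur, but in the derivation of the shockpeakon ODEs from the PDE in \autoref{sec:proof-shockpeakon-odes}, which is a separate computation.)
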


\begin{proof}
  An essential simplification is to observe that we can localize our computations to 
  a vicinity of one of the points of support of the measures,
  by considering test functions which are zero except in a small neighbourhood of such a point.
  Since the supports of $m$ and $n$ are disjoint,
  we will only deal with one type of computation;
  either involving $E_{21}$ (for~$n$) or $E_{23}$ (for~$m$).
  Let us localize our computation around~$y_k$.
  Then we can omit the summation, as well as completely ignore the
  contribution coming from~$m$ but not from~$u$.
  (Some of the equalities below are a slight abuse of notation, which should be understood in the light of this remark.)

  We observe that
  $\psi_3$ is continuous at all the points of supports of both
  measures; in particular $\jump{\psi_3}(y_k)=0$. Likewise,
  $\jump{\psi_2}(y_k)=0$, even though $\jump{\psi_2}(x_k)=2z \, m_k \, \psi_3(x_k)$.
  Clearly, $\psi_1$ is not defined at $y_k$. Even though this will not
  impact the computation, we will set $\Psi(y_k)=\avg{\Psi}(y_k)$ for
  the duration of the computation.

  Next, we compute the derivatives of \eqref{eq:weak-lax}:
  \begin{equation*}
    \begin{split}
      D_t(D_x \Psi) &
      = D_t(L \Psi + 2z \, n_k \, \delta_{y_k} E_{12} \Psi)
      \\ &
      = L (\widehat{A} \Psi) + 2z \, E_{12} \tfrac{d}{dt} \bigl( n_k \Psi(y_k) \bigr) \, \delta_{y_k} - 2z \, E_{12} n_k \Psi(y_k) \dot y_k \delta'_{y_k}
      , \\
      D_x(D_t \Psi) &
      = D_x(A\Psi - 2z \, n_k \, u(y_k) \, v(y_k) \, E_{12} \Psi(y_k) \, \delta_{y_k})
      \\ &
      = (A\Psi)_x + \jump{A\Psi}(y_k) \delta_{y_k} - 2z \, n_k \, E_{12} \Psi(y_k) \, u(y_k) \, v(y_k) \, \delta'_{y_k}
      .
    \end{split}
  \end{equation*}
  The regular part of \eqref{eq:weak-lax} gives $\Psi_x = L\Psi$,
  so that $(A\Psi)_x=A_x \Psi + AL\Psi$, and it is easily verified
  that $LA=A_x+AL$ holds identically (since $u_{xx}=u$).
  This implies that the regular parts of the two expressions above are equal,
  and the terms involving $\delta'_{y_k}$ are also equal provided 
  $\dot y_k = u(y_k) \, v(y_k)$.
  Therefore the compatibility condition
  $D_t(D_x \Psi) = D_x(D_t \Psi)$
  reduces to an equality between the coefficients of~$\delta_{y_k}$,
  \begin{equation}
    \label{eq:gazonk}
    -2z \, n_k \, u(y_k) \, v(y_k) \, LE_{12} \Psi(y_k)
    + 2z \, E_{12} \tfrac{d}{dt} \bigl( n_k \Psi(y_k) \bigr)
    = \jump{A\Psi}(y_k).
  \end{equation}
  Using the product rule \eqref{eq:jump-avg-products},
  $\jump{\Psi}(y_k)=2z \, n_k \, E_{12}\Psi(y_k)$ and $\jump{v_x}(y_k) = -2n_k$,
  we find that the right-hand side of \eqref{eq:gazonk} equals
  \begin{multline}
    \avg{A}(y_k) \, 2z \, n_k \, E_{12}\Psi(y_k) + \jump{A}(y_k) \, \avg{\Psi}(y_k)
    =
    \\
    2z\,n_k
    \begin{pmatrix}
      0 & -u \avg{v_x} & 0\\
      0 & u/z & 0 \\
      0 & -uv & 0
    \end{pmatrix}_{\!\! y_k}
    \!\!\! \Psi(y_k)
    +
    2 n_k
    \begin{pmatrix}
      u & -1/z & -u_x \\
      0 & -u & 0\\
      0 & 0 & 0
    \end{pmatrix}_{\!\! y_k}
    \!\!\! \Psi(y_k).
  \end{multline}
  The (3,2) entry $-uv$ in the matrix in the first term
  will cancel against the whole first term on the left-hand side of
  \eqref{eq:gazonk}, since the only nonzero entry of $LE_{12} $ is $(LE_{21})_{32}=1$.
  Likewise, the (2,2) entries sum up to $0$.  
  Thus \eqref{eq:gazonk} is equivalent to
  \begin{equation}
    \label{eq:gazonk2}
    \dot n_k \, E_{12} \Psi(y_k) + n_k \, E_{12} \tfrac{d}{dt} \Psi(y_k)
    =
    n_k
    \begin{pmatrix}
      u/z & -u\avg{v_x}-1/z^2 & -u_x/z \\
      0 & 0 & 0\\
      0 & 0 &0
      \end{pmatrix}_{\!\! y_k}
    \!\!\! \Psi(y_k).
  \end{equation}
  Computing $\frac{d}{dt}\Psi(y_k)$ using \eqref{eq:chainrule-jump-avg},
  $\Psi_x=L\Psi$ and $\Psi_t=A\Psi$, we obtain
  \begin{equation*}
    \begin{split}
      E_{12}\tfrac{d}{dt} \Psi(y_k) &=
      E_{12} \avg{L\Psi}(y_k) \, \dot y_k + E_{12} \avg{A\Psi}(y_k)
      \\
      &= E_{12} \Bigl( L u(y_k)v(y_k)+ \avg{A}(y_k) \Bigr) \, \Psi(y_k) + E_{12} \tfrac14 \jump{A}(y_k) \, \jump{\Psi}(y_k)
      \\
      &=
      \begin{pmatrix}
        u/z & \avg{v_x}u-u_xv-1/z^2 & -u_x/z \\
        0 & 0 & 0 \\
        0 & 0 & 0
      \end{pmatrix}_{\!\! y_k}
      \Psi(y_k)
      \\ & \quad
      + \tfrac14 \underbrace{E_{12} \jump{A(y_k)}E_{12}}_{=0} \, 2z\,n_k \Psi(y_k).
    \end{split}
  \end{equation*}
  After cancelling common terms in \eqref{eq:gazonk2} we arrive at 
   \begin{equation*}
     \Bigl( \dot n_k + n_k \bigl( 2u(y_k) \avg{v_x}(y_k)-u_x(y_k)v(y_k) \bigr) \Bigr) \Psi_2(y_k)
     = 0
     ,
  \end{equation*}
  which gives the claimed equation for~$\dot n_k$.

  To prove the statement for $x_k$ and~$m_k$ one can either repeat an analogous computation 
  for the coefficient of $\delta_{x_k}$, involving $E_{23}$ rather than $E_{12}$,
  or simply use the twin Lax pair which immediately produces the result via
  the symmetry $u \leftrightarrow v$, $m \leftrightarrow n$, $y_k \leftrightarrow x_k$.
\end{proof}

\phantomsection
\addcontentsline{toc}{section}{Acknowledgements}
\section*{Acknowledgements}

Hans Lundmark was supported in part by the Swedish Research Council (Veten\-skaps\-r{\aa}det, grant 2010-5822),
and Jacek Szmigielski by the Natural Sciences and Engineering Research Council of Canada (NSERC).

\bibliographystyle{abbrv-bibulous}
\bibliography{GX-interlacing-peakons}

\end{document}